\numberwithin{equation}{section}
\theoremstyle{plain}
\newtheorem{theorem}{Theorem}[section]
\newtheorem{lemma}[theorem]{Lemma}
\newtheorem{assumption}{Assumption}
\newtheorem{remark}[theorem]{Remark}
\newtheorem{cor}[theorem]{Corollary}
\newtheorem{prop}[theorem]{Proposition}
\newtheorem{defi}[theorem]{Definition}
\newcommand{\pg}[2]{g_{\xi_{#1}}(#2)}
\def\myi{i}
\def\ci{(\myi-1)h}
\def\ni{\myi h}
\def\nd{d}
\def\d{{\rm d}}
\def\lp{\left(}
\def\rp{\right)}
\def\lsb{\left[}
\def\rsb{\right]}
\def\lcb{\left\{}
\def\rcb{\right\}}
\def\mE{{\mathbb E}}
\def\Un{\mathrm {Un}}
\def\mR{{\mathbb R}}
\def\mZ{{\mathbb Z}}
\def\mP{{\mathbb P}}
\def\mN{{\mathbb N}}
\def\one{{\bf 1}}
\def\trace{{\rm tr}}
\def\ESS{{\rm ESS}}
\def\sign{{\rm sign}}
\def\ind{{\bf 1}}
\def\CIS{{\rm CIS}}
\newcommand{\bb}{b}
\newcommand{\bsigma}{\sigma}
\newcommand{\B}{B}
\def\dc{\sigma}
\def\dm{\gamma}
\def\gd{{\cal N}}
\def\one{{\bf 1}}
\def\pin{\delta}
\def\pex{\alpha}
\def\th{T}
\def\artime{\tau}
\newcommand{\iat}[1]{\artime_{#1}}
\newcommand{\te}[1]{n_{#1}}
\newcommand{\lat}[1]{\artime_{#1}}
\newcommand{\dt}[1]{\Delta \artime_{#1}}
\def\algw{w}
\def\rvw{W}
\newcommand{\w}[1]{\algw_{\iat{#1}}}
\newcommand{\ww}[1]{\algw_{#1}}
\newcommand{\WW}[1]{\rvw_{#1}}
\def\algx{x}
\def\rvx{X}
\def\bx{{\bar \algx}}
\newcommand{\x}[1]{\algx_{\iat{#1}}}
\newcommand{\xx}[1]{\algx_{#1}}
\newcommand{\X}[1]{\rvx_{\iat{#1}}}
\newcommand{\XX}[1]{\rvx_{#1}}
\newcommand{\iw}[1]{H_{#1}}
\def\ck{k} \def\istart{0} \def\kspace{\mN} \def\kstart{0}
\def\nk{k+1} \def\iend{\te{\th}-1}
\newcommand{\prt}[3]{#1_{#2}^{(#3)}}
\def\algsis{1}
\def\algrwsis{2}
\def\algrbsis{3}
\def\algcis{4}
\def\algcc{5}
\def\algcisra{6}
\def\algcisrb{7}
\newcommand{\y}{y}
\def\bder{\bb_{(1)}}
\def\dmder{\dm_{(1)}}
\def\dmdder{\dm_{(2)}}
\newcommand{\ecis}{\mathbb{E}_{\textup{\texttt{CIS}}}}
\newcommand{\trd}[3]{p(#1,#2,#3)}
\newcommand{\wtrd}[3]{p^{(1)}(#1,#2,#3)}
\newcommand{\ttrd}[3]{{\tilde p}(#1,#2,#3)}
\newcommand{\prd}[4]{q_{#4}(#1,#2,#3)}
\def\tro{{\mathcal K}}
\newcommand{\pro}{{\mathcal K}_{\theta}}
\newcommand{\prot}[1]{{\mathcal K}_{#1}}
\newcommand{\rw}[5]{r_{#5}(#1, #2, #3, #4)}
\newcommand{\mrwh}[5]{{\bar r}_{#5}(#1, #2, #3, #4)}
\newcommand{\mrw}[4]{\rho_{#4}(#1, #2, #3)}
\newcommand{\mrwcc}[4]{\rho_{\theta_{#1}}(#2, #3, #4)}
\newcommand{\prdcc}[4]{q_{\theta_{#1}}(#2,#3,#4)}
\newcommand{\prdccn}[4]{q_{#1}(#2,#3,#4)}
\def\qder{\Lambda}
\newcommand{\qderf}[4]{\Lambda_{#1}(#2, #3, #4)}
\newcommand{\qderft}[4]{\Lambda^{T}_{#1}(#2, #3, #4)}
\def\qdder{K}
\newcommand{\qdderf}[4]{K_{#1}(#2, #3, #4)}
\begin{document}

\begin{frontmatter}

\title{Continuous-time Importance Sampling: Monte Carlo Methods which Avoid Time-Discretisation Error}
\runtitle{Continuous-time Importance Sampling}

\begin{aug}
\author{\fnms{Paul} \snm{Fearnhead}\corref{}\thanksref{m1}\ead[label=e1]{p.fearnhead@lancaster.ac.uk}},
\author{\fnms{Krzystof} \snm{\L{}atuszy\'nski}\thanksref{m2}\ead[label=e2]{k.g.latuszynski@warwick.ac.uk}},
\author{\fnms{Gareth O.} \snm{Roberts}\thanksref{m2}\ead[label=e3]{gareth.o.roberts@warwick.ac.uk}}
\and
\author{\fnms{Giorgos} \snm{Sermaidis}\thanksref{m1}\ead[label=e4]{g.sermaidis@lancaster.ac.uk}}
\address{\printead{e1}}

\runauthor{P. Fearnhead et al.}

\affiliation{Lancaster University\thanksmark{m1} and University of Warwick\thanksmark{m2}}

\address{P. Fearnhead\\
Department of Mathematics and Statistics\\
Fylde College\\
Lancaster University\\
Lancaster LA1 4YF\\
United Kingdom\\
\printead{e1}}

\address{Krzystof \L{}atuszy\'nski\\
Department of Statistics\\
University of Warwick\\
Coventry CV4 7AL\\
United Kingdom\\
\printead{e2}}

\address{G. O. Roberts\\
Department of Statistics\\
University of Warwick\\
Coventry CV4 7AL\\
United Kingdom\\
\printead{e3}}

\address{G. Sermaidis\\
Department of Mathematics and Statistics\\
Fylde College\\
Lancaster University\\
Lancaster LA1 4YF\\
United Kingdom\\
\printead{e4}}

\end{aug}

\begin{abstract}
\end{abstract}


\begin{keyword}
\kwd{forward equation}
\kwd{sequential importance sampling}
\kwd{stochastic differential equation}
\kwd{transition density}
\kwd{unbiased}
\end{keyword}

\end{frontmatter}

\tableofcontents

\section{Introduction}\label{S:Intro}

Continuous-time Markov processes (CTMPs) are used extensively for modelling
random processes which evolve continuously in time. They have found
numerous applications in many diverse scientific areas, such as
physics \cite{van1992stochastic, MR987631}, chemistry
\cite{van1992stochastic}, finance \cite{MR1680267, duffie2010dynamic,
  MR2042661}, insurance \cite{MR1680267} and
systems biology \cite{MR2222876}.  The goal of this paper is to introduce a novel algorithm for
simulating from CTMPs and develop
Monte Carlo (MC) estimators of expectations of functionals of CTMPs
which arise frequently in many practical applications.

A specific class of CTMPs which has motivated the work in this paper
is diffusion processes. A diffusion process $X\in\mR^\nd$ is defined
as the solution to a stochastic differential equation (SDE),
\begin{align} \label{eq:SDE} \d\XX{t} = \bb(\XX{t})\d
  t+\bsigma(\XX{t})\d\B_t,\quad \XX{0}=\xx{0},\, t\in[0,\th],
\end{align}
where $\B$ is a standard $\nd$-dimensional Brownian motion. The
functionals $\bb:\mR^\nd\to\mR^\nd$ and
$\bsigma:\mR^\nd\to\mR^{\nd\times\nd}$ are known as the drift and
diffusion coefficient respectively and model the mean and variance of
the infinitesimal increments of the process. We assume that the
functionals satisfy growth and Lipschitz conditions, so that
\eqref{eq:SDE} admits a unique weak non-explosive solution
\citep{oksendal}.



Simulation and inference for diffusion processes is a very challenging
task due to the intractability of the transitional dynamics of the
process. 
Most contemporary
approaches are based on approximating the diffusion by a discrete-time
Markov process. If we choose $M\in\mZ^{+}$ then we can partition a
time interval $[0,T]$ into $M$ subintervals of size  $h:=T/M$. 
The dynamics of the SDE can then be approximated by those of a discrete-time
Markov process, with
\begin{align}\label{eq:euler_scheme}
  \XX{t_{i+1}}=\XX{t_{i}} + h\bb(\XX{t_i}) +
  \sqrt{h}\bsigma(\XX{t_i})Z_{i+1},
\end{align}
where $t_i:=ih,\,i=0,1,\ldots,M$, and
$Z_i$s are independent standard Gaussian random variables. The above
scheme, known as the Euler scheme, provides a straightforward way for
simulating the target process and allows us to design MC estimators
using independent draws from~\eqref{eq:euler_scheme}. However, the
discrete approximation of the continuous-time model introduces a
systematic error which is eliminated only when $M\to\infty$. The value
of $M$ that guarantees a sufficiently accurate approximation is
typically unknown, and is found empirically by multiple runs of the MC
estimators over increasing values of $M$ until no significant change is
shown. Alternatively, $M$ is selected as large as the computational
resources allow, which may result in discretisations finer than
necessary. Both selection procedures induce a significant
computational burden and waste of resources. Furthermore, the
approximation error of such MC estimators is hard to quantify;
available results only provide upper bounds for the bias and these
are typically known only up to a constant.

More recently, an unbiased approach was introduced with the
development of the exact algorithm
(EA)\citep{besk:papa:robe:fear:2006,besk:papa:robe:2008}. The EA is a
rejection sampling algorithm on the diffusion path space which
returns, in principle, a continuous trajectory from \eqref{eq:SDE}. In
practice, the output is an exact draw from the finite-dimensional
distribution of a ``skeleton'' for the diffusion path. 
This skeleton can then be used to evaluate the diffusion path at any
time instance with no further reference to the target process. The EA
allows the design of MC estimators whose only source of error is due
to MC, and thus easier to quantify. In particular, for a fixed
computational cost $K$, the approximation error of the unbiased
methods is of order ${\cal O}(K^{-1/2})$, in contrast to the rate
${\cal O}(K^{-1/3})$ which is usually attained Euler based approaches
 (though see \cite{Giles} for some interesting improvements on this). 

Unfortunately, the
range of applicability of the EA is limited by two conditions. First,
the existence of a variance-stabilising transformation, known as the
Lamperti transform, after which the process has unit diffusion matrix,
and second, the drift of the transformed process has to be of gradient
form. The reason that these conditions are needed for the EA is that the EA is
based upon rejection and importance sampling methods that work on the path-space 
of an SDE. These require tractable proposal processes whose law for the
diffusion path is absolutely continuous with the SDE of interest. In general such
tractable proposal processes only exist if the SDE of interest can, for example,
be transformed to unit diffusion matrix by the Lamperti transform.
Both conditions for the EA are usually satisfied in a univariate setting,
but pose serious limitations for general multivariate processes, as
illustrated in \cite{sah:2008}.


In this paper we develop a continuous-time sequential importance
sampling (CIS) algorithm which eliminates time-discretisation errors
and provides online unbiased estimation for diffusions. Our work
removes the strong conditions imposed by the EA and thus extends
significantly the class of discretisation error-free MC methods for
diffusions. The reason that CIS can be applied more generally than EA
is that it no longer works on the path space of the SDE. Instead it uses 
proposal distributions for the transition density of the diffusion, and proposal
distributions that are absolutely continuous with respect to the true transition
density exist for general SDEs.

The CIS algorithm is built by modifying a standard
discrete-time sequential importance sampling (SIS) algorithm by
introducing updates at random times according to a Bernoulli process
and replacing intractable importance weights by unbiased estimators,
similarly to \cite{fear:papa:robe:2008}. The CIS algorithm is then
obtained by taking the continuous-time limit of the discrete SIS. The
key is that in the limit the Bernoulli process converges to a Poisson
process implying that updates occur only at a finite number of
points. Therefore, in principle, CIS evolves continuously in time but is
implemented using only finite computation. Furthermore, the sequential
nature of our approach allows us to boost the algorithmic performance
of CIS. By using batch implementations and standard resampling
techniques we develop a sequential importance resampling algorithm and
demonstrate empirically that it can lead to significantly more
accurate estimates than the plain version of CIS when the time horizon~$T$ is large.

Whilst motivated for sampling from SDEs, the CIS can be applied to other CTMPs. 
One important application is to use CIS to simulate from CTMPs that have been designed to have a particular stationary distribution of interest. The results in this paper provide the underpinning theoretical justification for some of the extensions of the recently proposed SCALE algorithm of \cite{pollock2016scalable} that are suggested in \cite{fearnhead2016piecewise}.


The paper is structured as follows. The next section gives an intuitive derivation of the 
CIS algorithm in terms of a continuous-time limit of standard discete-time sequential importance sampling algorithms.
Section~\ref{sec:setup-cis}
then formally defines the generic version of the CIS algorithm along with the
resulting MC estimators. Unbiasedness and stability of the generic CIS is
addressed in Section~\ref{sec:validity-cis}. Sections~\ref{sec:CIS_diff} and~\ref{sec:validity-diffusions} provide the CIS version for diffusions
and detailed treatment
of its stability respectively. Section~\ref{sec:extensions-cis} deals
with boosting the algorithms performance and 
presents two modifications based on optimal proposals and batch
implementation with resampling. Section~\ref{sec:numericals} assesses the performance of
our MC estimators and compares it to several existing
methods. Section~\ref{sec:discussion} concludes with a discussion.









\subsection{Acknowledgements}

The work is partially funded by EPSRC grants EP/G028745/1 and EP/K014463/1. K{\L}
acknowledges funding form the Royal Society via the University
Research Fellowship scheme.  
We thank Andreas Eberle and Omiros Papaspiliopoulos for useful
comments at various stages of the project and Arturo Kohatsu-Higa
for an insightful discussion of his work \cite{bally2015probabilistic}.
\section{Intuitive derivation: CIS as a limiting algorithm} \label{S:deriv}

To help understand the idea behind CIS, we first give an intuitive derivation of the CIS
algorithm as the limiting algorithm of a standard discrete-time
sequential importance sampler. Our target process is a CTMP with transition density 
\[
\trd{x}{\y}{t} = \mathbb{P}(X_{s+t}\in\mbox{d}\y|X_s=x)/\mbox{d}\y~~\mbox{
  for $s,t>0$},
\]
with $X_0=x_0$. For the importance sampler we will use a family of proposal distributions that
approximate this transition density. Denote these by $\prd{x}{y}{t}{\theta}$, where $\theta$ is 
the parameter of the family of distributions. 

\subsection*{Standard Sequential Importance Sampling}

We first describe a standard discrete-time sequential importance
sampler \citep{kong:liu:1994}. Firstly discretise the time interval
$[0,\th]$ into intervals of length $h$. The sequential importance
sampler will simulate values of the process and importance weights at
times $h,2h,\ldots$. If at time $\ci$ we have a value and weight,
$\xx{\ci},\ww{\ci}$, then we first generate $\xx{\ni}$ from
$\prd{\xx{\ci}}{\cdot}{h}{\theta_{i-1}}$, for some suitably chosen $\theta_{i-1}$, and set
$$w_{\ni}=w_{\ci}\trd{\xx{\ci}}{\xx{\ni}}{h}/\prd{\xx{\ci}}{\xx{\ni}}{h}{\theta_{i-1}}.$$ 
Full details are given in Algorithm \algsis.

\begin{description}

\item[{\bf Algorithm \algsis}:] {\it Discrete-time Sequential Importance Sampler} (SIS).

\item[{\it Input}:] $\th$ a time to stop, $M$ a discretisation level and
  $\xx{0}$ the initial value.

\item[{\it Initialise}:] Set $w_0=1$ and $h=\th/M$.
  
\item[{\it Iterate}:] For $1\leq \myi\leq M$
    \begin{enumerate}
    \item choose $\theta_{i-1}$ and draw $\xx{\ni}\sim \prd{\xx{\ci}}{\cdot}{h}{\theta_{\myi-1}}$,
    \item update the weight
      \[
      w_{\ni}=w_{\ci}\frac{\trd{\xx{\ci}}{\xx{\ni}}{h}}{\prd{\xx{\ci}}{\xx{\ni}}{h}{\theta_{\myi-1}}}.
      \]
    \end{enumerate}
  
\end{description}

Let $X_{\myi h},W_{\myi h}$ denote the random variables of the value and
weight at time $\myi h$, respectively. Standard importance sampling
arguments show that whenever the left-hand side is defined, then
\begin{equation} \label{eq:2}
\mathbb{E}_p[f(X_{\myi h})]=\mathbb{E}_q[W_{\myi h}f(X_{\myi h})],
\end{equation}
where $\mathbb{E}_q$ denotes expectation of $X_{\myi h},W_{\myi h}$ with
respect to the sequential importance sampling process.

Unfortunately Algorithm {\algsis} cannot be implemented
since $\trd{x}{\y}{h}$ is unknown, hence the weights cannot be
updated at each iteration. To overcome this, the aim is to consider
the limit as $h\rightarrow 0$. This cannot be considered directly as such a limit would not make sense. So first we adapt
Algorithm {\algsis} using in turn ideas of random weight sequential importance
sampling \citep{fear:papa:robe:2008},
retrospective sampling \citep{papa:robe:2008} and
Rao-Blackwellisation \cite[]{liu:chen:1998}.

\subsection*{Random Weight sequential Importance Sampling}

Our first idea is to change how the weights are updated, and to make
this update random. This will still produce a valid importance
sampler, for which (\ref{eq:2}) is satisfied, provided that the mean
of the random weight update is equal to the true incremental weight.

We will implement our random weight update by introducing a series of
Bernoulli random variables, $U_{\myi h}$, for $\myi =1,2,\ldots$.  We call
successes of the Bernoulli random variables, events. We will allow the
distribution of $U_{\myi h}$ to depend on the time $\myi h$ and also the
history of the Bernoulli process, but for concreteness and notational
simplicity we will consider only the case where it depends on the time
since the most recent event has occurred. That is, if the most recent
event has happened at time $jh$, then we assume there exists a
function $\lambda(\cdot)>0$ such that with probability
$h\lambda(\ni-jh)$, $U_{\ni}=1$, otherwise $U_{\ni}=0$. It is trivial to
see that if
\begin{align*}
\rw{x}{\y}{h}{u}{\theta}:=1+\frac{1}{h\lambda(u)}\left\{\frac{\trd{x}{y}{h}}{\prd{x}{\y}{h}{\theta}}-1\right\},
\end{align*}
then
\begin{align*}
  \frac{\trd{\xx{\ci}}{\xx{\ni}}{h}}{\prd{\xx{\ci}}{\xx{\ni}}{h}{\theta_{\myi-1}}}=\mathbb{E}\lsb
    (1-U_{\ni})+U_{\ni}\rw{\xx{\ci}}{\xx{\ni}}{h}{(\myi-j)h}{\theta_{\myi-1}}\rsb.
\end{align*}
Thus, we can define an appropriate random incremental weight that
takes the value $1$ if $U_{\myi h}=0$ and
$\rw{x_{\ci}}{x_{\ni}}{h}{(\myi-j)h}{\theta_{\myi-1}}$ if $U_{\ni}=1$. The
resulting algorithm is shown in Algorithm {\algrwsis}.

\begin{description}

\item[{\bf Algorithm \algrwsis}:] {\it Random weight SIS}.

\item[{\it Input}:] $\th$ a time to stop, $M$ a discretisation level and $\xx{0}$ the initial value.

\item[{\it Initialise}:] Set $\ww{0}=1,\, h=\th/M$ and $j=0$.
  
\item[{\it Iterate}:] For $1\leq \myi\leq M$
  \begin{enumerate}
  \item choose $\theta_{\myi-1}$ and draw $\xx{\ni}\sim
    \prd{\xx{\ci}}{\cdot}{h}{\theta_{\myi-1}}$,
    
  \item draw $U_{\ni}$, a Bernoulli random variable with success
    probability $ h\lambda(\myi h-jh)$,
    
  \item if $U_{\ni}=1$, set
      \[
      \ww{\ni}=\ww{\ci}\rw{\xx{\ci}}{\xx{\ni}}{h}{(\myi-j)h}{\theta_{\myi-1}},
      \]
      and $j=\myi$; else set $w_{\ni}=w_{\ci}$.
  \end{enumerate}
  
\end{description}

\subsection*{Retrospective Sampling and Rao-Blackwellisation}

Algorithm {\algrwsis} cannot be implemented in practice, as the random
incremental weights still depend on the unknown transition density of
the proposal process. Furthermore, even if it could be implemented, it
would be less efficient than Algorithm {\algsis} because of the extra
randomness introduced by the random variables
$U_{\myi h},\,\myi=1,2,\ldots$. However its use is that it now enables us to
use the ideas of retrospective sampling and  Rao-Blackwellisation.

The idea behind retrospective sampling is to interchange the order of
simulating $\xx{\myi h}$ and $U_{\myi h}$, and to note that if
$U_{\myi h}=0$ then we do not need to simulate $\xx{\myi h}$ at all. So we
will only simulate $\xx{\myi h}$ at time points when there is an event. If
$jh$ is the time of the most recent event prior to $\myi h$, then
$\xx{\myi h}$ is simulated from the proposal transition density
$\prd{x_{jh}}{\cdot}{(\myi-j)h}{\theta_j}$. To update the weight at time
$\ni$ we also need to know the value of $\xx{\ci}$ and so this has
to be simulated, conditionally on $\xx{jh}$ and $\xx{\ni}$, from the
density
\begin{equation} \label{eq:xi-1}
q(\xx{\ci}\mid x_{jh},x_{\ni}):=\frac{\prd{\xx{jh}}{\xx{\ci}}{(\myi-j-1)h}{\theta_j}\prd{\xx{\ci}}{\xx{\ni}}{h}{\theta_j}}{\prd{\xx{jh}}{\xx{\ni}}{(\myi-j)h}{\theta_j}}.
\end{equation}
This can be viewed as introducing a random incremental weight. The
weight depends on the value of $\xx{\ci}$ that is simulated. The
idea behind Rao-Blackwellisation is that a more efficient algorithm
can be produced by replacing this random incremental weight by a
deterministic one that is equal to the mean of the random incremental
weight. We denote this mean by
\begin{align*} 
  \mrwh{\xx{jh}}{\xx{\ni}}{h}{(\myi-j)h}{\theta_j} = \int
  \rw{\xx{\ci}}{\xx{\ni}}{h}{(\myi-j)h}{\theta_j}q(\xx{\ci}\mid \xx{jh},\xx{\ni})
  \d \xx{\ci}.
\end{align*}
The resulting algorithm is given in detail in Algorithm
\algrbsis. If we stop the algorithm at a time point at which
there is no event, then the weight at that time is equal to the value
of the weight at the last event time, and the value of the process can
be simulated from the appropriate transition density of the proposal
process given the time and value at the last event time.

\begin{description}

\item[{\bf Algorithm \algrbsis}:] {\it SIS with Retrospective Sampling and Rao-Blackwellisation}.

\item[{\it Input}:] $\th$ a time to stop, $M$ a discretisation level and $\xx{0}$ the initial value.

\item[{\it Initialise}:] Set $w_0=1,\, h=\th/M,\,j=0$ and choose $\theta_j$.
  
\item[{\it Iterate}:] For $1\leq \myi\leq M$
  \begin{enumerate}
    
  \item draw $U_{\ni}$, a Bernoulli random variable with success
    probability $ h\lambda(\ni-jh)$,

  \item if $U_{\ni}=1$, then
    \begin{enumerate}
    \item draw $\xx{\ni}\sim
    \prd{\xx{jh}}{\cdot}{h}{\theta_j}$,
  \item update the weight
    \begin{align*}
      w_{\ni}=w_{\ci}\,\mrwh{\xx{jh}}{\xx{\ni}}{h}{(\myi-j)h}{\theta_j},
    \end{align*}
  \item set $j=\myi$ and choose $\theta_j$.
    \end{enumerate}
  \end{enumerate}
  
\end{description}

\subsection*{CIS as a limiting algorithm}

Whilst Algorithm {\algrbsis} cannot be implemented, we can now take
the limit as $h\rightarrow0$. In doing this the Bernoulli random
variables will converge to a renewal process with the rate of an event at
a time $\tau$ since the last event being $\lambda(\tau)$. We will thus
obtain a limiting algorithm where events are simulated from a renewal
process, the value of the process at these events are drawn from the
proposal transition density, and the weights are updated based on the
simulated value. The key thing is that in this limit the incremental
weight is tractable. If we fix $s=jh$ and $t=kh$ and let $h\rightarrow
0$, then
\begin{align*}
  &\lim_{h\rightarrow 0}
  \mrwh{\xx{s}}{\xx{t}}{h}{t-s}{\theta_s}=\\
&= 1 + \frac{1}{\lambda(t-s)} \lim_{h\rightarrow 0} \int
\frac{1}{h}\left\{\frac{\prd{x_{s}}{\y}{t-s-h}{\theta_s}[\trd{\y}{x_{t}}{h}-\prd{\y}{x_{t}}{h}{\theta_s}]}{\prd{x_{s}}{x_{t}}{t-s}{\theta_s}}\right\}\mbox{d}\y.
\end{align*}
However, by integrating the Kolmogorov's forward equation (c.f. (\ref{eq:PDE})) over a time interval of length
$h$, with an initial distribution for the process being $
\prd{x_{s}}{\y}{t-s-h}{\theta_s}$ we get:
\begin{align*} \int
  \prd{x_{s}}{\y}{t-s-h}{\theta_s}\trd{\y}{x_{t}}{h}\mbox{d}\y &=
  \prd{x_{s}}{x_t}{t-s-h}{\theta_s} \\
  &+ h \tro\prd{x_{s}}{x_t}{t-s-h}{\theta_s}+o(h),
\end{align*}
where $\tro$ is the forward operator of the target CTMP. A similar argument applies for the term which involves
$\prd{\y}{x_t}{h}{\theta_s}$. Thus
\begin{align*} 
  \lim_{h\rightarrow 0} \mrwh{\xx{s}}{\xx{t}}{h}{t-s}{\theta_s}
  = 1+\frac{\lsb\tro-\prot{\theta_s} \rsb
    \prd{x_{s}}{\y}{t-s}{\theta_s}|_{\y=x_t}}{\lambda(t-s)\prd{x_{s}}{x_{t}}{t-s}{\theta_s}},
\end{align*}
where $\prot{\theta_s}$ is the forward operator associated with the proposal 
transition density $\prd{\y}{x_t}{h}{\theta_s}$.
%



\section{Setup and the CIS algorithm}\label{sec:setup-cis}

This section introduces the basic notation and presents the CIS
algorithm.

\subsection{Notation}\label{section:notation}

The $i$th element of a vector $x$ is denoted by $x_i$ or $[x]_i$. The Euclidean
inner product of two vectors~$x$ and~$y$ is denoted
by \begin{eqnarray*} x\cdot y & = & \sum_i x_i y_i.\end{eqnarray*} For
a matrix~$x$, its $(i,j)$th element is denoted by~$x_{ij}$ or $[x]_{ij}$, its
transpose by~$x^T$, its inverse by~$x^{-1}$, its inverse transpose
by~$x^{-T}$, and its complex conjugate by $\bar{x}.$ An $m\times n$ matrix whose elements are
equal to one is denoted by $\one_{m\times n}$.  The Frobenius inner
product of two matrices $x$ and $y$ of the same dimension, i.e., the
sum of all entries of their element-wise product, is denoted
by \begin{eqnarray*} x:y & = &
\sum_i\sum_jx_{ij}y_{ij}. \end{eqnarray*} 
Unless stated otherwise, the Frobenius (Euclidean) matrix norm
 \begin{eqnarray*} \|x\|^2 & = & \sum_i \sum_j |x_{ij}|^2 = 
\trace(x\bar{x}^T) \end{eqnarray*} 
is used throughout the paper and when applying the norm we treat
vectors as matrices.\\
The first and second derivative of a scalar function $f:\mR\to\mR$ are
denoted by $f^{'}$ and $f^{''}$ respectively. The gradient of a scalar
function $f:\mR^\nd\to\mR$ and the Hessian matrix are written
respectively as
\begin{eqnarray*}
 [\nabla_x f (x)]_i & := & \partial f(x)/\partial x_i \\  
 \left[
H_x f(x)\right]_{ij} 
 & := & \partial^2f(x)/\partial x_i\partial x_j. 
 \end{eqnarray*} 
Furthermore, for the
drift $\bb$ and the diffusion coefficient $\dc$ in \eqref{eq:SDE} we
define \begin{eqnarray*} \dm:\mR^\nd\to\mR^{\nd\times\nd} \quad  & \textrm{as}
  & \quad \dm \; = \; \dc\dc^T, \\
\bder:\mR^\nd\to\mR^\nd  \quad  & \textrm{as}
  & \quad [\bder(x)]_i \; = \; \partial \bb_i(x)/\partial
x_i, \\ 
\dmder:\mR^\nd\to\mR^{\nd\times\nd}  \quad  & \textrm{as}
  & \quad [\dmder(x)]_{ij} \; = \;
\partial\dm_{ij}(x)/ \partial x_j, 
 \\
 \dmdder:\mR^\nd\to\mR^{\nd\times\nd} \quad  & \textrm{as}
  & \quad [\dmdder(x)]_{ij}\; = \; \partial^2\dm_{ij}(x)/ \partial x_i\partial x_j.\end{eqnarray*}
For a finite sequence $\{\iat{k}\}_{k\in\kspace},\,
0<\iat{k}<\iat{k+1}$, we define \begin{eqnarray} \label{def_tau_t}
  \lat{t} & := & \max\{\iat{k}: \iat{k} \leq t\},\\ \label{def_n_t}
\te{t}& :=& \max\{k: \iat{k} \leq t\}, \\ \label{def_delta_tau_t}
\dt{k}& := &\iat{\nk}-\iat{\ck}. \end{eqnarray}
 Additionally, $\gd(x,\mu,\Sigma)$ denotes
the density of a Gaussian random variable with mean $\mu\in\mR^\nd$
and covariance matrix $\Sigma\in \mR^{\nd\times\nd}$ evaluated at
$x\in\mR^\nd$. Finally, to avoid overloading the notation, any
constant that appears in all subsequent assumptions and proofs is
denoted by $C$.

\subsection{Setup} 

We consider a general setup, where our target process is a CTMP
evolving on state space $\texttt{X}$
with unknown transition density with respect to reference measure $\mbox{d}y$:
\begin{equation}\label{eqn:tr_dens}
\trd{x}{\y}{t} = \mathbb{P}(X_{s+t}\in\mbox{d}\y|X_s=x)/\mbox{d}\y~~\mbox{
  for $s,t>0$}.
\end{equation}
For notational simplicity we have assumed that the target process is
time-homogeneous and the initial state value is known
$X_0=x_0$. Extensions to cases
where the initial value is from a known distribution are
straightforward, while applicability to time-inhomogeneous processes would follow by extending the current setup
under appropriate smoothness conditions in the time variable.  

While the transition density is assumed to be intractable, under mild regularity conditions it
will satisfy Kolmogorov's forward equation
\begin{equation} \label{eq:PDE}
  \frac{\partial}{\partial t} \trd{x}{\y}{t} = \tro \trd{x}{\y}{t},
\end{equation}
where $\tro$ is the forward operator of the process, acting on
$\y$. For Markov jump processes this is just the rate matrix, while
for the diffusion (\ref{eq:SDE}) this is the second order differential
operator defined via
\begin{align*}
  \tro f(y) = -\sum_{i=1}^d \frac{\partial}{\partial y_i}
  [\bb_i(y)f(y)]+\sum_{i=1}^d\sum_{j=1}^d \frac{\partial}{\partial
    y_i\partial y_j} [\dm_{ij}(y)f(y)].
\end{align*}
To implement importance sampling we need a proposal process. We
consider a family of proposal processes, parameterised by
$\theta$, and assume these have a known transition density, denoted by
$\prd{x}{y}{t}{\theta}$. The proposal process will also have a family
of forward operators, which we denote by $\pro$. As an example,
when analysing diffusions, a natural choice of proposal processes will
be diffusions with constant drift and diffusion coefficient. In this
case $\theta$ would be the value of the drift and diffusion
coefficient, and the transition density would be Gaussian.  Throughout
the following we assume that for any $\theta$, any $t>0$ and any $x,$ 
the support of $\prd{x}{\cdot}{t}{\theta}$ contains the support of
$\trd{x}{\cdot}{t}$.

\subsection{The CIS algorithm}

We are now in a position to give a formal description of the algorithm outlined in
Section \ref{S:deriv}. The CIS algorithm simulates values of the process and importance
weights at times indicated by a renewal process. If $\iat{\ck},$ for
some $ k\in\kspace,$
is the most recent renewal time and $\x{\ck}, \w{\ck}$ denote
the value of the process and importance weight at that time, then the
algorithm iterates two steps. First, the waiting time for the next
renewal
$\iat{\nk}$ is simulated as the time to the event of a renewal process
with time-dependent intensity $\lambda(s), s>0$. If $\iat{\nk}<\th$ then
$\x{\nk}$ is generated from the proposal density
$\prd{\x{\ck}}{\cdot}{\dt{k}}{\theta_{\ck}}$ and the
weight is updated to
\[
\w{\nk}=\w{\ck}\mrw{\x{\ck}}{\x{\nk}}{\dt{k}}{\theta_{\ck}},
\]
where $\mrw{x}{y}{u}{\theta}$ is referred to as the
\textit{incremental weight} and is defined by
\begin{align}\label{eq:iweight}
 \mrw{x}{y}{u}{\theta}
=1+\frac{\lsb\tro-\prot{\theta}
\rsb \prd{x}{Y}{u}{\theta}\Big|_{Y=y}}{\lambda(u)\prd{x}{y}{u}{\theta}}.
\end{align}
If $\iat{\nk}>\th$ then the algorithm stops and outputs a distribution
for the value of the process at time $\th$, namely
$\prd{\x{\ck}}{\cdot}{\th-\iat{\ck}}{\theta_{\ck}}$, along with the
importance weight $\ww{\th}=\w{\ck}$. Full details are given in
Algorithm {\algcis} below.

\begin{description} 

\item[{\bf Algorithm \algcis}:] {\it Continuous-time Importance Sampling
    algorithm} (CIS).

\item[{\it Input}:] $\th$ a time to stop and $x_0$ the initial value.

\item[{\it Initialise}:] Set $k=\kstart, \iat{0}=0, 
\x{0}=x_0, \w{0}=1$ and choose $\theta_0$.
  
\item[{\it Iterate}:] Repeat
  \begin{enumerate}
    
  \item draw $u,$ the interarrival time of a renewal process
    of rate $\lambda(s)$, and set $\iat{\nk}=\iat{\ck} + u$,
    
  \item if $\iat{\nk} > \th$, set $\ww{\th}=\w{\ck}$ and stop.
  \item\label{item:cis_update} Otherwise

    \begin{enumerate}
    \item\label{item:cis_up_draw} draw $\x{\nk}\sim
    \prd{\x{\ck}}{\cdot}{\dt{k}}{\theta_{\ck}}$,
  \item\label{item:cis_up_wei} compute the incremental weight using \eqref{eq:iweight}
    \begin{align*}
      \iw{\nk} =
      \mrw{\x{\ck}}{\x{\nk}}{\dt{k}}{\theta_{\ck}}
    \end{align*}
    and update the weight $\w{\nk}=\w{\ck}\iw{\nk}$, 

  \item choose $\theta_{\nk} = \theta(\tro, \x{\nk})$ and set $k=k+1$.
    \end{enumerate}
  \end{enumerate}

\item[{\it Output}:] A distribution for the value of the process at
  time $\th$, namely \\ $\prd{\x{\ck}}{\cdot}{\th-\iat{\ck}}{\theta_{\ck}}$, and the
  corresponding importance sampling weight~$\ww{\th}$.

\end{description}

The output of the CIS algorithm should be interpreted as a (signed) measure valued
estimate of the transition measure, i.e. $\ww{\th}
\prd{\x{\ck}}{\cdot}{\th-\iat{\ck}}{\theta_{\ck}}$ is an estimate of $
\trd{x_0}{\cdot}{\th}$ and it can be  
 readily used to estimate
functional expectations, $\mE_p(f(X_{\th}))$, where $\mE_p$
denotes expectation with respect to the target process. Specifically,
if we repeat    Algorithm~{\algcis} $N$~times and denote the output of
the $i$th run to be weight $\ww{\th}^{(i)}$ and distribution
\begin{equation*}q^{(i)}(\cdot):=\prd{\x{T}^{(i)}}{\cdot}{\th-\lat{\th}^{(i)}}{\theta_{\te{\th}}^{(i)}},\end{equation*}
then our estimate of $\mE_p(f(X_{\th}))$  is
\begin{equation} \label{eq:unbias} 
\frac{1}{N} \sum_{i=1}^N \ww{\th}^{(i)}
\int f(\xx{\th}) q^{(i)}(\xx{\th}) \mbox{d}\xx{\th}.
\end{equation}
If we cannot calculate the integral in the above expression
analytically, we can approximate it by a standard Monte Carlo estimate
as we can draw samples from the proposal process density
$q^{(i)}(\cdot) $. In particular, we can simulate $\xx{\th}^{(i)}$ from
$q^{(i)}(\cdot)$, and approximate the expectation by $\sum_{i=1}^N
\ww{\th}^{(i)}f(\xx{\th}^{(i)})/N$. 
%
%

We should refer to Algorithm {\algcis} as generic CIS, and a
more detailed version will be considered in Section \ref{sec:CIS_diff}
to specifically deal with
SDEs. This will allow for precise and rigorous stability and validity
analysis in Section~\ref{sec:validity-diffusions}.
However, even at this level of generality we can identify convenient ingredients
for ensuring stability and unbiasedness, as
outlined in the next section.

\section{Unbiasedness and stability of generic
  CIS} \label{sec:validity-cis}

In this section we provide some general results about stability and
unbiasedness of the generic CIS algorithm. These are then used in the
particular case of SDEs in
Sections~\ref{sec:CIS_diff}~and~\ref{sec:validity-diffusions}, and may
be similarly applied in other settings of interest.

More specifically, recall that the output of Algorithm~$\algcis$ amounts to
the weight  $\ww{\th}$ and the distribution
$\prd{\x{\ck}}{\cdot}{\th-\iat{\ck}}{\theta_{\ck}}$, thus the
expectation of the
resulting density estimate is
%
\begin{equation} \label{eqn:est_exp}
\ttrd{x_0}{\y}{T} := \ecis\left[\WW{\th}
 \prd{\X{T}}{\y}{\th-\lat{\th}}{\theta_{\te{\th}}}\right],
\end{equation}
where $\ecis$ denotes the expectation with respect
to the probability space generated by the CIS algorithm until target time horizon $\th$, i.e. with
respect to the renewal process, the proposal process and its
parameters. Consequently, unbiasedness means
$\trd{x_0}{\y}{T} = \ttrd{x_0}{\y}{T}$ for every $T>0,$ and requires
suitable assumptions.

Furthermore, the tails of $\WW{\th}$, the random variable of the weight $\ww{\th}$ returned by the
CIS algorithm, determine the stability of the CIS based estimators,
e.g. \eqref{eq:unbias},
and in particular we need to establish that  $\WW{T} \in
L^1,$ otherwise~\eqref{eqn:est_exp} makes no sense. Thus stability of
the algorithm is studied by
examining conditions that guarantee that $\WW{T}$
is in $L^p.$  Stability and unbiasedness are studied in the following subsections.

\subsection{Stability of Weights}\label{subsec:weights}
To allow for precise statements let $\big(\mathcal{F}_t^{X,
    \lambda}\big)_{t\geq 0}$ be the filtration of the proposal
process $X$ together with the Poisson event process $\tau$, i.e. the
filtration generated by the CIS algorithm. The following two
assumptions imply a generic result regarding stability of weights.
\begin{assumption}[Uniform boundedness of moments of incremental
  weights]\label{assu:momentunif_bdd} 
For a constant \mbox{$p^* > 1$} consider the family of random variables 
\[
\mathcal{H}_k := \ecis\left[|H_k|^{p^*}  \Big| \mathcal{F}_{\tau_{k-1}}^{X,
    \lambda}
    \right],
\]
and assume that for some $p^*>1$ there exists a constant
  $C(p^*),$ s.t. for every $k=1,2,\dots$ 
  \begin{equation}
    \label{eq:momentunif_bdd}
    \mathcal{H}_k \leq C(p^*) \quad \textrm{a.s.}
  \end{equation}
\end{assumption}
\begin{remark}
  If the choice of $\theta_{k+1}$ in step 3(c) of Algorithm $\algcis$  depends
  on $x_{\tau_{k+1}}$ only and the intensity function of the Poisson
  process used in step~1 of Algorithm $\algcis$ is fixed, then
  condition \eqref{eq:momentunif_bdd} is implied by the more tractable
  \begin{equation}
    \label{eq:momentunif_bdd_smpl}
\sup_{x} \ecis\left[|H_1|^{p^*}  \Big| X_0=x
    \right] < C(p^*),
 \end{equation}
which will be verified for the diffusion case in Section \ref{sec:validity-diffusions}.
\end{remark}
\begin{assumption}[Finitness of the generating function of $\te{T}$]\label{assu:momentunif_bdd_gen_f}
Consider the sequence $\iat{\ck}$ of renewal times generated
by Algorithm {\algcis} and the resulting $\te{T}$ as defined in
equation \eqref{def_n_t}. Assume that the generating function of $\te{T}$
is everywhere finite, i.e. $\ecis [\exp\{c\te{T}\}] <
 \infty$ for every $c \in \mathbb{R}.$
\end{assumption}
The stability of weights is then guaranteed by the following result.
\begin{theorem}\label{th:tot_weight} Under Assumptions
  \ref{assu:momentunif_bdd} and \ref{assu:momentunif_bdd_gen_f} the weight
  $\WW{\th}$ returned by Algorithm {\algcis} is in $L^p$ for all $p<p^*$.
\end{theorem}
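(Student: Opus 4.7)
The plan is to show that $|W_T|^p = \prod_{k=1}^{n_T} |H_k|^p$ is integrable by combining (i) a conditional moment bound at each step coming from Assumption~\ref{assu:momentunif_bdd}, (ii) iterated conditioning, (iii) an exponential tail estimate for $n_T$ coming from Assumption~\ref{assu:momentunif_bdd_gen_f}, and (iv) Hölder's inequality used to decouple the product from the indicator $\mathbf{1}_{n_T\geq n}$ which is not measurable with respect to $\mathcal{F}_{\tau_{n-1}}^{X,\lambda}$.

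Concretely, fix $p<p^*$ and choose $q>1$ so that $pq<p^*$; let $q'=q/(q-1)$ be its conjugate exponent. By conditional Jensen applied to Assumption~\ref{assu:momentunif_bdd},
\begin{equation*}
\ecis\bigl[|H_k|^{pq}\bigm|\mathcal{F}_{\tau_{k-1}}^{X,\lambda}\bigr]
\;\leq\;
\bigl(\ecis[|H_k|^{p^*}\mid\mathcal{F}_{\tau_{k-1}}^{X,\lambda}]\bigr)^{pq/p^*}
\;\leq\;
C(p^*)^{pq/p^*}
\quad\text{a.s.}
\end{equation*}
Iterating this bound via the tower property yields $\ecis[\prod_{k=1}^{n}|H_k|^{pq}]\leq C(p^*)^{npq/p^*}$ for every deterministic $n$.

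Next I would decompose the expectation of $|W_T|^p$ by conditioning on the value of $n_T$:
\begin{equation*}
\ecis\bigl[|W_T|^p\bigr]
\;=\;
\sum_{n=0}^{\infty}\ecis\!\left[\prod_{k=1}^{n}|H_k|^p\,\mathbf{1}_{\{n_T=n\}}\right]
\;\leq\;
\sum_{n=0}^{\infty}\ecis\!\left[\prod_{k=1}^{n}|H_k|^p\,\mathbf{1}_{\{n_T\geq n\}}\right].
\end{equation*}
Applying Hölder with exponents $q,q'$ to each summand gives
\begin{equation*}
\ecis\!\left[\prod_{k=1}^{n}|H_k|^p\,\mathbf{1}_{\{n_T\geq n\}}\right]
\;\leq\;
\Bigl(\ecis\!\Bigl[\prod_{k=1}^{n}|H_k|^{pq}\Bigr]\Bigr)^{\!1/q}
\mathbb{P}(n_T\geq n)^{1/q'}
\;\leq\;
C(p^*)^{np/p^*}\,\mathbb{P}(n_T\geq n)^{1/q'}.
\end{equation*}
By Markov's inequality and Assumption~\ref{assu:momentunif_bdd_gen_f}, for any $c>0$ we have $\mathbb{P}(n_T\geq n)\leq e^{-cn}\ecis[e^{c n_T}]$, so $\mathbb{P}(n_T\geq n)^{1/q'}\leq C_c\,e^{-cn/q'}$ for a finite constant $C_c$.

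Substituting back,
\begin{equation*}
\ecis[|W_T|^p]
\;\leq\;
C_c\sum_{n=0}^{\infty}\bigl(C(p^*)^{p/p^*}e^{-c/q'}\bigr)^{n}.
\end{equation*}
Since Assumption~\ref{assu:momentunif_bdd_gen_f} allows $c$ to be taken arbitrarily large, we can choose $c>q'\log C(p^*)^{p/p^*}$ to make the common ratio strictly less than one, and the geometric series converges. This shows $W_T\in L^p$ for every $p<p^*$. The only step requiring any delicacy is the Hölder decoupling in the third display: because $\{n_T\geq n\}=\{\tau_n\leq T\}$ is not $\mathcal{F}_{\tau_{n-1}}^{X,\lambda}$-measurable, we cannot pull the indicator inside the conditioning, and this is precisely why we must pay the price of the $L^q$ moment $pq<p^*$ rather than the $L^p$ moment directly; the super-exponential tail of $n_T$ is then what guarantees that this price is affordable.
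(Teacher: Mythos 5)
Your proof is correct and follows essentially the same route as the paper's: decompose $\ecis[|W_T|^p]$ over the value of $n_T$, use H\"older to decouple the indicator of the event $\{n_T = n\}$ (which, as you rightly note, is not $\mathcal{F}_{\tau_{n-1}}^{X,\lambda}$-measurable) from the product of incremental weights, control the product by iterated conditioning via Assumption~\ref{assu:momentunif_bdd}, and beat the resulting geometric growth with the super-exponential tail of $n_T$ supplied by Assumption~\ref{assu:momentunif_bdd_gen_f}. The only cosmetic difference is that you take H\"older exponents $(q,q')$ with $pq<p^*$ and then apply conditional Jensen, whereas the paper takes the conjugate pair $(p^*/p,\,p^*/(p^*-p))$ so that the product term carries exactly the $p^*$-th moments; both choices work.
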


\begin{proof}[Proof of Theorem \ref{th:tot_weight}]
Fix arbitrary $p < p^*$ and let $\varepsilon:= p^*-p.$ Following Assumption \ref{assu:momentunif_bdd}, define \[ K:= C(p^*)^{\frac{p}{p^*}}  =
C(p+\varepsilon)^{\frac{p}{p + \varepsilon}} < \infty. \]
We first show that 
\begin{eqnarray}
\ecis |W_T|^p &  \leq & \sum_{j=0}^{\infty} K^{j}\big[\mathbb{P}(\te{T} = j)\big]^{\epsilon/p^*}. 
\label{eq:bdd_aux}
\end{eqnarray}
To this end we use the H\"older inequality (with ${\varepsilon \over
  p^*}+{p \over p^*}=1$) term by term in
the sum below as follows.
\begin{eqnarray*}
\ecis |W_T|^p & =  & \ecis \Big[\prod_{i=0}^{\te{T}} |H_i|^p\Big] \;\; = \;\; \ecis \Big[ \sum_{j=0}^{\infty} \mathbb{I}(\te{T} = j) \prod_{i=0}^j |H_i|^p  \Big] \\
& = &  \sum_{j=0}^{\infty} \ecis \Big[ \mathbb{I}(\te{T} = j) \prod_{i=0}^j |H_i|^p  \Big] \\ 
&\leq & \sum_{j=0}^{\infty} \big[\mP (\te{T} = j)\big]^{{\epsilon\over
    p^*}}  \bigg( \ecis \Big[ \prod_{i=0}^j |H_i|^{p^*}
\Big]\bigg)^{{p\over p^*}} \\
& =  & \sum_{j=0}^{\infty} \big[\mP (\te{T} = j)\big]^{{\epsilon\over
    p^*}}  \Bigg( \ecis \Bigg\{ \ecis \bigg[ \prod_{i=0}^j |H_i|^{p^*}  \bigg| \mathcal{F}^{X, \lambda}_{\tau_{j-1}} \bigg] \Bigg\} \Bigg)^{{p \over p^*}} \\
& = & \sum_{j=0}^{\infty} \big[\mP (\te{T} = j)\big]^{{\epsilon\over
    p^*}}  \Bigg( \ecis \bigg\{\prod_{i=0}^{j-1}
|H_i|^{p^*} \; \ecis \Big[ |H_j|^{p^*}  \Big|  \mathcal{F}^{X, \lambda}_{\tau_{j-1}} \Big] \bigg\} \Bigg)^{{p \over p^*}} \\
& \leq  & \sum_{j=0}^{\infty} \big[\mP (\te{T} =
j)\big]^{{\epsilon\over p^*}}  \Bigg(  C(p^*) \ecis
\Bigg\{\prod_{i=0}^{j-1} |H_i|^{p^*}  \Bigg\} \Bigg)^{{p \over
    p^*}} \leq  \dots \\
& \leq & \sum_{j=0}^{\infty}
C(p^*)^{jp/p^*}\big[\mP (n(t) =
j)\big]^{\epsilon/p^*} \; = \; \sum_{j=0}^{\infty}
K^{j}\big[\mP (\te{T} = j)\big]^{\epsilon/p^*},
\label{eq:}
\end{eqnarray*}

where the last conditioning is repeated recursively.

Now by Assumption \ref{assu:momentunif_bdd_gen_f}, the moment generating function of $\te{T}$ is finite for $$c:= {p^* \over \epsilon} \log ( 2K).$$
This implies that for some constant $M < \infty,$ $$\mP (\te{T} = j)
\; \leq \; M (2K)^{-{p^* \over \epsilon}j}$$   
and consequently by \eqref{eq:bdd_aux}, we obtain
\begin{eqnarray}
\ecis |W_T|^p &  \leq & \sum_{j=0}^{\infty} K^{j}\big[\mP (\te{T} = j)\big]^{\epsilon/p^*} \;\leq \; M ^{\epsilon/p^*} \sum_{j=0}^{\infty} K^{j}(2K)^{-j} \; < \infty. \nonumber 
\label{eq:}
\end{eqnarray}
Thus $W_T \in L^p.$
\end{proof}

\subsection{Unbiasedness}\label{subsec:unbiasedness} 
The generic CIS relies on the Kolmogorov forward equation, hence the fundamental
assumption that underpins the approach is:
\begin{assumption}[Validity of forward
  equations]~ \label{aasu:Komogorovl_unique} 
\begin{enumerate}[label=(\alph*)]
\item The transition density of the target process
  $\trd{x}{\y}{t},$ defined in~\eqref{eqn:tr_dens}, is the unique solution
  of the Kolmogorov forward equation~\eqref{eq:PDE}.
\item Similarly, for every $\theta,$ the transition density
  of the proposal process $\prd{x}{y}{t}{\theta},$ is the unique solution
  of the Kolmogorov forward equation
\begin{equation} \label{eq:forward_proposal}
  \frac{\partial}{\partial t}\prd{x}{y}{t}{\theta} = 
\pro \prd{x}{y}{t}{\theta}.
\end{equation}
\end{enumerate}
\end{assumption}
Unbiasedness of the CIS algorithm will follow from 
verifying that $\ttrd{x_0}{\y}{T}$ defined by~\eqref{eqn:est_exp}
also satisfies the Kolmogorov forward equation~\eqref{eq:PDE}, and
hence by uniqueness is the same as  $\trd{x}{\y}{T}.$

To this end we relate the CIS algorithm to Piecewise Deterministic
Markov Processes \cite{MR790622} (see also \cite{MR1283589,MR1680267}) . These are processes that have
stochastic jumps at event times of a point process, but where the
state evolves deterministically between event times. 

We can restrict our attention to the case where the statespace of a Piecewise Deterministic Markov Process (PDP) can be written as a
single open set (rather than a sum of indexed open sets,
c.f. \cite{MR790622}), which simplifies notation. The following
objects define a PDP:
\begin{itemize}
\item The
state space $(E, \mathcal{B}(E))$  of such a PDP process  $Z_t \in E$; 
\item $\lambda^{\textrm{PDP}}(z): E \to \mathbb{R}_{+},$ a state dependent rate function of the event
  times;
\item $\psi: E \to \mathbb{R}^l,$ a vector field that specifies the deterministic
  dynamics
  between jump times. If the vector field has a unique integral curve solution
  $\Psi(z,t): E\times \mathbb{R}_+ \to E$ and  there are no events in $[t,t+s],$ then $Z_{t+s} = \Psi(Z_t,  s).$ 
\item $Q(z, \cdot): E \times \mathcal{B}(E) \to
  [0,1],$  a transition kernel for the process at jump times, i.e. if
  there is an event at $t,$ then $Z_t \sim Q(z_{t^-}, \cdot).$
\end{itemize}

In the following definition we write the process generated
by the CIS (i.e. Algorithm~{\algcis}) as a PDP.
\begin{defi}[CIS as PDP]\label{CIS_PDP}~
\begin{itemize}
\item Let $z = (u,x,w)  \in \mathbb{R}_{+}\times \texttt{X} \times
  \mathbb{R} =: E,$ where $\texttt{X}$ is the state space of the
  target CTMP. Here $u$ is the time since the most
  recent jump (or the current time if there has been no jump); $x$ is
  the value of the process simulated at the last event time (or the
  initial value of the process if there has been no jump), and $w$ is
  the current value of the weight, that is 
\[ z_t := (t-\lat{t}, x_{\lat{t}}, w_{\lat{t}}).\]
\item Take $\lambda(u),$
  the rate of the CIS algorithm interarrival renewal process, and let \[\lambda^{\textrm{PDP}}(z) := \lambda(u).\]
\item Define the deterministic dynamics between events as \[ \Psi\big((u,x,w), s
  \big) := (u+s, x, w),\]
or equivalently 
\[\psi(z_t):= \frac{\partial z_t}{\partial t} = (1,0,0).
\]
\item Given $z_{t^{-}}=(u,x,w),$ the transition kernel  $Q(z_{t^{-}}, \cdot)$ is defined
  through sampling \[ y \sim
    \prd{x}{\cdot}{u}{\theta(x)}, \]
and setting 
\[z_t:= \big(0, y, w  \mrw{x}{y}{u}{\theta(x)}\big).
\]
\end{itemize}
Algorithm {\algcis} generates the process $Z_t$ defined as above with
initial value $Z_0 = (0, x_0, 1).$ 
\end{defi}

Denote by $M(E)$ the family of real valued measurable functions on $E,$
and by $M_b(E) \subset M(E),$ the subfamily of all bounded
functions. Let $\|g\|_{\infty}$ be the sup norm of $g \in M(E).$ Let
$\mathcal{T}(s)$ be the semigroup on $M_b(E)$ associated with the
transition kernels of our PDP $Z_t$ and let $\mathcal{A}$ be its full generator.

In the CIS setting, where in particular there is no active boundary
of $E$ (i.e. the set of boundary points of $E$ that can be reached
from $E$ via integral curves is empty), the full generator $\mathcal{A}$
and its domain $\mathcal{D}(\mathcal{A})$ are characterised by the
following result, which is a straightforward simplification of Theorem
11.2.2 of \cite{MR1680267} and the preceding discussion of the Dynkin formula.
\begin{theorem}\label{th:PDP_gen} Let $Z_t$ be the PDP constructed in
  Definition \ref{CIS_PDP} and let $g \in M(E)$ satisfy the following
  conditions:
\begin{enumerate}
\item for each $z \in E$ the function $h(t) := g\big(\Psi(z,t)\big)$ is
  absolutely continuous on $(0, \infty)$;
\item for each $t\geq 0$,
\[
\mathbb{E}\left( \sum_{i=1}^{n_t}\left| g(Z_{\tau_i}) -
    g(Z_{\tau_i^{-}})\right|\right) < \infty.
\]
\end{enumerate} 
Then $g \in  \mathcal{D}(\mathcal{A}).$ Moreover, 
 at $z=(u,x,w),$
\begin{eqnarray} \nonumber
\mathcal{A}g(z) & =  & \psi(z) \cdot \nabla g(z) + \lambda^{PDP}(z)\int_{E}
\big(g(z') - g(z)\big) Q(z, \mbox{d} z') \\ \label{generator_form}
&=& \frac{\partial}{\partial u}g(u,x,w) + \lambda(u)\int_{\texttt{X}}
    \Big(g\big(0, y, w  \mrw{x}{y}{u}{\theta(x)}\big) - g(u,x,w)\Big) \prd{x}{y}{u}{\theta(x)}\mbox{d}y.
\end{eqnarray}
Furthermore, the Dynkin formula holds for $g,$ hence if $Z_0 =
z_0$, then:
\begin{eqnarray}\label{Dynkin}
\mathbb{E}\big[ g(Z_T)\big] & = & g(z_0) + 
\mathbb{E}\left[ \int_0^T \mathcal{A}g(Z_s)\mbox{d}s \right].
\end{eqnarray}
\end{theorem}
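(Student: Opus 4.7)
The plan is to obtain both the generator formula and the Dynkin formula as direct corollaries of Theorem~11.2.2 of \cite{MR1680267}, the general result characterising the full generator of a PDP. The only real work therefore lies in checking that Definition~\ref{CIS_PDP} fits Davis's framework and that the hypotheses (1)--(2) are exactly those required by the general theorem.

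First I would confirm that the triple $(\psi, \lambda^{\mathrm{PDP}}, Q)$ from Definition~\ref{CIS_PDP} is a bona fide PDP on $E=\mathbb{R}_+\times\texttt{X}\times\mathbb{R}$: the vector field $\psi \equiv (1,0,0)$ admits the explicit unique integral curve $\Psi((u,x,w),s)=(u+s,x,w)$; the rate $\lambda^{\mathrm{PDP}}(z)=\lambda(u)$ is measurable and nonnegative; and $Q(z,\cdot)$ is a Markov kernel on $(E,\mathcal{B}(E))$. Next I would record the simplification that justifies the stripped-down statement: because $\psi$ only increases the $u$-coordinate, the integral curve from any $z\in E$ stays in $E$ for all positive time, so the ``active boundary'' in Davis's notation is empty and all boundary-jump terms in $\mathcal{A}$ and in the domain description vanish. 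This is precisely the simplification alluded to in the paragraph preceding the statement of the theorem.

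With these verifications in hand, Davis's Theorem~11.2.2 applied to any $g$ satisfying (1) and (2) gives $g \in \mathcal{D}(\mathcal{A})$ together with the generic PDP formula
\[
\mathcal{A}g(z) = \psi(z)\cdot\nabla g(z) + \lambda^{\mathrm{PDP}}(z)\int_E\big(g(z')-g(z)\big)Q(z,\mbox{d}z').
\]
Substituting $\psi = (1,0,0)$, $\lambda^{\mathrm{PDP}}(z) = \lambda(u)$ and the explicit $Q$ from Definition~\ref{CIS_PDP} (which places all mass on points of the form $(0,y,w\,\rho_{\theta(x)}(x,y,u))$ with $y$-marginal $\prd{x}{y}{u}{\theta(x)}$) collapses the integral over $E$ to an integral over $\texttt{X}$, which yields \eqref{generator_form}. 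The Dynkin formula \eqref{Dynkin} is the second conclusion of the same theorem and requires no further work.

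The main subtlety is not in applying the black-box PDP result but in recognising that conditions (1) and (2) are exactly the correct technical hypotheses. Condition (1), absolute continuity of $t \mapsto g(\Psi(z,t))$, is what allows the Lie-derivative-along-the-flow term $\psi \cdot \nabla g$ to exist almost everywhere along each integral curve and be integrated to recover the continuous part of the Dynkin decomposition. Condition (2), finiteness of $\mathbb{E}\big[\sum_{i=1}^{n_t}|g(Z_{\tau_i})-g(Z_{\tau_i^-})|\big]$, is what converts the a.s.\ identity $g(Z_t)-g(Z_0)=\int_0^t \mathcal{A}g(Z_s)\mbox{d}s + M_t$, with $M_t$ a local martingale, into one for which $M_t$ is a genuine martingale with zero expectation, so that \eqref{Dynkin} follows by taking expectations. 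In applications these two conditions will be the non-trivial thing to verify for specific $g$, but in the generic theorem they appear as hypotheses, so nothing beyond the identification with Davis's framework remains to be shown.
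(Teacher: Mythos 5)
Your proposal is correct and takes exactly the route the paper does: the paper offers no independent proof, stating only that the result is ``a straightforward simplification of Theorem~11.2.2 of \cite{MR1680267} and the preceding discussion of the Dynkin formula,'' which is precisely your reduction (verify the PDP data, note the empty active boundary, substitute $\psi$, $\lambda^{\mathrm{PDP}}$ and $Q$ into the generic generator formula). Your added remarks on why conditions (1) and (2) are the right hypotheses go slightly beyond what the paper records, but introduce nothing at variance with it.
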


To prove unbiasedness we will relate the generator $\mathcal{A}$ of
the PDP to the forward operator $\mathcal{K}$ of the target for a
suitable class of functions. This will require some further
assumptions. Denote by $\mathcal{L}$ the generator of
the target process (i.e. the adjoint of $\mathcal{K}$) and its domain
by $\mathcal{D}(\mathcal{L})$.

\begin{assumption}[Regularity conditions] \label{aasu:test_f}
  Assume there exists a family of functions $\mathcal{S} \subseteq
  \mathcal{D}(\mathcal{L})\cap M_b(\texttt{X})$ satisfying the following conditions:
\begin{enumerate}[label=(\alph*)]
\item For each $f \in \mathcal{S}$ the function $g_f: E \to
  \mathbb{R}$ defined as 
\begin{equation} \label{def_gf} g_f(u,x,w) := w\int_{\texttt{X}}f(y)
  \prd{x}{y}{u}{\theta(x)}\mbox{d}y, \end{equation}
satisfies conditions 1. and 2. of Theorem \ref{th:PDP_gen} and hence
$g_f \in \mathcal{D}(\mathcal{A}).$
\item For each $f \in \mathcal{S},$ $x \in \texttt{X}$ and $u \in
  (0,T),$ time differentiation can be moved under the integral sign, i.e.
\begin{eqnarray} \label{deriv:q}
\frac{\partial}{\partial u}\left\{\int_{\texttt{X}}f(y)
  \prd{x}{y}{u}{\theta(x)}\mbox{d}y\right\} & = &
\int_{\texttt{X}}f(y)  \frac{\partial}{\partial u}\left\{
  \prd{x}{y}{u}{\theta(x)}\right\}
\mbox{d}y; \\ \label{deriv:p}
\frac{\partial}{\partial u}\left\{\int_{\texttt{X}}f(y)
  \ttrd{x}{y}{u}\mbox{d}y\right\} & = &
\int_{\texttt{X}}f(y)  \frac{\partial}{\partial u}\left\{
  \ttrd{x}{y}{u}\right\}
\mbox{d}y.
\end{eqnarray}
\item For each $f \in \mathcal{S},$ $x \in \texttt{X}$ and $t>0,$ the
  following adjoint equations
  hold:
\begin{eqnarray} \label{adjoint:q}
\int_{\texttt{X}} \prd{x}{y}{t}{\theta(x)}\mathcal{L}f(y) \mbox{d}y & = &
\int_{\texttt{X}} f(y) \mathcal{K}\prd{x}{y}{t}{\theta(x)}\mbox{d}y;
\\ \label{adjoint:p}
\int_{\texttt{X}} \ttrd{x}{y}{t}\mathcal{L}f(y) \mbox{d}y & = &
\int_{\texttt{X}} f(y) \mathcal{K}\ttrd{x}{y}{t}\mbox{d}y.
\end{eqnarray}
\item $\mathcal{S}$ is separating, i.e:
\begin{eqnarray}\nonumber
\left\{  \forall_{f\in \mathcal{S}} \quad 
\int_{\texttt{X}} f(y) h_1(y) \mbox{d}y =
\int_{\texttt{X}} f(y) h_2(y) \mbox{d}y 
  \right\} & \implies  & h_1 = h_2 \quad \textrm{a.s.}
\end{eqnarray}
\end{enumerate}
\end{assumption}
The following lemma helps to establish part $(a)$ of
Assumption~\ref{aasu:test_f}, and does not rely on further details of
the setting.
\begin{lemma} If Assumptions~\ref{assu:momentunif_bdd}
  and~\ref{assu:momentunif_bdd_gen_f} hold, then for any $f \in
  \mathcal{S}$ the function  $g_f$ defined in~\eqref{def_gf} satisfies Condition 2 of Theorem~\ref{th:PDP_gen}.
\end{lemma}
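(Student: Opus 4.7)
The plan is to reduce Condition~2 to the same Hölder-plus-generating-function argument used in the proof of Theorem~\ref{th:tot_weight}. First I would use that $f \in \mathcal{S} \subseteq M_b(\texttt{X})$ and $\prd{x}{\cdot}{u}{\theta(x)}$ is a probability density to obtain the crude bound $|g_f(u,x,w)| \leq \|f\|_\infty |w|$ for every $(u,x,w) \in E$. Recalling that $z_{\tau_i^-} = (\dt{i}, \x{i-1}, \w{i-1})$ and $z_{\tau_i} = (0, \x{i}, \w{i-1} H_i)$, this yields the pointwise estimate
\begin{equation*}
|g_f(Z_{\tau_i}) - g_f(Z_{\tau_i^-})| \;\leq\; \|f\|_\infty\bigl(|W_{\tau_i}| + |W_{\tau_{i-1}}|\bigr) \;=\; \|f\|_\infty |W_{\tau_{i-1}}|\bigl(1+|H_i|\bigr).
\end{equation*}

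Next I would write
\begin{equation*}
\ecis\sum_{i=1}^{n_T} |g_f(Z_{\tau_i}) - g_f(Z_{\tau_i^-})| \;\leq\; \|f\|_\infty \sum_{i=1}^{\infty} \ecis\bigl[\mathbb{I}(n_T \geq i)\, |W_{\tau_{i-1}}|(1+|H_i|)\bigr]
\end{equation*}
and apply Hölder's inequality with conjugate exponents $p^*/(p^*-1)$ and $p^*$ (using the constant $p^*>1$ from Assumption~\ref{assu:momentunif_bdd}) to bound each term by
\begin{equation*}
\bigl[\mP(n_T \geq i)\bigr]^{(p^*-1)/p^*} \bigl(\ecis\bigl[|W_{\tau_{i-1}}|^{p^*}(1+|H_i|)^{p^*}\bigr]\bigr)^{1/p^*}.
\end{equation*}
Using the elementary estimate $(1+|H_i|)^{p^*} \leq 2^{p^*-1}(1+|H_i|^{p^*})$ together with the tower-property recursion from the proof of Theorem~\ref{th:tot_weight}, i.e.\ conditioning on $\mathcal{F}_{\tau_{i-1}}^{X,\lambda}$ and invoking Assumption~\ref{assu:momentunif_bdd}, I would obtain $\ecis|W_{\tau_{i-1}}|^{p^*}(1+|H_i|)^{p^*} \leq C_1\, C(p^*)^{i-1}$ for an absolute constant $C_1$.

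For the first factor I would use Assumption~\ref{assu:momentunif_bdd_gen_f}: by Markov's inequality, for any $c > 0$, $\mP(n_T \geq i) \leq \ecis[e^{c n_T}]\, e^{-ci}$, and $c$ can be chosen arbitrarily large. Picking $c$ so that $c(p^*-1)/p^* > (1/p^*)\log C(p^*) + \log 2$, exactly as in the proof of Theorem~\ref{th:tot_weight}, makes the resulting bound on term $i$ geometrically summable in $i$, and the series converges. The main obstacle, as in Theorem~\ref{th:tot_weight}, is this balancing of the two decay/growth rates, but the assumption that the generating function of $n_T$ is finite \emph{everywhere} provides exactly the flexibility needed. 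This yields Condition~2 of Theorem~\ref{th:PDP_gen} for $g_f$.
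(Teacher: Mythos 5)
Your proof is correct, but it takes a genuinely different route from the paper's. You start from the same pointwise bound $|g_f(Z_{\tau_i}) - g_f(Z_{\tau_i^-})| \leq \|f\|_\infty\,|W_{\tau_{i-1}}|(1+|H_i|)$, but then you rerun the machinery of Theorem~\ref{th:tot_weight} term by term: H\"older with exponents $\bigl(p^*/(p^*-1),\,p^*\bigr)$ against the indicator $\mathbb{I}(n_T\geq i)$, the tower-property recursion to obtain $\ecis\bigl[|W_{\tau_{i-1}}|^{p^*}(1+|H_i|)^{p^*}\bigr]\leq C_1\,C(p^*)^{i-1}$, and Markov's inequality applied to the everywhere-finite generating function of $n_T$ to dominate the geometric growth. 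The paper instead packages the pointwise bound into modified incremental weights $H_k^*:=C(|H_k|+1)\geq 1$ with $C=\max\{1,\|f\|\}$, observes that these still satisfy Assumption~\ref{assu:momentunif_bdd}, and invokes Theorem~\ref{th:tot_weight} as a black box to conclude $W_t^*\in L^p$ for $p<p^*$; since $H_k^*\geq 1$ the modified weights are nondecreasing in $k$, so the sum is dominated by $n_t|W_t^*|$, and a single application of H\"older together with finiteness of all moments of $n_T$ (from Assumption~\ref{assu:momentunif_bdd_gen_f}) finishes the argument. The paper's version is shorter because it reuses Theorem~\ref{th:tot_weight} rather than reproving it; yours is self-contained but essentially duplicates that proof. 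Both use the two assumptions in the same essential way, and both implicitly extend $W_{\tau_{i-1}}$ and $H_i$ beyond the stopping event when dropping the indicator inside the $p^*$-moment, exactly as in the proof of Theorem~\ref{th:tot_weight}. One cosmetic point: $Z_{\tau_i^-}=(\tau_i-\tau_{i-1},\,X_{\tau_{i-1}},\,W_{\tau_{i-1}})$, i.e.\ the elapsed-time coordinate is $\Delta\tau_{i-1}$ rather than $\Delta\tau_i$; this does not affect your estimate since the dependence of $g_f$ on $u$ is absorbed into $\|f\|_\infty$.
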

\begin{proof}
Recall that  $\mathcal{S} \subseteq
  \mathcal{D}(\mathcal{L})\cap M_b(\texttt{X}).$ Fix $f \in
  \mathcal{S}$ and let $C:=\max\{1, \|f\|\}.$ Recall the
definition of $k-$th incremental weight $H_k$ and define
$H_k^*:=C(|H_k|+1) \geq 1.$ Let $W_{\tau_{i}}^*$ and $\WW{\th}^*$ be the
analogues of $W_{\tau_{i}}$ and $\WW{\th},$ respectively, 
built from $H_k^*$ instead of $H_k.$   Note that if $\{H_k\}$ satisfies
Assumption~\ref{assu:momentunif_bdd}, so does  $\{H_k^*\}.$ Hence
Theorem \ref{th:tot_weight} yields $W_{t}^* \in L^p$
for all $p< p^*$ and $t\leq T.$ Now
compute
\begin{align*}
\mathbb{E}&\left( \sum_{i=1}^{n_t}\left| g_f(Z_{\tau_i}) -
    g_f(Z_{\tau_i^{-}})\right|\right) \; = \; 
\mathbb{E}\left( \sum_{i=1}^{n_t}\left| g_f(0, X_{\tau_i}, W_{\tau_i}) -
    g_f(\tau_i - \tau_{i-1}, X_{\tau_{i-1}},
                                            W_{\tau_{i-1}})\right|\right)
  \\
& = \; \mathbb{E}\left( \sum_{i=1}^{n_t}\left|
W_{\tau_{i-1}}\left(H_i \int_{\texttt{X}}f(y)
  \prd{X_{\tau_{i}}}{y}{0}{\theta_{i}}\mbox{d}y - \int_{\texttt{X}}f(y)
  \prd{X_{\tau_{i-1}}}{y}{\tau_i - \tau_{i-1}}{\theta_{i-1}}\mbox{d}y
\right) \right|\right) \\
& \leq \; 
\mathbb{E}\left( \sum_{i=1}^{n_t}\left|
W_{\tau_{i-1}}\right| \left(\left| H_i \right| C + C\right) \right)
\;  \leq \;  \mathbb{E}\left( \sum_{i=1}^{n_t}\left|
W_{\tau_{i}}^*\right|  \right) \; \leq \;  
\mathbb{E}\left( \sum_{i=1}^{n_t}\left|
W_{n_t}^*\right|  \right)
\\ & = \; \mathbb{E}\left( n_t \left|
W_{t}^*\right|  \right) \; \leq \;  \left( \mathbb{E}\left(
                                    n_t^{\frac{p}{p-1}}\right) \right)^{\frac{p-1}{p}} \left( \mathbb{E}\left(
                                    \left|
W_{t}^*\right|^{p}\right) \right)^{\frac{1}{p}} \; <\; \infty,
\end{align*}
where $1<p<p^*$ and we used that $n_t$ has all moments by Assumption~\ref{assu:momentunif_bdd_gen_f}.
\end{proof}
A key step to verifying unbiasedness of CIS is to compute how the generator
acts on functions of the form \eqref{def_gf}.
 \begin{lemma}\label{lem_Af}
Let Assumptions \ref{aasu:Komogorovl_unique} and \ref{aasu:test_f} be
satisfied and let $g_f$ be given by \eqref{def_gf}. Then 
\begin{eqnarray}\label{eq_Af}
\mathcal{A}g_f(u,x,w) & = & w\int_{\texttt{X}}f(y)
  \mathcal{K}\prd{x}{y}{u}{\theta(x)}\mbox{d}y.
\end{eqnarray}
\end{lemma}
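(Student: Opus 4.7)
The plan is to invoke Theorem \ref{th:PDP_gen} directly with the chosen $g_f,$ since the preceding lemma together with Assumptions \ref{aasu:Komogorovl_unique} and \ref{aasu:test_f}(a) already secures $g_f \in \mathcal{D}(\mathcal{A})$ and hence the expression \eqref{generator_form}. Writing $\theta = \theta(x)$ throughout, I would plug $g_f$ into \eqref{generator_form} and simplify each of the three resulting pieces, using in turn (i) the forward equation for $q_\theta$, (ii) the fact that $q_{\theta(y)}(y,\cdot,0)$ acts as a delta at $y,$ and (iii) the definition \eqref{eq:iweight} of $\rho_\theta.$

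For the first piece, $\tfrac{\partial}{\partial u}g_f(u,x,w)$ pulls the derivative under the integral via Assumption \ref{aasu:test_f}(b), equation \eqref{deriv:q}, and then the forward equation \eqref{eq:forward_proposal} for the proposal, valid by Assumption \ref{aasu:Komogorovl_unique}(b), gives
\begin{equation*}
\frac{\partial}{\partial u}g_f(u,x,w) \;=\; w\int_{\texttt{X}} f(y)\,\mathcal{K}_{\theta}\prd{x}{y}{u}{\theta}\,\mbox{d}y.
\end{equation*}
For the jump piece I would evaluate $g_f$ at the post-jump state $(0,y,w\rho_\theta(x,y,u))$. Because $q_{\theta(y)}(y,\cdot,0)$ is the transition density at lag zero (i.e.\ a point mass at $y$), one obtains $g_f\big(0,y,w\rho_\theta(x,y,u)\big) = w\rho_\theta(x,y,u)\,f(y).$ The second summand inside the jump integral is $g_f(u,x,w)$ itself, which factors out since $\int q_\theta(x,y,u)\,\mbox{d}y = 1.$

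The key algebraic step is then to rewrite $\lambda(u)\rho_\theta(x,y,u)\,q_\theta(x,y,u)$ using \eqref{eq:iweight} as $\lambda(u)\,q_\theta(x,y,u) + [\mathcal{K}-\mathcal{K}_\theta]\,q_\theta(x,Y,u)\big|_{Y=y}.$ After multiplying by $f(y)$ and integrating, the $\lambda(u)\,q_\theta(x,y,u)$ term exactly cancels the $-\lambda(u)g_f(u,x,w)$ from the other side of the jump integral, leaving
\begin{equation*}
\lambda(u)\int_{\texttt{X}}\Big(g_f(0,y,w\rho_\theta(x,y,u))-g_f(u,x,w)\Big)\prd{x}{y}{u}{\theta}\mbox{d}y \;=\; w\int_{\texttt{X}} f(y)\,[\mathcal{K}-\mathcal{K}_\theta]\,\prd{x}{y}{u}{\theta}\,\mbox{d}y.
\end{equation*}
Adding this to the derivative piece, the $\mathcal{K}_\theta$ contributions cancel and I obtain precisely \eqref{eq_Af}.

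There is essentially no analytic obstacle: measurability and absolute integrability were absorbed into Assumption \ref{aasu:test_f}, and $p^*$-integrability of the weight from Theorem \ref{th:tot_weight} was already used in the lemma preceding this one. The only point requiring a touch of care is the interpretation of $q_{\theta(y)}(y,z,0)$ at lag zero; this is built into the PDP construction in Definition \ref{CIS_PDP}, where the post-jump state has $u=0$ and process value $y$, so $\int f(z)\,q_{\theta(y)}(y,z,0)\,\mbox{d}z = f(y)$ by continuity/initial-condition of the proposal transition density. Everything else is just arithmetic cancellation driven by the very definition of the incremental weight $\rho_\theta$.
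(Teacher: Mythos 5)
Your proof is correct and follows essentially the same route as the paper's: plug $g_f$ into the generator formula \eqref{generator_form}, use the lag-zero delta property to get $g_f(0,y,w\rho)=w\rho f(y)$, split $\rho$ via its definition \eqref{eq:iweight} so that the $\lambda(u)q$ part cancels against $-\lambda(u)g_f$, and convert the time-derivative term to $\mathcal{K}_{\theta(x)}$ using \eqref{deriv:q} and the proposal's forward equation. The paper organises the same cancellation as $A+0+B$, but the algebra is identical.
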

\begin{proof}
Since Assumption \ref{aasu:test_f} $(a)$ is satisfied, using
expressions \eqref{generator_form} and \eqref{eq:iweight}, we obtain
\begin{eqnarray} \nonumber
\mathcal{A}g_f(u,x,w) 
& = & w \frac{\partial}{\partial u}\left\{\int_{\texttt{X}}f(y)  \prd{x}{y}{u}{\theta(x)}\mbox{d}y\right\}
 \\ \nonumber
& & + \; \lambda(u) w \int_{\texttt{X}}\left[ 
 \mrw{x}{y}{u}{\theta(x)}  f(y) -  \int_{\texttt{X}}f(y)  \prd{x}{y}{u}{\theta(x)}\mbox{d}y
\right] \prd{x}{y}{u}{\theta(x)}\mbox{d}y\\ \nonumber
& \stackrel{\textrm{\eqref{eq:iweight}}}{=} & w \frac{\partial}{\partial u}\left\{\int_{\texttt{X}}f(y)  \prd{x}{y}{u}{\theta(x)}\mbox{d}y\right\}
 \\ \nonumber
& & + \; \lambda(u) w \int_{\texttt{X}}\left[ 
f(y) -  \int_{\texttt{X}}f(y)  \prd{x}{y}{u}{\theta(x)}\mbox{d}y
\right] \prd{x}{y}{u}{\theta(x)}\mbox{d}y \\ \label{expr:Af}
& & + \; w \int_{\texttt{X}}
f(y) \lsb\tro-\prot{\theta(x)}
\rsb \prd{x}{y}{u}{\theta(x)} \mbox{d}y \; = \; A + 0 + B.
\end{eqnarray} 
By Assumptions \ref{aasu:test_f} $(b)$ and
\ref{aasu:Komogorovl_unique} $(b)$
\begin{eqnarray*}
A = w\int_{\texttt{X}}f(y)
  \mathcal{K}_{\theta(x)}\prd{x}{y}{u}{\theta(x)}\mbox{d}y,
\end{eqnarray*}
which combined with \eqref{expr:Af} yields \eqref{eq_Af}.
\end{proof}

\begin{theorem}\label{thm:generic_unbiased} Under Assumptions
  \ref{assu:momentunif_bdd}, \ref{assu:momentunif_bdd_gen_f},
  \ref{aasu:Komogorovl_unique} and \ref{aasu:test_f}, the CIS Algorithm
  {\algcis} is unbiased, i.e. 
  \begin{eqnarray}\label{eqn:generic_unbiased}
\trd{x_0}{\y}{T} & = & \ttrd{x_0}{\y}{T} \qquad \textrm{for every}
\quad T>0,
\end{eqnarray}
where $\ttrd{x_0}{\y}{T}$ is defined in equation \eqref{eqn:est_exp}.
\end{theorem}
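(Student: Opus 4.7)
The plan is to show that $\ttrd{x_0}{y}{T}$ satisfies the Kolmogorov forward equation~\eqref{eq:PDE} with initial condition $\delta_{x_0}$, and then invoke Assumption~\ref{aasu:Komogorovl_unique}(a) to conclude $\ttrd{x_0}{y}{T}=\trd{x_0}{y}{T}$. Since both objects are densities, I first pair them against an arbitrary test function $f\in\mathcal{S}$, verify the forward equation in weak form, and then strip $f$ off using the separating property~\ref{aasu:test_f}(d).

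Fix $f\in\mathcal{S}$. By~\eqref{def_gf} and~\eqref{eqn:est_exp},
\[
\int_{\texttt{X}} f(y)\,\ttrd{x_0}{y}{T}\,\mathrm{d}y \;=\; \ecis\!\left[g_f(Z_T)\right].
\]
I apply the Dynkin formula~\eqref{Dynkin} of Theorem~\ref{th:PDP_gen} to $g_f$; its hypotheses are provided by Assumption~\ref{aasu:test_f}(a) together with the preceding lemma (itself resting on Theorem~\ref{th:tot_weight}). Plugging in the expression for $\mathcal{A}g_f$ from Lemma~\ref{lem_Af} and evaluating $g_f(0,x_0,1)=\int f(y)\,\prd{x_0}{y}{0}{\theta_0}\,\mathrm{d}y = f(x_0)$ yields
\[
\ecis\!\left[g_f(Z_T)\right]\;=\;f(x_0)\;+\;\ecis\!\left[\int_0^T \WW{\lat{s}}\int_{\texttt{X}} f(y)\,\tro\,\prd{\XX{\lat{s}}}{y}{s-\lat{s}}{\theta(\XX{\lat{s}})}\,\mathrm{d}y\,\mathrm{d}s\right].
\]

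The key manoeuvre is to bounce $\tro$ off $q$ via the adjoint relation~\eqref{adjoint:q}, swap the $\ecis$ expectation with the spatial integral using Fubini together with the defining identity~\eqref{eqn:est_exp} of $\ttrd{x_0}{y}{s}$, and then bounce $\mathcal{L}$ back onto $\ttrd{x_0}{y}{s}$ using~\eqref{adjoint:p}. This collapses the right-hand side into
\[
f(x_0)\;+\;\int_0^T \int_{\texttt{X}} f(y)\,\tro\,\ttrd{x_0}{y}{s}\,\mathrm{d}y\,\mathrm{d}s.
\]
Differentiating in $T$ and commuting $\partial/\partial T$ with the spatial integral via~\eqref{deriv:p} gives the weak forward equation
\[
\int_{\texttt{X}} f(y)\,\frac{\partial}{\partial T}\ttrd{x_0}{y}{T}\,\mathrm{d}y \;=\; \int_{\texttt{X}} f(y)\,\tro\,\ttrd{x_0}{y}{T}\,\mathrm{d}y.
\]
Since $f$ was arbitrary in the separating family $\mathcal{S}$, Assumption~\ref{aasu:test_f}(d) upgrades this to the pointwise identity $\partial_T\ttrd{x_0}{y}{T}=\tro\,\ttrd{x_0}{y}{T}$. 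Combined with the initial condition $\ttrd{x_0}{\cdot}{0}=\delta_{x_0}$ inherited from $Z_0=(0,x_0,1)$ (read weakly against $\mathcal{S}$), Assumption~\ref{aasu:Komogorovl_unique}(a) forces $\ttrd{x_0}{y}{T}=\trd{x_0}{y}{T}$.

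The main obstacle I anticipate is the Fubini-type interchange of expectation with spatial and time integration in the adjoint manoeuvre: the weight $\WW{\lat{s}}$ only lies in $L^p$ and $\tro q$ need not be integrable in any obvious sense, so the justification of the swaps must combine the moment control from Theorem~\ref{th:tot_weight}, boundedness of $f\in M_b(\texttt{X})$, and the regularity of $q$ granted by Assumption~\ref{aasu:test_f}(b)--(c). The initial-condition claim at $T=0$ is also slightly delicate and should be interpreted weakly, by testing against $\mathcal{S}$.
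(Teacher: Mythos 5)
Your proposal is correct and follows essentially the same route as the paper's proof: Dynkin's formula applied to $g_f$, Lemma~\ref{lem_Af}, the two adjoint identities \eqref{adjoint:q} and \eqref{adjoint:p} with a Fubini interchange in between, differentiation in $T$ via \eqref{deriv:p}, the separating property, and uniqueness from Assumption~\ref{aasu:Komogorovl_unique}. Your explicit attention to the weak initial condition at $T=0$ and to the integrability justifications (via boundedness of $f$ and $\WW{T}\in L^p$ from Theorem~\ref{th:tot_weight}) matches, and slightly elaborates on, what the paper does.
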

\begin{proof}
Let $f \in \mathcal{S}.$ Since $f$ is bounded and by Theorem
\ref{th:tot_weight}  $\WW{\th}\in L^p$ for some $p>1,$ the RHS of
\eqref{G-U-1} below, exists by H\"older inequality, and by Fubini-Tonelli also the LHS of
\eqref{G-U-1} exists. Furthermore, by Assumption \ref{aasu:test_f}, Dynkin
formula \eqref{Dynkin} holds and 
writing CIS as PDP (c.f. Definition \ref{CIS_PDP}), yields
\begin{align} \label{G-U-1}
 \int_{\texttt{X}}f(y) \ttrd{x_0}{\y}{T} \mbox{d} y & = \ecis\left[ f(y) \WW{\th}
 \prd{\X{T}}{\y}{\th-\lat{\th}}{\theta_{\te{\th}}}\right]
 \\ \nonumber 
& =
\mathbb{E}\big[ g_f(Z_T)\big]  =  g_f(z_0) + 
\mathbb{E}\left[ \int_0^T \mathcal{A}g_f(Z_s)\mbox{d}s \right]
\\ \nonumber & =   f(x_0) + 
\int_0^T \mathbb{E} \left[ \mathcal{A}g_f(Z_s) \right] \mbox{d}s.
\end{align}
Now use Lemma \ref{lem_Af}, then Assumption \ref{aasu:test_f} $(c)$,
then Fubini-Tonelli, then Assumption \ref{aasu:test_f} $(c)$ again, to compute
\begin{align} \nonumber 
 \int_{\texttt{X}}f(y) \ttrd{x_0}{\y}{T} \mbox{d} y & 
= f(x_0) +
\int_0^T \mathbb{E} \left[ \WW{s} \int_{\texttt{X}}f(y)
  \mathcal{K}\prd{X_{\tau_s}}{y}{s-\tau_s}{\theta_{\te{s}}}\mbox{d}y           \right]        \mbox{d}s \\
  \nonumber
& 
= f(x_0) +
\int_0^T \mathbb{E} \left[ \WW{s} \int_{\texttt{X}}
  \prd{X_{\tau_s}}{y}{s-\tau_s}{\theta_{\te{s}}} \mathcal{L} f(y)\mbox{d}y           \right]        \mbox{d}s \\
  \nonumber
& = f(x_0) +
\int_0^T \left[
\int_{\texttt{X}}\ttrd{x_0}{\y}{s} \mathcal{L}f(y) \mbox{d} y   \right]
  \mbox{d}s \\
  \nonumber
& = f(x_0) +
\int_0^T \left[
\int_{\texttt{X}}f(y) \mathcal{K}\ttrd{x_0}{\y}{s} \mbox{d} y   \right]        \mbox{d}s.
\end{align}
Differentiate the above equation in $T,$ use Assumption
\ref{aasu:test_f} $(b)$ to interchange differentiation and integration of
the LHS, and then Assumption \ref{aasu:test_f} $(d)$ to conclude that
\begin{equation*}
\frac{\partial}{\partial t}\ttrd{x_0}{\y}{t} = \mathcal{K} \ttrd{x_0}{\y}{t},
\end{equation*}
which by Assumption \ref{aasu:Komogorovl_unique} implies \eqref{eqn:generic_unbiased}.
\end{proof}

\section{The copycat CIS for diffusions} \label{sec:CIS_diff}

We now shift our attention to the case where the target CTMP is the
diffusion process \eqref{eq:SDE}, and present in detail how to choose
the renewal intensity and proposal density for a CIS implementation in
this context.

If $\pin>0$ and $0<\pex<1$ are two user-defined constants then the
renewal intensity is chosen as \begin{equation} \label{eq:lambda}
\lambda(s)=\pin
s^{\pex-1}. \end{equation} Simulating renewal times from such a process is
straightforward using the inverse transform. 
As we shall see
later, this choice of $\lambda(s)$ ensures that the incremental weight
is well behaved.

The proposal density is constructed by `adapting' it to the current
state $(\iat{\ck}, \x{\ck}, \w{\ck})$ of the CIS algorithm. Specifically,
the process is propagated at the next renewal time $\iat{\nk}$ according
to
\begin{align*}
  \d \XX{s} = \bb(\x{\ck})\d s + \dc(\x{\ck})\d\B_s,~~~~X_0=\x{\ck},\, s\geq 0.
\end{align*}
Notice that the drift and diffusion coefficient are constant and fixed
at their value given $\x{\ck}$, the value of the diffusion at the most recent renewal time. Hence, the proposal process is Gaussian
with tractable transition density parameterised by
$\theta_{\ck}=(\bb(\x{\ck}), \dc(\x{\ck}))$ and
$\prdcc{\ck}{x}{y}{t} = \gd\{y; x+t\bb(\x{\ck}), t\dm(\x{\ck})\}$.

Hereafter, the above settings are assumed to be the default for a CIS
implementation for diffusions, unless otherwise stated. We refer to
the resulting algorithm as copycat CIS since the proposal process
adapts at every renewal event, and denote it simply by CIS. The
algorithm is presented in Algorithm~{\algcc}.

\begin{description}

\item[{\bf Algorithm \algcc}:] {\it Copycat CIS for diffusions}.

\item[{\it Input}:] $\th$ a time to stop and $x_0$ the initial value.

\item[{\it Initialise}:] Set $k=\kstart, \iat{0}=0, \iw{0}=1, \x{0}=x_0,
  \w{0}=1$ and $\theta_0=(\bb(\x{0}), \dc(\x{0})))$.
  
\item[{\it Iterate}:] Repeat
  \begin{enumerate}
    
  \item draw $U\sim\Un(0,1)$, and set $\iat{\nk}=\iat{\ck} + \{-\pex\log(U)/\pin\}^{1/\pex}$,
    
  \item if $\iat{\nk} > \th$, set $\ww{\th}=\w{\ck}$ and stop.
  \item\label{item:cis_update} Otherwise

    \begin{enumerate}
    \item\label{item:cis_up_draw} draw $\x{\nk}\sim
    \prd{\x{\ck}}{\cdot}{\dt{k}}{\theta_{\ck}}$,
  \item compute the incremental weight 
    \begin{align*}
      \iw{\nk} =
      \mrw{\x{\ck}}{\x{\nk}}{\dt{k}}{\theta_{\ck}}\,\,\mbox{as
        in \eqref{eq:iweightcc}}
    \end{align*}
    and update the weight $\w{\nk}=\w{\ck}\iw{\nk}$, 

  \item set $\theta_{\nk}=(\bb(\x{\nk}), \dc(\x{\nk}))$ and $k=k+1$.
    \end{enumerate}
  \end{enumerate}

\item[{\it Output}:] A distribution for the value of the process at
  time $\th$, namely \\ $\prd{\x{\ck}}{\cdot}{\th-\iat{\ck}}{\theta_{\ck}}$, and a
  corresponding importance sampling weight $\ww{\th}$.

\end{description}

The formula for the incremental weight based on the above settings is
derived in the Lemma below, which is proved in the Appendix.

\begin{lemma}\label{lemma:iweightcc}
  The incremental weight $\mrwcc{}{x}{y}{u}$ for the copycat CIS
  presented in Algorithm {\algcc} is given by
  
\begin{align}\label{eq:iweightcc}
  \mrwcc{}{x}{y}{u}= & 1 +
  \frac{1}{\lambda(u)}\bigg(\frac{1}{2}\Big\{[\dm(y)-\dm(x)]:\qdderf{\theta}{x}{y}{u} +
  \dmdder(y):\one_{\nd\times\nd}
  \Big\}\notag\\
&+\lsb\dmder(y) \one_{\nd\times 1} - \bb(y) +
\bb(x)\rsb\cdot\qderf{\theta}{x}{y}{u} - \bder(y)\cdot\one_{\nd\times 1}\bigg),
\end{align}
where $\bder, \dmder, \dmdder$ as in Section \ref{section:notation} and
\begin{align}
  \label{eq:qders}
  \qderf{\theta}{x}{y}{u}&:=\frac{\nabla_y \prdcc{}{x}{y}{u}}{\prdcc{}{x}{y}{u}} =
  -\frac{\dm^{-1}(x)}{u}[y-x-u\bb(x)],\\
\qdderf{\theta}{x}{y}{u}&:=\frac{H_y \prdcc{}{x}{y}{u}}{\prdcc{}{x}{y}{u}} = \qderf{}{x}{y}{u}\qderft{}{x}{y}{u} - \frac{\dm^{-1}(x)}{u}.
\end{align}
\end{lemma}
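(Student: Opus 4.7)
The strategy is a direct computation: substitute the Gaussian proposal $\prdcc{}{x}{y}{u} = \gd(y;\,x+u\bb(x),\,u\dm(x))$ into the generic incremental-weight formula~\eqref{eq:iweight}, and show that the ratio $\{[\tro - \prot{\theta}]\prdcc{}{x}{Y}{u}\}\big|_{Y=y}/\prdcc{}{x}{y}{u}$ collapses into the expression displayed in \eqref{eq:iweightcc}. The two forward operators act very differently on the proposal density, which is what makes the calculation nontrivial but ultimately tractable.

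First I compute $\prot{\theta}\prdcc{}{x}{y}{u}$. Because the proposal has drift and diffusion matrix that are frozen at the value $x$ (hence constant in $y$), the divergence collapses to a gradient and the second-derivative term pulls $\dm(x)$ out; dividing by $\prdcc{}{x}{y}{u}$ gives
\[
\prot{\theta}\prdcc{}{x}{y}{u}/\prdcc{}{x}{y}{u} \;=\; -\bb(x)\cdot \qderf{\theta}{x}{y}{u} + \tfrac{1}{2}\dm(x):\qdderf{\theta}{x}{y}{u}.
\]
The formulae for $\qderf{}{}{}{}$ and $\qdderf{}{}{}{}$ are then the standard score and Hessian-over-density identities for the Gaussian $\gd(\cdot;\,x+u\bb(x),\,u\dm(x))$.

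Next I compute $\tro\prdcc{}{x}{y}{u}$, which is the main technical step. The first-order term expands by the product rule as $-\bder(y)\cdot\one\,\prdcc{}{x}{y}{u} - \bb(y)\cdot\nabla_y \prdcc{}{x}{y}{u}$. The second-order term requires two applications of the product rule on $\sum_{i,j}\partial_{y_i}\partial_{y_j}[\dm_{ij}(y)\prdcc{}{x}{y}{u}]$, yielding four contributions: the pure second derivative of $\dm$ gives $\dmdder(y):\one$; the pure second derivative of $\prdcc{}{x}{y}{u}$ gives $\dm(y):H_y \prdcc{}{x}{y}{u}$; and there are two mixed terms
\[
\sum_{i,j}(\partial_{y_j}\dm_{ij})(\partial_{y_i}\prdcc{}{x}{y}{u}) \quad\text{and}\quad \sum_{i,j}(\partial_{y_i}\dm_{ij})(\partial_{y_j}\prdcc{}{x}{y}{u}).
\]
The first mixed sum is exactly $\nabla_y \prdcc{}{x}{y}{u}\cdot \dmder(y)\one$ from the definition of $\dmder$. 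The second requires the observation that symmetry of $\dm=\dc\dc^T$ gives $\partial_{y_i}\dm_{ij} = \partial_{y_i}\dm_{ji} = [\dmder(y)]_{ji}$, after which an index relabelling shows the second mixed sum equals the first. So the two mixed terms combine to $2\,\nabla_y\prdcc{}{x}{y}{u}\cdot\dmder(y)\one$, and the $\tfrac{1}{2}$ of the forward operator cancels the $2$.

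Dividing the assembled expression for $\tro\prdcc{}{x}{y}{u} - \prot{\theta}\prdcc{}{x}{y}{u}$ by $\prdcc{}{x}{y}{u}$ gives
\[
-\bder(y)\cdot\one + [\bb(x)-\bb(y)+\dmder(y)\one]\cdot\qderf{\theta}{x}{y}{u} + \tfrac{1}{2}\{[\dm(y)-\dm(x)]:\qdderf{\theta}{x}{y}{u} + \dmdder(y):\one\},
\]
which upon dividing by $\lambda(u)$ and adding $1$ matches \eqref{eq:iweightcc}. The only real obstacle is the mixed-derivative bookkeeping above; everything else is linearity and the elementary Gaussian derivative identities.
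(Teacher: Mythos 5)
Your proposal is correct and follows essentially the same route as the paper's Appendix proof: expand $\tro$ applied to the Gaussian proposal via the product rule, use the symmetry of $\dm$ to merge the two mixed derivative sums into $\dmder(y)\one_{\nd\times 1}\cdot\qderf{\theta}{x}{y}{u}$ (cancelling the $\tfrac{1}{2}$), compute $\prot{\theta}$ with the frozen coefficients, and subtract. No gaps.
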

The expressions above are presented conveniently in a matrix form
which is very useful from a programming and computational point of
view. In the univariate case ($\nd=1$), the formula for the
incremental weight simplifies to
\begin{align} \label{eq:CPW}
  \mrwcc{}{x}{y}{u}=&1 + \frac{1}{\lambda(u)}\Bigg(
  \frac{1}{\dm(x)u}\Bigg\{ \frac{\dm(y)-\dm(x)}{2}\lsb
  \frac{(y-x-u\bb(x))^2}{\dm(x)u} - 1 \rsb \\
  & \nonumber + [\bb(y)-\bb(x)-\dm^{'}(y)][ y-x-u\bb(x)]\Bigg\} +
  \frac{\dm^{''}(y)}{2} - \bb^{'}(y)\Bigg).
\end{align}
%


\section{Unbiasedness and stability of the copycat CIS for diffusions} \label{sec:validity-diffusions}

In this section we deal with stability and unbiasedness of the copycat
CIS for diffusions detailed in Algorithm {\algcc} in Section \ref{sec:CIS_diff}.
The settings for the copycat CIS, in particular the choice of the
renewal intensity function along with assumptions on drift and
diffusion coefficients will guarantee that the incremental weights
$\{\iw{k}\}_k$ are uniformly bounded, thus ensuring the stability of
the overall weight returned by the algorithm. We start with a
cautionary example followed by results on stability of weights and unbiasedness.

\subsection{A cautionary example} 

To understand the issues, consider the single incremental weight for the copycat scheme for a univariate diffusion (\ref{eq:CPW}).
First condition on $x$ and $u$, and consider this weight as a function of the new value of the process $Y$.
We can write $Y=x+u\bb(x)+\sqrt{u\dm(x)}Z$ where $Z$ is a standard normal random variable. Then we get
\begin{align*} 
  \mrwcc{}{x}{Y}{u}=&1 + \frac{1}{\lambda(u)}\Bigg(
  \frac{1}{\dm(x)u}\Bigg\{ \frac{\dm(Y)-\dm(x)}{2}\lsb
  Z^2 - 1 \rsb \\
  &+ [\bb(Y)-\bb(x)-\dm^{'}(Y)][ \sqrt{u\dm(x)}Z]\Bigg\} +
  \frac{\dm^{''}(Y)}{2} - \bb^{'}(Y)\Bigg).
\end{align*}
The term that causes problems with stability of the incremental weight is
\begin{equation}
 \frac{1}{\lambda(u)}
  \frac{1}{\dm(x)u}\Bigg\{ \frac{\dm(Y)-\dm(x)}{2}\lsb
  Z^2 - 1 \rsb \Bigg\},
\label{eq:incw}
\end{equation}
due to the $u$ in the denominator. Thus any instability in the weight is caused by small values of $u$, ie when
two renewal events occur very close to each other.

We have to be careful that the variability of this contribution to the incremental
weight does not blow-up too quickly as $u\rightarrow 0$. However, we immediately
see that if $\gamma $ is for example a Lipschitz continuous function and $Y$ is ``close to'' $x$, then the
term in the numerator of the curly bracket of (\ref{eq:incw}) has the effect of slowing
down this explosion. In fact, we can see  that
\[
 \dm(Y)-\dm(x)=\dm(x+u\bb(x)+\sqrt{u\dm(x)}Z)-\dm(x)=O_p(\sqrt{u}).
\]
Now when considering whether the incremental weight is in $L^1$, we are required to integrate with respect to the renewal distribution density
which includes a term $\lambda (u)$ which cancels with the corresponding term in the denominator of  (\ref{eq:incw})
and thus the integrand  will be $O_p({u^{-1/2}})$ for small $u$ and hence is integrable. This is the crux of the integrability
stability provided by the copycat scheme.

We note that, were we to be using a different scheme from the copycat one, where the proposal process variance was not
chosen to be close to $\gamma (x)$, then the integrand for the $L^1$ integrability will typically be $O_p({u^{-1}})$ and the
incremental weight will thus fail to be in $L^1$. Therefore $L^1$ stability of the weights will usually fail if we do not adopt
the copycat scheme.

Now we return to the copycat scheme and informally consider integrability in $L^p$ for some $p>1$.  Note that the
choice of $\alpha $ made no difference when considering $L^1$ stability as the term involving $\lambda $ cancelled.
However for $p>1$ this cancellation does not occur. By choosing $\lambda(u)\propto u^{\alpha-1}$ for some $\alpha<1$
the integrand for $L^p$ stability resembles
\[
{\lambda (u) (\gamma (Y) - \gamma (x) )^p \over u^p \lambda (u)^p} = O_p( u^{-p/2+(1-\alpha )(p-1)}  )
\]
for small $u$. The integrand is thus integrable for fixed $p$ whenever
\[
\alpha < {1 \over 2(1-p^{-1} ) }
\]
and in particular this holds whenever $\alpha \le 1/2$. In conclusion,
for the simplified setting of this informal calculation, $\alpha \le 1/2$ is sufficient to ensure finiteness of all polynomial moments of the incremental weights.

It is natural to ask whether the strategy of setting  $\lambda(u)\propto u^{\alpha-1}$ for some $\alpha<1$ can save non-copycat schemes where the proposal process variance was not chosen to be close to $\gamma (x)$.  However a similar calculation yields that the $L^p$ integrand has order
$u^{(1-\alpha ) ( p-1) -p}$ which is not integrable at $0$ for any choice of $0\le \alpha <1$, $p\ge 1$.

These informal calculations for the single incremental weight in the
univariate Lipshitz case demonstrate that the copycat scheme is more or less the only possible scheme to ensure stability of the incremental weights,
and this explains why we concentrate on this case from now on.
Although the arguments for the rest of this section are quite
technical, generalisation of the above informal calculations form their
core.

\subsection{Assumptions}

The following regularity conditions will be used to establish the
stability of weights and unbiasedness. They are stated in a rather
weak form with respect to the geometry induced by the diffusion
coefficient $\sigma,$ c.f. the remarks below about simplifying and
verifying them.

\begin{assumption}[Moments]\label{assu_for_CIS_moments}~
  \begin{itemize}
    \item[(i)] 
Polynomial growth: there exists $C < \infty,$ and $0< \kappa_{\bb} < \infty,$ s.t. for all $x,y \in
      \mathbb{R},$   
\begin{equation}
          \nonumber 
          \|\dc^{-1}(x)(\bb(y) - \bb(x))\| \leq C\big( 1+  \| \dc^{-1}(x)(y-x)
          \|^{\kappa_{\bb}}\big).          
        \end{equation}
\item[(ii)] $ \|\dc^{-1}(x)\bb(x)\| \leq C, $ for
  some $C< \infty$ and all $x \in \mR.$
    \item[(iii)] The derivatives $\partial \bb_i(x) / \partial x_i $
      exist and are bounded above, i.e. $\| \bder(x)\|  \leq C$
    for all $x \in \mathbb{R}^d$ and some $C < \infty.$ 
    \item[(iv)] The infinitesimal variance $\dm(x)
      = \dc(x) \dc^T(x)$ is such that the
      derivatives \begin{equation} \nonumber  \partial
      \dm_{ij}(x) / \partial x_j \qquad \textrm{and} \qquad \partial^2
      \dm_{ij}(x) / \partial x_i \partial x_j \end{equation} exist and are
      bounded above, i.e. for all $x \in \mathbb{R}^d$ and
      some $C < \infty,$ 
\begin{equation} \nonumber  \|\dmder(x)\| \leq C
        \qquad \textrm{and} \qquad \|\dmdder(x)\| \leq
        C.  \end{equation} 
    \item[(v)] The infinitesimal variance is uniformly bounded from
      below, i.e. there exists $c > 0,$ s.t. for all $x, z \in
      \mathbb{R}^d,$
      \begin{equation}
         \nonumber 
         z \cdot \dm(x) z  \geq c z \cdot z.
      \end{equation}

      \item[(vi)] The infinitesimal variance is locally spectrally H\"older
        continuous of order  $0< \kappa_{\dm} \leq 1$ with polynomial
        growth, in the following
        sense. There exists $C < \infty$ and $\kappa_{\dm} < m <
        \infty,$ s.t. for all $x,y \in
      \mathbb{R}^d,$   
        \begin{equation}
          \nonumber 
          \|\dc^{-1}(x)(\dm(y) - \dm(x))\| \leq C \big( \|\dc^{-1}(x)
          (y-x) \|^{\kappa_{\dm}}+ \|\dc^{-1}(x) (y-x) \|^{m} \big).
        \end{equation}
     \item[(vii)]  The diffusion coefficient is locally spectrally H\"older
        continuous of order  $0< \kappa_{\dc} \leq 1$ with polynomial
        growth, in the following
        sense. There exists $C < \infty$ and $\kappa_{\dc} < m <
        \infty,$ s.t. for all $x,y \in
      \mathbb{R}^d,$   
        \begin{equation}
          \nonumber 
          \|\dc^{-1}(x)(\dc(y) - \dc(x))\| \leq C  \big( \|\dc^{-1}(x)
          (y-x) \|^{\kappa_{\dc}}+ \|\dc^{-1}(x) (y-x) \|^{m} \big). \end{equation}
  \end{itemize}
\end{assumption}

\begin{remark} \label{remark_ver_assu} As for verifying the above
  conditions, note that $(i)$ is implied by  $(v)$ and any of 
  \begin{itemize}
  \item[$(i')$] The drift is bounded above, i.e. $\|\bb(x)\| \leq C,$
    for all $x \in \mathbb{R}^d$ and some $C < \infty,$
  \item[$(i'')$]  $  \|\bb(y) - \bb(x)\|  \leq  C \big( 1+  \| \dc^{-1}(x)(y-x)
          \|^{\kappa_{\bb}}\big) \; $ for all $x \in \mathbb{R}^d,$
          some $C < \infty.$
  \end{itemize} 
  Similarly, if $(v) $ holds, then $(vi)$ is implied by any of 
  \begin{itemize}
  \item[$(vi')$] $ \|\dm(y) - \dm(x)\|  \leq  C \big( \|\dc^{-1}(x)
     (y-x) \|^{\kappa_{\dm}} + \|\dc^{-1}(x)
     (y-x) \|^{m} \big), $  for all $x \in \mathbb{R}^d$ and some $C < \infty,$
\item[$(vi'')$]  The infinitesimal variance is H\"older continous and
  bounded above, i.e. $ \|\dm(y) - \dm(x)\|  \leq  C \|
     y-x \|^{\kappa_{\dm}}$ and  $\|\dm(x)\| \leq C$
    for all $x \in \mathbb{R}^d$ and some $C < \infty.$ 
  \end{itemize}
Morover, if $(v) $ holds, then $(vii)$ is implied by any of 
  \begin{itemize}
  \item[$(vii')$] $ \|\dc(y) - \dc(x)\|  \leq   C \big( \|\dc^{-1}(x)
     (y-x) \|^{\kappa_{\dc}} + \|\dc^{-1}(x)
     (y-x) \|^{m} \big), $  for all $x \in \mathbb{R}^d$ and some $C < \infty,$
\item[$(vii'')$]  The infinitesimal variance is H\"older continous and
  bounded above, i.e. $ \|\dc(y) - \dc(x)\|  \leq  C \|
     y-x \|^{\kappa_{\dc}}$ and  $\|\dc(x)\| \leq C$
    for all $x \in \mathbb{R}^d$ and some $C < \infty.$ 
  \end{itemize}
\end{remark}

\begin{remark} \label{remark_norms}
  Assumption \ref{assu_for_CIS_moments} is stated using the Euclidean
  matrix norm, however due to
  equivalency of norms in normed finite-dimensional vector spaces, it can be equivalently rephrased
  in terms of other matrix norms. In particular, in case of the $L^2$
  operator norm $\|\cdot\|_2$,
  that we shall use in the sequel, the latter condition of $(iv)$ translates into upper
  bounds on the largest eigenvalue of $\dmdder(x) $ and $(v)$ into the smallest
  eigenvalue of $\dm(x)$ being bounded away from~$0.$
\end{remark}

\begin{remark} \label{remark_sigmas}
  In Assumption \ref{assu_for_CIS_moments} the diffusion coefficient
  $\dc(x)$ can be replaced by any other version $\dc_*(x)$,
  s.t. $\dc_*(x)\dc_*^T(x) = \dm(x).$
\end{remark}

\subsection{Stability of weights}

In view of generic stability results of Section
\ref{sec:validity-cis}, it is enough to verify Assumptions
\ref{assu:momentunif_bdd} and  \ref{assu:momentunif_bdd_gen_f} and
conclude from Theorem \ref{th:tot_weight}. We shall follow this path and prove the following result.

\begin{theorem}\label{thm_weights}
  Consider Copycat CIS as detailed in Algorithm {\algcc}. The CIS weights
  $W_T$ are in $L^{p}$  for every time horizon $T$ and for every $0 < p < p^*$ 
under the following conditions from Assumption~\ref{assu_for_CIS_moments},
  \begin{itemize}
  \item[(a)]  Conditions $(i),$ $(ii),(iii),(iv), (v)$ and $(vi)$ 
	 with  $(\alpha - \kappa_{\dm}/2) p^* < \alpha,$ or
	 \item[(b)]  Conditions $(i),$ $(ii),(iii),(iv), (v)$ and $(vii)$
	 with  $(\alpha - \kappa_{\dc}/2) p^* < \alpha,$
  \end{itemize}
\end{theorem}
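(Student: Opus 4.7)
My plan is to verify Assumptions~\ref{assu:momentunif_bdd} and~\ref{assu:momentunif_bdd_gen_f} for the Copycat CIS and then invoke Theorem~\ref{th:tot_weight}. Assumption~\ref{assu:momentunif_bdd_gen_f} is the easier one: the renewal intensity $\lambda(s)=\pin s^{\pex-1}$ produces iid interarrival times $U$ with Laplace transform $\phi(s):=\ecis[e^{-sU}]$ satisfying $\phi(s)\to 0$ as $s\to\infty$, since $U>0$ a.s. A Chernoff bound $\mP(\te{T}\geq j)=\mP(U_1+\cdots+U_j\leq T)\leq\phi(s)^j e^{sT}$ then gives geometric decay of $\mP(\te{T}\geq j)$ at an arbitrary rate, hence $\ecis[e^{c\te{T}}]<\infty$ for every $c\in\mR$.

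For Assumption~\ref{assu:momentunif_bdd} the remark following it reduces the task to showing $\sup_{x}\ecis[|H_1|^{p^*}\mid X_0=x]<\infty$, since in Copycat CIS the parameter $\theta_{\iat{k}}$ is a deterministic function of $X_{\iat{k}}$ and $\lambda$ is not state dependent. I would fix $x$ and reparametrise the proposal as $y=x+u\bb(x)+\sqrt{u}\dc(x)Z$ with $Z\sim\gd(0,I_\nd)$, so that $\qderf{\theta}{x}{y}{u}=-\dc^{-T}(x)Z/\sqrt{u}$ and $\qdderf{\theta}{x}{y}{u}=\dc^{-T}(x)(ZZ^T-I_\nd)\dc^{-1}(x)/u$. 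Substituted into~\eqref{eq:iweightcc}, the factor $1/\lambda(u)\propto u^{1-\pex}$ multiplies three pieces: a bounded remainder involving $\dmdder(y)$ and $\bder(y)$ controlled by~(iii) and~(iv); a dominant bilinear piece $[\dm(y)-\dm(x)]:\qdderf{\theta}{x}{y}{u}$; and a subdominant linear piece $[\dmder(y)\one_{\nd\times 1}-\bb(y)+\bb(x)]\cdot\qderf{\theta}{x}{y}{u}$.

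The dominant piece equals $\trace(A(ZZ^T-I_\nd))/u$ with $A:=\dc^{-1}(x)[\dm(y)-\dm(x)]\dc^{-T}(x)$. Condition~(ii) gives $\|\dc^{-1}(x)(y-x)\|=O(\sqrt{u}(1+\|Z\|))$ uniformly in $x$ for bounded $u$, so~(vi) yields $\|A\|\leq C(u^{\kappa_\dm/2}+u^{m/2})\,\mathrm{poly}(\|Z\|)$; hence the dominant piece divided by $\lambda(u)$ has magnitude $Cu^{\kappa_\dm/2-\pex}$ times a polynomial in $\|Z\|$. Its $p^*$-th moment over $Z$ is finite because $Z$ is Gaussian, while the remaining integration in $u$ against the renewal density $\lambda(u)\exp(-\pin u^\pex/\pex)$, whose leading behaviour near $0$ is $\pin u^{\pex-1}$, is finite iff $p^*(\kappa_\dm/2-\pex)+\pex-1>-1$, i.e.\ $(\pex-\kappa_\dm/2)p^*<\pex$, which is precisely~(a). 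The subdominant linear piece is handled analogously using~(i)--(ii), producing an integrand of order $u^{1/2-\pex}$ that is less singular and finite under the same constraint; large-$u$ integrability is automatic because the exponential tail of the renewal density annihilates the polynomial-in-$y$ growth allowed by~(i), (vi), (vii). For~(b) the same route works once~(vii) is promoted to a Hölder-type control on $\dm$ via the identity $\dm(y)-\dm(x)=[\dc(y)-\dc(x)]\dc(y)^T+\dc(x)[\dc(y)-\dc(x)]^T$ combined with~(v), yielding the analogue of the preceding bound with $\kappa_\dc$ in place of $\kappa_\dm$. The main obstacle is the careful bookkeeping of matrix norms and the propagation of the local polynomial-growth factors in $\|Z\|$: one must ensure that these Gaussian polynomial factors do not worsen the $u$-exponent derived from the Hölder index, which is exactly the delicate issue foreshadowed by the cautionary example preceding the theorem.
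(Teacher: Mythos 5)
Your proposal is correct, and the part dealing with Assumption~\ref{assu:momentunif_bdd} follows essentially the same route as the paper's Proposition~\ref{prop_CC_incr_weight_unif_bdd}: the reparametrisation $y=x+u\bb(x)+\sqrt{u}\dc(x)Z$, the resulting expressions for $\qder$ and $\qdder$, the three-term decomposition of the incremental weight, the bound $\|\dc^{-1}(x)(\dm(y)-\dm(x))\dc^{-T}(x)\|\leq C(u^{\kappa_{\dm}/2}+u^{m/2})\,\mathrm{poly}(\|Z\|)$ via conditions $(ii)$, $(v)$, $(vi)$, and the integrability threshold $p^*(\alpha-\kappa_{\dm}/2)<\alpha$ obtained by integrating $u^{\alpha(1-p^*)-1+p^*\kappa_{\dm}/2}$ near zero all match the paper's computation, as does the observation that the $D_1$ term is less singular because $\kappa_{\dm}\leq 1$. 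For case $(b)$ your identity $\dm(y)-\dm(x)=[\dc(y)-\dc(x)]\dc(y)^T+\dc(x)[\dc(y)-\dc(x)]^T$ differs slightly from the paper's three-term expansion around $\dc(x)$, but it works once you note that $\dc(y)^T\dc^{-T}(x)=(I+\dc^{-1}(x)[\dc(y)-\dc(x)])^T$ is controlled by $(vii)$; both give the $u^{\kappa_{\dc}/2}$ cross-term as the dominant singularity.

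Where you genuinely diverge is Assumption~\ref{assu:momentunif_bdd_gen_f}. The paper's Lemma~\ref{lemma:tails} establishes the super-geometric bound $\mP(n_t\geq j)\leq Cj^{-c\alpha j}$ via a stochastic comparison with a generalized negative binomial distribution followed by an optimisation over the Bernoulli discretisation scale $a$; this is a fairly involved calculation. Your Chernoff argument, $\mP(\te{T}\geq j)=\mP(U_1+\cdots+U_j\leq T)\leq e^{sT}\phi(s)^j$ with $\phi(s)=\ecis[e^{-sU}]\to 0$ as $s\to\infty$ because $U>0$ a.s., delivers geometric decay at an arbitrary rate, which is exactly what is needed for $\ecis[e^{c\te{T}}]<\infty$ for every $c$. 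This is shorter, requires only that the interarrival times be i.i.d.\ and strictly positive (so it applies to renewal schemes beyond $\lambda(s)=\pin s^{\pex-1}$), and is entirely sufficient for the theorem; what it does not recover is the sharper factorial-type tail of Lemma~\ref{lemma:tails}, which the paper states as a standalone result but does not need beyond Corollary~\ref{cor:mom_gen_f}. In short: same skeleton, same moment computation for the incremental weight, but a cleaner and more general verification of the generating-function condition.
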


\begin{remark}
  Remarks~\ref{remark_ver_assu},~\ref{remark_norms}~and~\ref{remark_sigmas} apply.
\end{remark}

\begin{remark}
  For given $p$ and $\kappa_{\dm}$ or $\kappa_{\dc}$, one
  can choose $\alpha$ small enough s.t. $W_T \in L^p.$ In particular,
  if $\alpha \leq \kappa_{\dm}$ or $\alpha \leq \kappa_{\dc}$  respectively,
  then all moments exist.
\end{remark}

\begin{proof}[Proof of Theorem \ref{thm_weights}] Follows 
  from Theorem \ref{th:tot_weight} by verifying Assumption
  \ref{assu:momentunif_bdd}  via Proposition~\ref{prop_CC_incr_weight_unif_bdd} below and Assumption \ref{assu:momentunif_bdd_gen_f}    via Lemma \ref{lemma:tails}
    and Corollary \ref{cor:mom_gen_f} below.
\end{proof}

The next lemma deals with the tails of $n_T.$ 

\begin{lemma} \label{lemma:tails}
Let $\lambda(u)$ and $n_t$ be as in
\eqref{eq:lambda} and  \eqref{def_n_t} respectively.  Then
for every $c< 1$ and $t,$ there exists $C < \infty$ such
that
\begin{eqnarray}
  \label{eq:6}
  \mathbb{P}(n_t \geq j) & \leq &  C j^{-c\alpha j} \qquad \quad
  \textrm{for every} \quad j \in \mathbb{N}.
\end{eqnarray}
\end{lemma}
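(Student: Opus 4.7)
Since the intensity $\lambda(s)=\delta s^{\alpha-1}$ depends only on the time elapsed since the most recent event, the inter-arrival times $\Delta_k:=\tau_k-\tau_{k-1}$ are i.i.d.\ with survival function $S(u)=\exp\!\big(-\int_0^u\lambda(s)\,ds\big)=\exp(-\delta u^\alpha/\alpha),$ i.e.\ Weibull of shape $\alpha$. Equivalently $\Delta_k\stackrel{d}{=}(\alpha E_k/\delta)^{1/\alpha}$ for i.i.d.\ standard exponentials $E_k$. The event $\{n_t\geq j\}$ coincides with $\{\Delta_1+\cdots+\Delta_j\leq t\}$, which is the same as
\[
\Big\{E_1^{1/\alpha}+\cdots+E_j^{1/\alpha}\leq M\Big\},\qquad M:=t(\delta/\alpha)^{1/\alpha}.
\]
The plan is to control the left tail of this sum by a Chernoff/Laplace-transform bound and then to convert $j^{-\alpha j}$ into $j^{-c\alpha j}$ by absorbing the slack.

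For every $s>0$, by Markov's inequality applied to $\exp\!\big(-s\sum_k E_k^{1/\alpha}\big)$,
\[
\mathbb{P}(n_t\geq j)\;\leq\; e^{sM}\varphi(s)^j,\qquad \varphi(s):=\mathbb{E}\big[e^{-sE^{1/\alpha}}\big].
\]
The key Laplace transform estimate is
\[
\varphi(s)\;=\;\frac{\alpha}{s^\alpha}\int_0^\infty e^{-u}\,e^{-(u/s)^\alpha}u^{\alpha-1}\,du\;\leq\;\frac{\Gamma(\alpha+1)}{s^\alpha},
\]
obtained by the substitutions $w=v^{1/\alpha}$, $u=sw$ and the crude bound $e^{-(u/s)^\alpha}\leq 1$. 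Inserting this and optimising the resulting expression $sM-\alpha j\log s+j\log\Gamma(\alpha+1)$ in $s$ gives the minimiser $s^*=\alpha j/M$ and
\[
\mathbb{P}(n_t\geq j)\;\leq\;\Big(\tfrac{e^{\alpha}M^{\alpha}\Gamma(\alpha+1)}{\alpha^{\alpha}}\Big)^j\,j^{-\alpha j}\;=:\;B(t)^j\,j^{-\alpha j},
\]
with $B(t)<\infty$ depending only on $t,\alpha,\delta$.

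To finish, fix $c<1$ and write
\[
B^{j}j^{-\alpha j}\;=\;\big(B^{j}j^{-(1-c)\alpha j}\big)\,j^{-c\alpha j}.
\]
Since $(1-c)\alpha>0$, the quantity $B^{j}j^{-(1-c)\alpha j}$ decays super-geometrically as $j\to\infty$, hence $C:=\sup_{j\geq 1}B^{j}j^{-(1-c)\alpha j}<\infty$, yielding $\mathbb{P}(n_t\geq j)\leq C j^{-c\alpha j}$ for every $j\in\mathbb{N}$, as claimed. The only genuinely non-routine step is the Laplace transform inequality $\varphi(s)\leq \Gamma(\alpha+1)s^{-\alpha}$; after that, the Chernoff optimisation and the $c<1$ slack are straightforward bookkeeping.
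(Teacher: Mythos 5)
Your proof is correct, and it takes a genuinely different and more streamlined route than the paper's. The paper first bounds the cdf of $\Delta\tau_1$ by $\alpha^{-1}u^{\alpha}$, stochastically dominates $\Delta\tau_1$ from below by a scaled Bernoulli variable $aB(1-b)$, hence dominates $n_t$ from above by a (generalized) negative binomial count, applies a Chernoff bound to that negative binomial, and finally optimises over the auxiliary scale $a$ (taking $a=\log j/j$), which requires some delicate asymptotic bookkeeping. You instead exploit the exact Weibull law of the interarrival times, rewrite $\{n_t\geq j\}=\{\tau_j\leq t\}$ as a lower-tail event for the i.i.d.\ sum $\sum_k E_k^{1/\alpha}$, and apply a one-step Laplace-transform (Chernoff) bound; the only substantive estimate is $\mathbb{E}[e^{-sE^{1/\alpha}}]\leq \Gamma(\alpha+1)s^{-\alpha}$, which you verify correctly by substitution, and the optimisation at $s^*=\alpha j/M$ is closed-form. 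Your resulting bound $B(t)^j j^{-\alpha j}$ is in fact slightly stronger than the stated conclusion (it corresponds to $c=1$ up to a geometric factor), and absorbing the factor $B^j j^{-(1-c)\alpha j}$ into the constant for any fixed $c<1$ is legitimate since that sequence tends to zero super-geometrically. Everything needed downstream (Corollary~\ref{cor:mom_gen_f}, i.e.\ finiteness of the moment generating function of $n_T$) follows equally well from your bound. The one presentational caveat: you should note explicitly, as the paper's setup implies, that the interarrival times of the renewal process are i.i.d.\ with hazard rate $\lambda$, which is what licenses the factorisation $\varphi(s)^j$.
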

\begin{proof}

Without loss of generality assume $\delta=1$ in the definition of
$\lambda(u)$ and thus the density and cdf of $\Delta \tau_1$ writes \begin{eqnarray} \label{density_of_interarrival}
g_{\Delta \tau_1}(u) & = & \lambda(u)
\exp\{- \int_0^u \lambda(v)dv \} \;\;=\;\; {1 \over u^{1-\alpha}}
\exp\{ - {1 \over \alpha} u^{\alpha}\} \qquad  \textrm{and} \\
G_{\Delta \tau_1}(u)
& = & 1-
\exp\{- \int_0^u \lambda(v)dv \} \;\;=\;\; 1-
\exp\{ - {1 \over \alpha} u^{\alpha}\} \;\; \leq \;\;
\alpha^{-1} u^{\alpha},  \label{eqn:cdf_of_J}
 \end{eqnarray} respectively. 
For $a < \alpha^{1/\alpha},$ define $b: =
\alpha^{-1} a^{\alpha}. $ By \eqref{eqn:cdf_of_J}, the cdf
of $\Delta \tau_1$ is smaller than the cdf of $aB(1-b),$ where $B(1-b)$ is a
Bernoulli random variable with success probability $1-b$. Let
$NB(r,b)$ denote the negative binomial distribution of the
number of success in a sequence of Bernoulli trials
(with success probability $b$) before $r$ failures occur. We allow for
 $0<r \in
\mR ,$ resulting in the generalized negative binomial
distribution. Now note that for
every choice of $a < \alpha^{1/\alpha},$ the random variable $n_t$ is
stochastically smaller than $ \lfloor t/a \rfloor + 1 + NB( \lfloor
t/a \rfloor + 1 , b), $ and consequently, by stochastic monotonicity
in the first parameter between generalized negative binomials, $n_t$ is also stochastically smaller than $N_J(a):= t/a + 1+ NB(t/a + 1, b).$
The moment generating function of $N_J(a),$ say $M_{N_J(a)}(s),$
exists for $ s \leq - \log b$ and writes 
\begin{eqnarray}
  \label{eq:7}
  M_{N_J(a)}(s) & = & e^{s{t + a \over a}} ({1-b \over 1- b e^s})^{{t
      + a\over a}}.
\end{eqnarray}
Thus by stochastic ordering and then Markov inequality we obtain for every $s$
\begin{eqnarray}
  \label{ineq:Markov}
 \mathbb{P}(n(t) \geq j) \;\; \leq \;\;  \mathbb{P}( N_J(a) \geq j ) & \leq & e^{-sj} e^{s{t+a\over a}} ({1-b \over 1-
    b e^s})^{{t+a \over a}}.
\end{eqnarray}
We set $z:=e^s,$ take the $\log$ of  \eqref{ineq:Markov} and
differentiate it with respect to $z,$ to obtain that given $j > 1 + t/a ,$ the RHS of \eqref{ineq:Markov} is
minimized at $z = {a(j-1) - t \over ab j}. $ Using this and recalling
that $b = \alpha^{-1} a ^{\alpha},$ by elementary algebra
we arrive at
\begin{eqnarray}
  \nonumber 
   \mathbb{P}( N_J(a) \geq j ) & \leq & \alpha^{-j} \left( {aj - a -t
       \over a^{1+\alpha} j } \right)^{-j} 
\left( {(aj - a -t) \alpha
       \over a^{1+\alpha} j } \right)^{{t+a \over a}}
\left( {(\alpha - a^{\alpha} ) aj
       \over \alpha (a+t) } \right)^{{t+a \over a}}
\\ \label{eq:j_decomposition}
& = & \left( {\alpha \over a^{\alpha}} \right)^{-j} 
 \left( 1 - {1 \over
       j} - {t \over a j} \right)^{-j}
\left( {aj \over a+t}
     - 1 \right)^{{t+a \over a }} \left( {\alpha \over a^{\alpha}} - 1
   \right)^{{t+a \over a}}. 
\end{eqnarray}
The next step is to minimize the above with respect to $a$ provided it
is in the admissible range, i.e. $j > 1 +t/a,$ for
$j$ large enough. The
solution is not in closed form, however by
standard calculus one can see that for large $j$ the optimal value of
$a,$ say $a^*$ satisfies e.g. ${ \log \log j \over j } < a^* <  { \log j
  \over j }.$ In particular, since the inequality in
\eqref{eq:j_decomposition} holds for every $a$ from the admissible
range, we can set $a  = { \log j
  \over j }< \alpha^{1/\alpha}.$ Rearranging terms yields for $j$ large enough
\begin{eqnarray*}
  \label{eq:10}
 \mP ( N_J(a) \geq j ) & \leq & \left( {\alpha^{{1\over \alpha}} j \over \log j }  \right)^{-\alpha j}
\left( 1- {1 \over j} - {t \over \log j} \right)^{-j}
 \times \\
&& \qquad \qquad \qquad \times \left( {j \log j \over \log j +
     jt } -1 \right)^{{jt + \log j
     \over  \log j}} 
 \left( {\alpha j^{\alpha } \over (\log j)^{\alpha} } -1 \right)^{{jt + \log j
     \over  \log j}} \\ & \leq &
\left( {\alpha^{{1\over \alpha}} j \over \log j }  \right)^{-\alpha j}
\left( 1- {1 \over j} - {t \over \log j} \right)^{-j}
 \left( { \log j \over 2
     t } \right)^{{jt + \log j
     \over  \log j}} 
 \left( {\alpha^{1 \over \alpha} j \over \log j }\right)^{\alpha{jt + \log j
     \over  \log j}} 
\\ & \leq & j^{- (1-{t \over \log j} - {1 \over j})\alpha j} \times
(\log j)^{\alpha j + {jt + \log j
     \over  \log j}(1-\alpha) } \times \\ 
&&  \qquad \qquad \qquad
                                             \qquad  \times \alpha^{-j}  \left( 1- {1 \over
     j} - {t \over \log j} \right)^{-j} \left({\alpha
     \over 2t }\right)^{jt + \log j
     \over  \log j}
\\
  & =: & A(j) \times B(j) \times C(j). \\ & \leq & j^{-c\alpha
    j},\qquad \textrm{for
    every} \;\; c< 1  \;\; \textrm{and $\;  j \; $ large enough,}
\end{eqnarray*}
where the last inequality results from verifying assymptotics with $j \to \infty,$ namely
\begin{eqnarray*}
  \label{eq:8}
  {C(j) \over M^j} & \to & 0,  \qquad \textrm{for
    large enough} \quad M,  \\
  { B(j) \over (\log j)^{2j} } & \to & 0, \qquad \textrm{and} \\
  { A(j) \over j^{-c_0\alpha j}} & \to & 0, \qquad \textrm{for
    every} \quad c_0< 1.  
\end{eqnarray*}
Thus the proof is complete for arbitrary $c < 1$ by setting $c_0 = 1 - (1-c)/2.$
\end{proof}

The following weaker result is an immediate corollary from Lemma
\ref{lemma:tails} and verifies Assumption \ref{assu:momentunif_bdd_gen_f}.
\begin{cor}\label{cor:mom_gen_f}
 The random variable $n_T$ has everywhere
 finite moment generating function, i.e. $\ecis [\exp\{c\te{T}\}] <
 \infty$ for every $c \in \mathbb{R}.$
\end{cor}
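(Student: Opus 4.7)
The plan is to leverage the super-exponential tail bound provided by Lemma \ref{lemma:tails} and show that it readily implies finiteness of the moment generating function of $n_T$ for every real argument.

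First I would write the moment generating function as
\begin{equation*}
\mathbb{E}[\exp\{c n_T\}] \;=\; \sum_{j=0}^{\infty} e^{cj} \mathbb{P}(n_T = j) \;\leq\; \sum_{j=0}^{\infty} e^{cj} \mathbb{P}(n_T \geq j).
\end{equation*}
For $c \leq 0$ the bound is trivially $\leq 1$, so it suffices to handle $c > 0$. Fix any such $c$ and pick any $c_0 < 1$ (for instance $c_0 = 1/2$). By Lemma \ref{lemma:tails}, there exists a constant $C < \infty$ and an integer $j_0$ such that for all $j \geq j_0$,
\begin{equation*}
\mathbb{P}(n_T \geq j) \;\leq\; C\, j^{-c_0 \alpha j}.
\end{equation*}

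Next, I would insert this bound into the tail of the series. For $j \geq j_0$, each summand is bounded by
\begin{equation*}
e^{cj} \mathbb{P}(n_T \geq j) \;\leq\; C \exp\{ cj - c_0 \alpha j \log j \}.
\end{equation*}
Since $c_0 \alpha j \log j$ grows strictly faster than $cj$, the exponent tends to $-\infty$ at a super-linear rate, so the summands are eventually dominated by, say, $2^{-j}$, and the series converges.

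The finite head $\sum_{j=0}^{j_0-1} e^{cj} \mathbb{P}(n_T \geq j)$ is a finite sum of finite terms, hence also bounded. Combining the two parts yields $\mathbb{E}[\exp\{c n_T\}] < \infty$ for every $c \in \mathbb{R}$. There is no genuine obstacle here; the only care needed is to verify that the ultra-fast decay $j^{-c_0 \alpha j}$ beats any exponential $e^{cj}$, which is immediate since $\log(j^{-c_0 \alpha j}) = -c_0 \alpha j \log j$ dominates $cj$ as $j \to \infty$.
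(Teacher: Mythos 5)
Your argument is correct and is exactly what the paper intends: the corollary is stated as an immediate consequence of Lemma \ref{lemma:tails}, and your write-up simply spells out that the super-exponential tail bound $j^{-c_0\alpha j}=\exp\{-c_0\alpha j\log j\}$ dominates any factor $e^{cj}$, so the series defining the moment generating function converges. No gaps.
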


Now we turn to establishing Assumption \ref{assu:momentunif_bdd} via verifying equation
\eqref{eq:momentunif_bdd_smpl}.

\begin{prop}\label{prop_CC_incr_weight_unif_bdd} Consider the Copycat version of the CIS algorithm for SDEs.
Under the following conditions from Assumption~\ref{assu_for_CIS_moments},
  \begin{itemize}
  \item[(a)]  Conditions $(i),$ $(ii),(iii),(iv),(v)$ and $(vi)$ 
	 with  $p(\alpha - \kappa_{\dm}/2) < \alpha,$ or
	 \item[(b)]  Conditions $(i),$ $(ii),(iii),(iv,(v))$ and $(vii)$
	 with  $p(\alpha - \kappa_{\dc}/2) < \alpha.$
  \end{itemize}
equation \eqref{eq:momentunif_bdd_smpl} holds with $p^*$ replaced by
$p$, i.e.  the $p-$th moment of the incremental
  weight is uniformly bounded.
\end{prop}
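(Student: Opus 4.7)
The plan is to parameterize the proposal draw $Y\sim\prdcc{}{x}{\cdot}{u}$ via a standard Gaussian $Z\sim\gd(0,I_\nd)$ by writing $Y=x+u\bb(x)+\sqrt{u}\dc(x)Z$, and to view the incremental weight $H_1=\mrwcc{}{x}{Y}{u}$ as a function of $(u,Z)$. Then $\ecis[|H_1|^p\mid X_0=x]$ is the double integral of $|H_1(u,Z)|^p$ against the renewal density $\lambda(u)\exp\{-\int_0^u\lambda(v)dv\}$ (of order $u^{\alpha-1}$ near $0$ and exponentially decaying in $u^\alpha$) and the standard Gaussian density on $\mR^\nd$. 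Under this parameterization the identities of Lemma~\ref{lemma:iweightcc} read
\begin{equation*}
\qderf{\theta}{x}{y}{u}=-\frac{\dc^{-T}(x)Z}{\sqrt{u}},\qquad \qdderf{\theta}{x}{y}{u}=\frac{\dc^{-T}(x)ZZ^T\dc^{-1}(x)-\dm^{-1}(x)}{u},
\end{equation*}
and Assumption~\ref{assu_for_CIS_moments}(ii) combined with (v) yields the elementary bound $\|\dc^{-1}(x)(y-x)\|\le Cu+\sqrt{u}\|Z\|$, uniformly in $x$, which is the key input for all subsequent estimates.

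Split $H_1-1$ into the four summands of \eqref{eq:iweightcc}. The deterministic-in-$Z$ terms $\tfrac{1}{2\lambda(u)}\dmdder(y):\one_{\nd\times\nd}$ and $-\tfrac{1}{\lambda(u)}\bder(y)\cdot\one_{\nd\times 1}$ are bounded by a multiple of $u^{1-\alpha}$ using (iii)--(iv) and are harmless. For the piece $\tfrac{1}{\lambda(u)}[\dmder(y)\one-\bb(y)+\bb(x)]\cdot\qder$, the Cauchy--Schwarz bound $|v\cdot\qder|\le\|\dc^{-1}(x)v\|\|Z\|/\sqrt{u}$, combined with (iii), (v), and the polynomial-growth bound (i), produces an upper bound of the form $u^{1/2-\alpha}\,\mathrm{Poly}(\|Z\|,u^{1/2})$; since $\kappa_{\dm}\le 1$ (respectively $\kappa_{\dc}\le 1$) the hypothesis forces $p(\alpha-1/2)<\alpha$, so the $p$th moment of this piece against $u^{\alpha-1}du$ is finite near~$0$.

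The main obstacle, and the term that dictates the hypothesis on $p$, is $T:=\tfrac{1}{2\lambda(u)}[\dm(y)-\dm(x)]:\qdder$. Using the explicit form of $\qdder$, the cyclic property of the trace, and the spectral lower bound (v), one gets
\begin{equation*}
|T|\le \frac{C}{u\lambda(u)}\,\|\dc^{-1}(x)[\dm(y)-\dm(x)]\|\,(1+\|Z\|^2).
\end{equation*}
In case~(a), condition (vi) together with the elementary estimate above yields $\|\dc^{-1}(x)[\dm(y)-\dm(x)]\|\le C(u^{\kappa_{\dm}/2}+u^{m/2})\,\mathrm{Poly}(\|Z\|)$ for $u\le 1$. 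In case~(b) the analogous bound with exponent $\kappa_{\dc}$ is obtained by writing $\dc^{-1}(x)[\dm(y)-\dm(x)]\dc^{-T}(x)=\dc^{-1}(x)\dc(y)\dc^T(y)\dc^{-T}(x)-I$ and applying (vii) to $\dc^{-1}(x)[\dc(y)-\dc(x)]$, thereby avoiding any bound on $\|\dc(x)\|$. Consequently $|T|\le C(u^{\kappa/2-\alpha}+u^{m/2-\alpha})\,\mathrm{Poly}(\|Z\|)$ near $0$, with $\kappa\in\{\kappa_{\dm},\kappa_{\dc}\}$ the smaller exponent.

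Raising to the $p$th power, the Gaussian integral of the polynomial factor is finite by standard moment bounds. The $u$-integral near $0$ is governed by $\int_0^1 u^{\alpha-1+p(\kappa/2-\alpha)}du$, which converges precisely when $p(\alpha-\kappa/2)<\alpha$; the $u^{m/2-\alpha}$ contribution is integrable a fortiori since $m>\kappa$. For $u\ge 1$ the factor $\exp\{-\pin u^\alpha/\alpha\}$ absorbs all polynomial growth. Because every constant above depends only on the assumptions and not on $x$, we conclude $\sup_x\ecis[|H_1|^p\mid X_0=x]<\infty$, which is precisely \eqref{eq:momentunif_bdd_smpl} with $p^*$ replaced by $p$.
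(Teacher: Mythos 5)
Your proof is correct and follows essentially the same route as the paper's: the same parameterization of the proposal draw through a standard Gaussian, the same decomposition of $H_1-1$ into the second-, first-, and zeroth-order pieces with the $[\dm(y)-\dm(x)]:\qdder$ term identified as the one dictating the moment condition, the same use of trace cyclicity and the spectral lower bound, and in case (b) the same factoring of $\dm(y)-\dm(x)$ through differences of $\dc$ (your identity $\dc^{-1}(x)[\dm(y)-\dm(x)]\dc^{-T}(x)=\dc^{-1}(x)\dc(y)\dc^T(y)\dc^{-T}(x)-I$ is just a compact rewriting of the paper's three-term expansion). The only blemish is a citation slip — boundedness of $\dmder$ in the first-order term comes from Assumption~\ref{assu_for_CIS_moments}(iv), not (iii) — which does not affect the argument.
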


\begin{proof}
We start with some preliminary calculations useful for both cases.
Recall that for Copycat CIS for diffusions,
conditionally on $X_0 =
x,$ $X_{\tau_1} = y$ and $\tau_1 = u,$ for the incremental weight $H_1
= H_1(x,y,u),$ we have 

\begin{eqnarray} \nonumber
 \qquad \quad   H_1 & = & 1 + { (\tro - \pro)\prdcc{0}{x}{y}{u} \over
   \lambda(u) \prdcc{0}{x}{y}{u} } \\ \nonumber 
& = & 1 + {1\over \lambda(u)} \Bigg( {1\over 2} [\dm(y) - \dm(x)]:
  \qdder(x,y,u) \\ \nonumber && 
 + \Big[ \dmder(y)\one_{d\times 1} - \bb(y) + \bb(x)
  \Big] \cdot \qder(x,y,u) + {1 \over 2} \dmdder(y):\one_{d\times d} -
  \bder(y) \cdot \one_{d \times 1} \Bigg) \\ \label{eq:incremental_for_CC}
& = & 1+  {1\over \lambda(u)} \bigg( {1\over 2} D_2   + D_1  + D_0 \bigg), \qquad \textrm{where}  \\ \nonumber 
D_2 & = &  [\dm(y) - \dm(x)]:
  \qdder(x,y,u),  \\ \nonumber 
D_1 & = &  \big[ \dmder(y)\one_{d\times 1} - \bb(y) + \bb(x)
  \big] \cdot \qder(x,y,u) , \\ \nonumber 
D_0 & = &  {1 \over 2} \dmdder(y):\one_{d\times d} -
  \bder(y) \cdot \one_{d \times 1} .
\end{eqnarray}
Recall that in the above $\; : \;$ and $\; \cdot \;
$ stand for the Frobenius and Euclidean
inner products respectively. To rewrite $\qder(x,y,u)$ and
$\qdder(x,y,u),$ in a suitable form, we now use $z,$ the realisation of  the $d-$dimensional standard normal vector used to generate $y$ in the Copycat scheme. 
\begin{eqnarray}
 \nonumber
\qder(x,y,u) & = & - {\dm^{-1}(x) \over u}[y-x-ub(x)] \;\; = \;\; -
{\dc^{-T}(x) \dc^{-1}(x) \over u} \sqrt{u} \dc(x) z \;\; = \;\; -
{\dc^{-T}(x) \over \sqrt{u}} z , \\ \nonumber
\qdder(x,y,u) & = & \qder(x,y,u) \qder^T(x,y,u) - {\dm^{-1}(x) \over u} \;\; = \;\; {1 \over u} \dc^{-T}(x)(z z^T -I)\dc^{-1}(x)  .
\end{eqnarray}
For $A =
A_1 + \cdots + A_k,$ and arbitrary $c>0 ,$ we make repeated use of 
  $$|A|^c  =   |A_1 + \cdots + A_k|^c  \leq (k \max_i|A_i|)^c
 \leq  k^c (|A_1|^c + \cdots + |A_k|^c).$$ In particular, since by \eqref{eq:incremental_for_CC}
\begin{eqnarray}
  \label{eq:H_deco}
  |H_1|^p & \leq & 4^p \Big(1 + 2^{-p}{|D_2|^p \over \lambda^p(u)} +  {|D_1|^p
    \over \lambda^p(u)} + {|D_0|^p \over \lambda^p(u)} \Big),
\end{eqnarray}
we can deal with expectations $ \mE {|D_i|^p \over \lambda^p(U)},$
$i=2,1,0,$ separately. \\
Now assume $(a).$\\
Compute $D_2$ using first the cyclic property of trace, then
Cauchy-Schwartz for the Frobenius norm.
\begin{eqnarray} \nonumber 
  |D_2| & = & {1 \over u} \Big|\trace \Big( [\dm(y) - \dm(x)] \dc^{-T}(x)(z z^T -I)\dc^{-1}(x) \Big)\Big| \\ \nonumber
& = &  {1 \over u}\Big| \trace \Big( \dc^{-1}(x) [\dm(y) - \dm(x)] \dc^{-T}(x)(z z^T -I)\Big) \Big| \\ \nonumber 
& \leq &  {1 \over u} \| \dc^{-1}(x) [\dm(y) - \dm(x)]\| \|
\dc^{-T}(x)(z z^T -I) \| \;\; =: \;\; \diamondsuit_1.  
\end{eqnarray}
Next, use submultiplicativity of the Frobenius norm, uniform
boundedness of the $L^2$ operator norm of $\dc^{-1}(x)$ resulting from
Assumption~\ref{assu_for_CIS_moments}~$(v),$ and subsequently
Assumption~\ref{assu_for_CIS_moments}~$(vi).$ 
\begin{eqnarray} \nonumber 
\diamondsuit_1 & \leq & {1 \over u} \| \dc^{-1}(x) [\dm(y) - \dm(x)]\| \|
\dc^{-T}(x)\| \|z z^T -I \| \\ \nonumber 
& \leq & {C \over u} \| \dc^{-1}(x) [\dm(y) - \dm(x)]\| \|z z^T -I \| \\ \nonumber 
& \leq &  {C \over u} \Big( \| \dc^{-1}(x) (y-x)\|^{\kappa_{\dm}} + \|\dc^{-1}(x) (y-x)\|^m \Big)\|z z^T -I \| \\ \nonumber
& \leq & {C \over u} \Bigg(\| \dc^{-1}(x) (y-x -
ub(x))\|^{\kappa_{\dm}} +
u^{\kappa_{\dm}}\|\dc^{-1}(x)b(x)\|^{\kappa_{\dm}} \\ \nonumber 
&& \qquad \quad + \| \dc^{-1}(x) (y-x -
ub(x))\|^{m} +
u^{m}\|\dc^{-1}(x)b(x)\|^{m}  \Bigg) \|z z^T -I
\| \;\; =: \;\; \diamondsuit_2. \end{eqnarray}
Finally, use Assumption~\ref{assu_for_CIS_moments}~$(ii),$ and
realisation $z $  of the
generic standard normal vector used to generate $y$ to rewrite the above as
\begin{eqnarray} \nonumber
\diamondsuit_2  & \leq & {C \over u} \Big(\| \dc^{-1}(x)u^{1/2}
\dc(x)z\|^{\kappa_{\dm}} + u^{\kappa_{\dm}} + \| \dc^{-1}(x)u^{1/2} \dc(x)z\|^{m} + u^{m}\Big) \|z z^T -I \| \\
& = & {C \over u} \Big( u^{\kappa_{\dm}/2}\| z\|^{\kappa_{\dm}} +
u^{\kappa_{\dm}} +  u^{m/2}\| z\|^{m} + u^{m}\Big) \|z z^T -I \|. \label{D_2__under_d_conditional}
\end{eqnarray}
If $Z$ is a
$d-$dimensional standard normal, then for every $p$ there exists $C<
\infty,$ s.t. $\mE\|ZZ^T - I\|^p \leq C$ and
$\mE\big(\|Z\|^{p\kappa_{\dm}}\|ZZ^T - I\|^p\big) \leq C.$ Similarly
with $\kappa_{\dm}$ replaced by $m.$ Denote the
$d-$dimensional standard normal density by $\varphi(\cdot).$ Now recall the density of the interarrival time
\eqref{density_of_interarrival} and using
\eqref{D_2__under_d_conditional}, compute
\begin{eqnarray}
   \mE {|D_2|^p \over \lambda^p(U)} & \leq & C \int_{0}^T {u^{(p-1)(1-\alpha)} \over u^p
   } \exp\big\{-{1\over \alpha} u^{\alpha} \big\} \times  \nonumber \\
   &&  \qquad \times \bigg( \int_{\mR^d} \Big(
    u^{\kappa_{\dm}/2}\| z\|^{\kappa_{\dm}} +
u^{\kappa_{\dm}} +  u^{m/2}\| z\|^{m} + u^{m}\Big)^p
   \|z z^T -I \|^p \varphi(z)\d z\bigg) \d u \nonumber \\
 & \leq & C \int_{0}^T u^{\alpha(1-p) - 1} \big( u^{p\kappa_{\dm}/2} +
 u^{p\kappa_{\dm}} + u^{pm/2} +
 u^{pm} \big) \d u\qquad \nonumber \\ \label{eq:D_2_to_p}
 & \leq &  C \int_{0}^T u^{\alpha(1-p)
   - 1 + p\kappa_{\dm}/2} \d u \;\; \leq \;\; \mathcal{C}_2(p) \;\; < \;\; \infty,
\end{eqnarray}
since $m > \kappa_{\dm}$ and we assumed $p(\alpha - \kappa_{\dm}/2) < \alpha.$ 

To deal with the scalar product in $D_1$ use first that
$\sigma^{-1}(x)$ is the adjoint of $\sigma^{-T}(x),$ then
Cauchy-Schwarz for the Frobenius norm, and finally Assumptions~\ref{assu_for_CIS_moments}~$(iv)$ and
$(v),$ i.e. boundedness of $\|\dmder(x)\|$ and $\|\sigma^{-1}(x)\|.$
\begin{eqnarray}
 \nonumber
  |D_1| & \leq & \Big| \big[ \dmder(y)\one_{d\times 1} - \bb(y) + \bb(x)
  \big] \cdot {\dc^{-T}(x) \over \sqrt{u}} z \Big| \\ \nonumber
& \leq & u^{-1/2}\Big(\big|\dc^{-1}(x) \dmder(y) \one_{d\times 1}
\cdot z  \big| + \big| \dc^{-1}(x)(\bb(y)-\bb(x)) \cdot z \big|  \Big) \\ \nonumber
& \leq & C u^{-1/2}\Big(\|z\| + \| \dc^{-1}(x)(\bb(y)-\bb(x)) \| \|z\|
\Big) \;\; =: \;\; \clubsuit_1 \end{eqnarray} 
Next, use Assumption~\ref{assu_for_CIS_moments}~$(i)$ and $(ii)$ and
again $z$, the realisation of
the standard normal vector used to generate $y.$ 
\begin{eqnarray}\ \nonumber
\clubsuit_1 & \leq & C u^{-1/2}\Big(\|z\| + \big( 1+ \| \dc^{-1}(x)(y-x) \|^{\kappa_{\bb}} \big) \|z\|
\Big)  \\ \nonumber
& \leq & C u^{-1/2}\Big(\|z\| + \big( \| \dc^{-1}(x)(y-x - ub(x)) \|^{\kappa_{\bb}} +
u^{\kappa_{\bb}}\|\dc^{-1}(x)\bb(x)\|^{\kappa_{\bb}} \big) \|z\|
\Big) \\ \nonumber 
& \leq & C u^{-1/2}\Big(\|z\| + \big( u^{\kappa_{\bb}/2}\|z \| +u^{\kappa_{\bb}}
\big) \|z\| 
\Big) \\
& \leq &  C u^{-1/2} \Big(  \|z\| + u^{\kappa_{\bb}}
 \|z\| + u^{\kappa_{\bb}/2} \|z \|^2  \Big). 
\end{eqnarray}
Similarly as in \eqref{eq:D_2_to_p}, recalling \eqref{density_of_interarrival} and
noting that $\mE \|Z\|^p < \infty$ and $\mE \|Z\|^{2p} < \infty$ for
$Z \sim N(0, I),$ yields
\begin{eqnarray}
   \mE {|D_1|^p \over \lambda^p(U)} & \leq & C \int_{0}^T {u^{(p-1)(1-\alpha)} \over u^{p/2}
   } \exp\big\{-{1\over \alpha} u^{\alpha} \big\} \times \nonumber \\
   && \qquad \qquad \times \Bigg( \int_{\mR^d} \Big(
    \|z\| + u^{\kappa_{\bb}}
 \|z\| + u^{\kappa_{\bb}/2} \|z \|^2 \Big)^p
   \varphi(z)\d z\Bigg) \d u \nonumber \\
 & \leq & C \int_{0}^T u^{p(1/2 - \alpha) - 1 + \alpha } \big( 1+ u^{p\kappa_{\bb}/2} +
 u^{p\kappa_{\bb}}\big) \d u \qquad \nonumber \\ \label{eq:D_1_to_p}
 & \leq & C \int_{0}^T  u^{p(1/2 -
   \alpha) - 1 + \alpha } \d u \;\; \leq \;\; C_1(p) \;\; < \;\; \infty,
\end{eqnarray}
since $p(\alpha - \kappa_{\dm}/2) < \alpha$ implies $p(\alpha - 1/2) < \alpha.$

Finally, by  Assumptions~\ref{assu_for_CIS_moments}~$(iii)$ and~$(iv)$
and Cauchy-Schwarz, for some $C < \infty,$
\begin{eqnarray} \nonumber
  |D_0| &\leq &  {1 \over 2} \big| \trace \big(\dmdder(y) \one_{d\times d}\big) \big|+
  \big| \bder(y) \cdot \one_{d \times 1} \big|  \\ 
 & \leq &  \|\dmdder(y)\| \|\one_{d\times d}\| +\|\bder(y)\| \| \one_{d
   \times 1}\| \;\; \leq \;\; C_0 \;\; < \;\; \infty.
\end{eqnarray}
Hence 
\begin{eqnarray}
\mE{|D_0|^p \over \lambda^p(U)} & = & C \int_{0}^T u^{(p-1)(1-\alpha)}
\exp\big\{-{1\over \alpha} u^{\alpha} \big\} \Big( \int_{\mR^d} C_0^p 
   \varphi(z)\d z\Big) \d u \nonumber \\
& \leq & C \int_{0}^T u^{(p-1)(1-\alpha)} \d u \;\; \leq \;\; C \;\;
< \;\; \infty.
 \label{eq:D_0_to_p}  
\end{eqnarray}

Combining \eqref{eq:D_2_to_p}, \eqref{eq:D_1_to_p} and
\eqref{eq:D_0_to_p} yields the claim under assumption $(a)$.

Now assume $(b).$

The calculations involving $|D_1|^p$ and $|D_0|^p$ do not change. To
deal with $|D_2|$ note that
\begin{eqnarray}
\qquad  \dm(y) - \dm(x) & = & \big(\dc(y) - \dc(x) +
  \dc(x)\big)\big(\dc^T(y)-\dc^T(x) +\dc^T(x) \big) - \dc(x)\dc^T(x)
  \nonumber \\ & = & \big(\dc(y) - \dc(x)
  \big)\big(\dc^T(y)-\dc^T(x) \big)
   \label{eq:gamma_difference_into_sigma}
  \\ && \nonumber \qquad  \qquad 
 +  \dc(x)\big(\dc^T(y)-\dc^T(x)\big)  + 
\big(\dc(y) - \dc(x)\big)\dc^T(x). 
\end{eqnarray}
Now use \eqref{eq:gamma_difference_into_sigma} to deal with  $|D_2|$
in a similar manner that led to \eqref{D_2__under_d_conditional}. We
thus start with utilising the decomposition
\eqref{eq:gamma_difference_into_sigma}, next use cyclic property of
the trace, and submultiplicativity of the Frobenius norm to arrive at
\begin{eqnarray} \nonumber 
  |D_2| & = & {1 \over u} \Big|\trace \Big( [\dm(y) - \dm(x)] \dc^{-T}(x)(z z^T -I)\dc^{-1}(x) \Big)\Big| \\ \nonumber
& \leq  & {1 \over u} \Big|\trace \Big( \big(\dc(y) - \dc(x)
  \big)\big(\dc^T(y)-\dc^T(x) \big)
\dc^{-T}(x)(z z^T -I)\dc^{-1}(x) \Big)\Big| \\ \nonumber && \qquad \qquad \qquad \qquad 
+ 
{1 \over u} \Big|\trace
\Big(\dc(x)\big(\dc^T(y)-\dc^T(x)\big) \dc^{-T}(x)(z z^T
-I)\dc^{-1}(x) \Big)\Big|  \\ \nonumber && \qquad \qquad \qquad \qquad \qquad 
+
{1 \over u} \Big|\trace \Big(\big(\dc(y) - \dc(x)\big)\dc^T(x)\dc^{-T}(x)(z z^T
-I)\dc^{-1}(x) \Big)\Big|  \\ \nonumber 
& \leq  & {1 \over u} \| \dc^{-1}(x) \big(\dc(y) - \dc(x)
  \big) \| \| \big(\dc^T(y)-\dc^T(x) \big)
\dc^{-T}(x) \| \| z z^T -I \|  \\ \nonumber && \qquad \qquad \qquad \qquad 
+ 
{1 \over u} \| \big(\dc^T(y)-\dc^T(x)\big) \dc^{-T}(x)\| \|z z^T
-I \|  \\ \nonumber && \qquad \qquad \qquad \qquad \qquad 
+
{1 \over u} \| \dc^{-1}(x) \big(\dc(y) - \dc(x)\big)\| \|z z^T
-I\| 
 \\ \nonumber 
& =  & {1 \over u} \| \dc^{-1}(x) \big(\dc(y) - \dc(x)
  \big) \|^2 \| z z^T -I \| 
+
{2 \over u} \| \dc^{-1}(x) \big(\dc(y) - \dc(x)\big)\| \|z z^T
-I\| \\ \nonumber & :=  &
\spadesuit_1 \end{eqnarray} 
Next, use Assumption~\ref{assu_for_CIS_moments}~$(vii)$ and $(ii)$
subsequently. The last step will be to rewrite the bound using~$z,$
the realisation of standard normal used for generating $y.$
\begin{eqnarray}\ \nonumber 
\spadesuit_1 & \leq  & {C \over u} \Big( \| \dc^{-1}(x)
(y-x)\|^{2\kappa_{\dc}} + \| \dc^{-1}(x) (y-x)\|^{2m} \Big)
\| z z^T -I \| \\ \nonumber
&& \qquad + 
{C \over u}  \Big( \| \dc^{-1}(x) (y-x)\|^{\kappa_{\dc}} + \|
\dc^{-1}(x) (y-x)\|^{m} \Big) \|z z^T
-I \|  \\ \nonumber
& \leq & {C \over u} \Big(\| \dc^{-1}(x) (y-x -
ub(x))\|^{2\kappa_{\dc}} +
u^{2\kappa_{\dc}}\|\dc^{-1}(x)b(x)\|^{2\kappa_{\dc}} \\ \nonumber &&
\qquad \qquad 
+ \| \dc^{-1}(x) (y-x -
ub(x))\|^{2m} +
u^{2m}\|\dc^{-1}(x)b(x)\|^{2m}
\Big) \|z z^T -I
\| \\ \nonumber
&  & \qquad + {C \over u} \Big(\| \dc^{-1}(x) (y-x -
ub(x))\|^{\kappa_{\dc}} +
u^{\kappa_{\dc}}\|\dc^{-1}(x)b(x)\|^{\kappa_{\dc}} \\ \nonumber &&
\qquad \qquad \qquad 
+ \| \dc^{-1}(x) (y-x -
ub(x))\|^{m} +
u^{m}\|\dc^{-1}(x)b(x)\|^{m}
\Big) \|z z^T -I \| 
\\ \nonumber
& \leq & {C \over u} \Big(\| \dc^{-1}(x)u^{1/2}
\dc(x)z\|^{2\kappa_{\dc}} + u^{2\kappa_{\dc}} + \| \dc^{-1}(x)u^{1/2}
\dc(x)z\|^{\kappa_{\dc}} + u^{\kappa_{\dc}} \\ \nonumber && \qquad 
+ \| \dc^{-1}(x)u^{1/2}
\dc(x)z\|^{2m} + u^{2m} + \| \dc^{-1}(x)u^{1/2} \dc(x)z\|^{m} + u^{m}
\Big) \|z z^T -I \| \\
& = & {C \over u} \Big( u^{\kappa_{\dc}}\| z\|^{2\kappa_{\dc}} +
u^{\kappa_{\dc}/2}\| z\|^{\kappa_{\dc}} + u^{2\kappa_{\dc}}  +
u^{\kappa_{\dc}}  \nonumber \\ && \qquad 
+ u^{m}\| z\|^{2m} +
u^{m/2}\| z\|^{m} + u^{2m}  +
u^{m} 
\Big) \|z z^T -I \|.
 \label{D_2__under_e_conditional}
\end{eqnarray}
The result of \eqref{D_2__under_e_conditional} is then used in a
computation analogous to \eqref{eq:D_2_to_p}. Remembering that $m > \kappa_{\dc}$ it yields 
\begin{eqnarray}
  \label{eq:D_2_to_p_under_e}
   \mE {|D_2|^p \over \lambda^p(U)} & \leq & \dots \;\; \leq \;\;  C \int_{0}^T u^{\alpha(1-p)
   - 1 + p\kappa_{\dc}/2} \d u \;\; \leq \;\; \mathcal{C}_2(p) \;\; < \;\; \infty,
\end{eqnarray}
provided $p(\alpha - \kappa_{\dc}/2) < \alpha.$

Combining \eqref{eq:D_2_to_p_under_e}, \eqref{eq:D_1_to_p} and
\eqref{eq:D_0_to_p} yields the claim under assumption $(b)$. 

\end{proof}







\section{Extensions}
\label{sec:extensions-cis}

This section deals with boosting the algorithmic performance of CIS
and introduces two modifications. The first generalises the basic
algorithm by allowing the use of proposal distributions other than
$q$. An optimal distribution, in the sense of weight variability
minimisation, is identified when interest lies in
simulation/estimation at a fixed time $T$. This extension also allows
us to improve transition density estimation by using proposal
processes that are guided towards a specific value at the terminal
time. The second modification deals with estimation as $T$ increases
and enhances stability by using a batch implementation of CIS, which
monitors the performance of the algorithm through time and controls
the variability of the weights via resampling.

\subsection{Proposal densities and optimal implementation}

\label{sec:opt_dens}

Recall that the CIS algorithm described in Algorithm {\algcc}
updates the process at each event time using proposals from $q$, as
shown in Step \ref{item:cis_up_draw}. We now modify this step so that $\x{\nk}$ at each
event time is drawn from a generic density, denoted by $\pg{}{\cdot}$
where $\xi$ is an abstract parameter. To preserve the unbiasedness
of the algorithm, an importance sampling correction term is added to
the incremental weight so that Step \ref{item:cis_up_wei} becomes
\begin{align*} \w{\nk}=\w{\ck}
\mrwcc{\ck}{\x{\ck}}{\x{\nk}}{\dt{k}}\frac{\prdcc{\ck}{\x{\ck}}{\x{\nk}}{\dt{k}}}{\pg{\ck}{\x{\nk}}}.
\end{align*}
The above expression allows us to identify a proposal distribution
which minimises the variance of $\w{\nk}$. It is trivial to show that the
optimal $g_{\xi}$ is given by
\begin{align}\label{eq:opt_dens} \pg{\ck}{\x{\nk}}\propto
  |\mrwcc{\ck}{\x{\ck}}{\x{\nk}}{\dt{k}}|\prdcc{\ck}{\x{\ck}}{\x{\nk}}{\dt{k}}.
\end{align}
The optimal density will typically be intractable but it can guide the choice
of a suitable approximation, as illustrated later in Section
\ref{sec:sv_model}.

\subsection{Transition density estimation and guided CIS}
\label{sec:cond_cis}

When interest lies in estimating the transition density $p(x_0,
x_{\th}, \th)$, the accuracy of CIS can be improved by choosing $g$ such
that the proposed process is guided towards $x_{\th}$. A process with
this property which can be easily simulated is the so-called modified
Brownian bridge \citep{dur:gal:2002}. If $g_{bb}( x_s, x_t; x_T,
\alpha)$ denotes the transition density of a scaled Brownian bridge,
i.e., for $0\leq s\leq t\leq T$,
\begin{align*} 
g_{bb}( x_s, x_t; x_T, a) = \gd\lcb x_t;
\frac{ x_s(T-t)+ x_T(t-s)}{T-s}, a\frac{(T-t)(t-s)}{T-s} \rcb,
\end{align*}
then we update the proposal at event time of CIS using $g_{bb}\{ \x{k},
\x{k+1}; \xx{t},\dm(\xx{k})\}$. Notice that the density incorporates
information from the end point through the mean and drifts the process
towards $x_t$. We refer to a CIS algorithm using the above proposal
as guided CIS (GCIS).

By recalling that $\tau_k,\,i=0,1,\ldots,\te{t}$ denote the simulated
time points in a single run of GCIS up to time $t$, then the
transition density estimator is given by
\begin{align*}
\prdcc{\te{\th}}{\x{t}}{\xx{\th}}{\th-\iat{t}}\prod_{k=\istart}^{\iend}\frac{\prdcc{\ck}{\x{\ck}}{\x{\nk}}{\dt{k}}}{g_{bb}\{\x{\ck}, \x{\nk}; \xx{\th},\dm(\x{\ck})\}}\mrwcc{\ck}{\x{\ck}}{\x{\nk}}{\dt{k}}.
\end{align*}
An interesting link to the transition density estimator of Durham and Gallant
\cite{dur:gal:2002} (henceforth denoted by DG) arises here. In
particular, notice that after removing the incremental weight terms from the
above quantity the density estimator only differs from a standard DG
density estimator which imputes $\te{t}$ points in that the partition of
the time interval is neither equidistant nor deterministic. This
allows us to give an intuitive explanation to the incremental weights
$\rho_{\theta_{\ck}}$ as bias correction terms to the subtransition
densities $q_{\theta_{\ck}}$.

\subsection{Resampling}\label{sec:resampling}

The variance of estimates obtained by sequential importance samplers
typically increases with time, implying that the approximations can be
arbitrarily poor when $\th$ is large, see for example
\cite{kong:liu:1994}. An established approach which attempts to
improve the quality of the estimates over large time horizons is based
on resampling procedures, see \cite{douc:cappe:moulines:2005} for a
recent review. The idea is to monitor the quality of the approximation
through time using a cloud of $N$ particles and rejuvenate them, when
necessary, by eliminating particles with low importance weights. In
this section, we adapt these ideas to the CIS framework and develop
two algorithms which incorporate a resampling step. We show
numerically that both algorithms improve greatly upon the basic CIS
algorithm presented in the previous sections when $\th$ is large. The
first version has a ${\cal O}(N)$ computational complexity but its
performance deteriorates superlinearly with $\th$. The second is
inspired by the  proposed method of \cite{doucet:score}, has
an ${\cal O}(N^2)$ cost and its performance degrades linearly with
$\th$.

Let $M\in{\mathbf Z^{+}}$ and denote by $t_j=j\th/M,\, j=0,1,\ldots,M$ a
partition of the time interval $[0,\th]$ into $M$ subintervals. The
algorithms that we propose use an initial cloud of $N$ particles and
propagate them through all times $t_j$ using CIS. At every time $t_j$
we have available the set of weights $\{\prt{w}{j}{i}\}_{i=1}^N$,
whose variability is assessed using an effective sample size (ESS), defined in step 1 of Algorithm 6 below. If the
ESS is small then the particles are resampled according to their
weights and only the surviving ones are propagated to the next time
$t_{j+1}$. We employ resampling based on $\{|\prt{w}{j}{i}|\}_{i=1}^N$
since the plain weights returned by CIS are not necessarily
positive. This may weaken the benefits of the resampling procedure,
since it can occasionally lead to resampling in areas where the
variability of the weights is large, rather than where the mean weight
is large. In the algorithms presented below, we denote the output of a
CIS algorithm with input $x$ and terminal time $t$ by the triplet
$(s,y,w)$ where $s$ is the last Poisson event before time $t$, $y$
the value of the process at that time and $w$ the output weight, and
write $(s,y,w)\sim \CIS(x,t)$. The first version of CIS with
resampling, termed CIS-R1, is presented in Algorithm {\algcisra}.

\begin{description}

\item[{\bf Algorithm {\algcisra}}:] {\it CIS algorithm with resampling} (CIS-R1).

\item[{\it Input}:] $\th$ a time to stop, $M$ a discretisation level, $C$
  an ESS threshold and $\xx{0}$ the initial value.

\item[{\it Initialise}:] Set $\prt{\algx}{0}{i}=\xx{0},\,\prt{s}{0}{i}=0,\,\prt{w}{0}{i}=1,\,1\leq i\leq N$ and $t_j=j\th/M,\,1\leq j\leq M-1$.
  
\item[\mdseries For $0\leq j\leq M-1,\,1\leq i\leq N$ perform the following steps] 

\item[{\it Step 1}:] Calculate the effective sample size of
  $\{|\prt{w}{j}{i}|\},\,\ESS=a_j^2/\sum_{i=1}^N (
  w_j^{(i)})^2$, where $a_j=\sum_{i=1}^N | w_j^{(i)}|$. If $\ESS<C$
  then sample $k_{i,j}\sim p(k)\propto|\prt{w}{j}{k}|,\,k=1,\ldots,N$
  and set $\prt{\phi}{j}{i} = \sign( \prt{w}{j}{k_{i,j}})
  a_j/N$; else set $k_{i,j}=j$ and $\prt{\phi}{j}{i}=\prt{w}{j}{i}$.

\item[{\it Step 2}:] Draw $(\prt{\tau}{j+1}{i},\,\prt{\algx}{j+1}{i},\,\prt{\bar{w}}{j+1}{i})\sim\CIS(\prt{\algx}{j}{k_{i,j}},\,t_{j+1}-\prt{s}{j}{k_{i,j}})$, set $\prt{s}{j+1}{i}=\prt{s}{j}{k_{i,j}}+\prt{\tau}{j+1}{i}$, and compute the weights $\prt{w}{j+1}{i} = \prt{\phi}{j}{i} \prt{\bar{w}}{j+1}{i}$.
  
\end{description}

Each step requires ${\cal O}(N)$ computations, therefore the
total complexity of CIS-R1 is ${\cal O}(N)$. When resampling occurs in
Step 1, the weights themselves are updated to $\prt{\phi}{j}{i} = a_j
\sign( \prt{w}{j}{k_{i,j}})/N$ to ensure the unbiasedness of the
algorithm. This also implies that surviving particles retain their weight
sign after resampling and CIS-R1 keeps propagating particles with
negative weights. 

A more efficient approach, albeit more computational expensive, is
presented below. The idea is based on simulating the process at
resampling times $t_j$ and assigning a numerically more stable weight
to each resampled particle. The algorithm,
termed CIS-R2, is presented below; for simplicity, we suppress the dependence of $q$ on $\theta$.

\begin{description}

\item[{\bf Algorithm \algcisrb}:] {\it Extended CIS algorithm with resampling} (CIS-R2).

\item[{\it Input}:] $\th$ a time to stop, $M$ a discretisation level, $C$
  an ESS threshold and $\xx{0}$ the initial value.

\item[{\it Initialise}:] Set $\prt{\algx}{0}{i}=\xx{0},\,\prt{s}{0}{i}=0,\,\prt{w}{0}{i}=1,\,1\leq i\leq N$ and $t_j=j\th/M,\,1\leq j\leq M-1$.
  
\item[\mdseries For $0\leq j\leq M-1,\,1\leq i\leq N$ perform the following steps] 

\item[{\it Step 1}:] Calculate the effective sample size of
  $\{|\prt{w}{j}{i}|\},\,\ESS=a_j^2/\sum_{i=1}^N (w_j^{(i)})^2$, where $a_j=\sum_{i=1}^N | w_j^{(i)}|$. If $\ESS<C$ then
  sample $\prt{\bx}{j}{i}\sim {\bar p_j}(\cdot)$, set
  $\prt{s}{j}{i}=t_j$ and $\prt{\phi}{j}{i}={\hat p}_j(\prt{\bx}{j}{i})/{\bar
    p}_j(\prt{\bx}{j}{i})$, where
        \begin{align}
          {\hat p_j}(y) &=
          \frac{1}{N}\sum_{k=1}^N\prt{w}{j}{i}\prdccn{}{\prt{\algx}{j}{i}}{y}{t_j-\prt{s}{j}{i}},\label{eq:ptilde}\\
          {\bar p_j}(y) &=
          a_j^{-1}\sum_{i=1}^N|\prt{w}{j}{i}|\prdccn{}{\prt{\algx}{j}{i}}{y}{t_j-\prt{s}{j}{i}}\label{eq:pbar},
        \end{align}

and set $\prt{\algx}{j}{i}=\prt{\bx}{j}{i}$; else set $\prt{\phi}{j}{i}=\prt{w}{j}{i}$.

\item[{\it Step 2}:] Draw $(\prt{\tau}{j+1}{i},\,\prt{\algx}{j+1}{i},\,\prt{\bar{w}}{j+1}{i})\sim\CIS(\prt{\algx}{j}{i},\,t_{j+1}-\prt{s}{j}{i})$, set $\prt{s}{j+1}{i}=\prt{s}{j}{i}+\prt{\tau}{j+1}{i}$, and compute the weights $\prt{w}{j+1}{i} = \prt{\phi}{j}{i} \prt{\bar{w}}{j+1}{i}$.
  
\end{description}

Resampling is still based on the absolute value of the weights and is
embedded in sampling from ${\bar p_j}$. Contrary to CIS-R1, when
resampling occurs at time $t_j$ the particles are also propagated to
that time. The weight given to each resampled particle
$\prt{\phi}{j}{i}$ is given by the ratio ${\hat p}_j/{\bar p}_j$ where
${\hat p}_j$ is the unbiased transition density estimator at time
$t_j$ and ${\bar p}_j$ is a mixture of our proposal
distributions. After resampling, the weighted particle set at time
$t_j$, $\{\prt{\bx}{j}{i},\prt{\phi}{j}{i}\}_{i=1}^N$ provides an
approximation to the density of the target process at time $t_j$ via
\begin{align*} {\hat \pi}_j(dy) =
\sum_{i=1}^N\prt{\phi}{j}{i}\delta_{\prt{\bx}{j}{i}}(dy)
\end{align*}
where $\delta_x(dy)$ denotes the Dirac delta mass located at $x$. By
conditioning on
$\{\prt{s}{j}{i},\,\prt{\algx}{j}{i},\,\prt{w}{j}{i}\}_{i=1}^N$, it is
trivial to show that ${\hat \pi}_j/N$ is an unbiased estimator of
${\hat p}_j$, and thus CIS-R2 retains the desired unbiased property.

Evaluating each of $\{\prt{\phi}{j}{i}\}_{i=1}^N$ requires ${\cal
  O}(N)$ calculations, hence the overall complexity of CIS-R2 is
${\cal O}(N^2)$. However, these weights will typically be numerically
more stable than that of CIS-R1 since they are calculated by
marginalising out the auxiliary variables up to time $t_j$; see also
\cite{doucet:score} for a similar implementation. Numerical evidence in
Section \ref{sec:simulation_resampling} suggest that CIS-R2 degrades only
linearly with $T$ whereas CIS-R1 superlinearly.


\section{Numerical illustrations}\label{sec:numericals}

In this section, we investigate numerically the performance of the CIS
algorithm under various implementation schemes and compare it to two
existing approaches. The first is the well known transition density
estimator of \cite{dur:gal:2002} which is based on discrete-time
approximations of the diffusion process and returns biased
estimates. The second is an unbiased approach introduced by Wagner
\cite{wagner:1989}, an overview of which is given below. All methods
under consideration are based on a finite imputation (either random or
deterministic) of points, thus allowing us to compare the approaches
in a fair manner on the basis of a fixed average number $K$ of
simulated values; that is, the number of trajectories $N$ multiplied
by the average number of points $M$ each trajectory is evaluated. We
refer to $K$ as the computational cost.

\subsection{Overview of Wagner's approach}
\label{sec:wagner}

The approach by Wagner (denoted by WGR hereafter) is a method which
estimates unbiasedly any quantities of interest by using unbiased
estimates of the transition density. The key of this method is the
representation of the unknown transition density via the following
integral equation
\begin{align*}
  \trd{\xx{0}}{\xx{t}}{t} = \wtrd{\xx{0}}{\xx{t}}{t} +
  \int_0^t\int_{\mR^d}\trd{\xx{0}}{\y}{s}{\cal C}(\y,\xx{t},t-s)\d \y
  \d s,
\end{align*}
where $\wtrd{\xx{0}}{\xx{t}}{t}$ is the transition density of a tractable
process and ${\cal C}(\y,\xx{t},t-s)$ is an analytically available
quantity which involves the Kolmogorov's backward operator \citep[for
more details see][]{wagner:1989}. By successive substitutions of the
transition density in the integral equation we obtain an infinite
expansion which is intractable but can be estimated unbiasedly using
importance sampling. In particular, if $t_0=t$ and $q_u(t,s),\,t>s$ is
a user-specified transition kernel, then the quantity
\begin{align*}
  \frac{\wtrd{\xx0}{\xx{t_n}}{t_n}}{p_u(t_n)}\prod_{k=1}^n\frac{{\cal
      C}(\xx{t_k},
    \xx{t_{k-1}},t_{k-1}-t_k)}{g_{bb}\{\xx0,\xx{t_k};\xx{t_{k-1}},\dm(\xx0)\}q_u(t_{k-1},t_k)[1-p_u(t_{k-1})]}
\end{align*}
is an unbiased estimator of the transition density. The algorithm
creates a random partition of the time interval by repeatedly
simulating time points $t_k\mid t_{k-1}\sim q_u(t_{k-1},t_k)$ and
drawing $\xx{t_k}\mid\xx0,\xx{t_{k-1}}$ according to Brownian bridge
dynamics. The function $p_u(t)$ is an absorption probability which
terminates the simulation and $n$ is the total number of simulated
time points.
Wagner proposes using
\begin{align*}
  p_u(t) =
  \lcb\sum_{m=0}^{\infty}\frac{\pin^m\Gamma(\pex)^mt^{m\pex}}{\Gamma(m\pex+1)}\rcb^{-1},\quad
q_u(t,s)=\frac{\pin\ind_{(0,t)}(s)(t-s)^{\pex-1}p_u(t)}{p_u(s)[1-p_u(t)]},
\end{align*}
where $\pin,\pex>0$ are specified by the user. 

Notice that when $\pex=1$ the above quantities can be calculated
analytically and the time points $\{t_k\}$ correspond to arrival times of
a simple homogeneous Poisson process of rate $\pin$ in $[0,t]$. We
refer to this particular implementation as WGR1. However, when
$\pex\neq 1$, calculating these quantities without sacrificing the
unbiasedness is not straightforward. 

For an objective comparison with CIS, we propose simulating the
time points $\{t_k\}$ using a process similar to that of CIS. In particular,
choosing
\begin{align*}
  p_u(t) = \exp\lcb-\frac{\gamma t^{\pex}}{\pex}\rcb, \quad q_u(t,s) = \frac{\gamma\ind_{(0,t)}(s)(t-s)^{\pex-1}p_u(t-s)}{1-p_u(t)},
\end{align*}
results in $\{t_k\}$ being a simple transformation of $\{\tau_k\}$ as
$t_k=t-\tau_k$, where $\{\tau_k\}$ are the time points simulated in a
CIS algorithm. We term such an implementation by WGR2. Empirical
results in the subsequent section suggest that WGR2 can
perform substantially better than WGR1.






\subsection{Stochastic volatility model}
\label{sec:sv_model}

We consider a simple bivariate diffusion process, solution to
\begin{align}
  \d X_{1,t} &= -\frac{\sigma_1^2}{2}\tanh (X_{1,t})\d t + \sigma_1\d B_{1,t},\notag \\
  \d X_{2,t} &= \sigma_2 [2 + \tanh(X_{1,t})]\d B_{2,t},\label{eq:sv_model}
\end{align}
where $B_1$ and $B_2$ are two independent standard
Brownian motions and $\sigma_1,\sigma_2>0$. We refer to the model as
SV. The first coordinate of the system is an ergodic process with
invariant mean $0$ while $\sigma_1$ governs the speed at which the
process returns to the invariant mean. It is straightforward to check
that the drift and diffusion matrix of this model satisfy conditions
for a valid CIS implementation. Despite its
fairly simple nature, \eqref{eq:sv_model} cannot be transformed to a diffusion with constant volatility, 
and therefor unbiased estimation cannot be performed using any variant of the Exact Algorithm.

\subsubsection{Implementation considerations}

The purpose of this section is to illustrate the effect that the
choice of the Poisson rate and proposal process can have on the
efficiency of the CIS algorithm. We employ CIS to estimate the mean
vector of \eqref{eq:sv_model} at terminal time $T=1$ using four
different implementations. The first is the copycat CIS as described
in Algorithm {\algcc} and estimates the integral in \eqref{eq:unbias}
by Monte Carlo. The second is implemented as the first but sets a
constant Poisson rate $\lambda(\tau)=\pin$; we refer to this estimator
as CIS$_{con}$. The third attempts to optimise the first by sampling
from the optimal density in \eqref{eq:opt_dens} and evaluating the
integral in \eqref{eq:unbias} analytically; we refer to this estimator
as CIS$_{opt}$. Simulating directly from the optimal density is not
straightforward; instead, we use an approximation where $x_{1,t}$ is
sampled from an Euler density as implied by \eqref{eq:sv_model}, while
the conditional distribution of $x_{2,t}$ given $x_{1,t}$ is
analytically tractable and is sampled using rejection
sampling. Finally, the fourth is implemented as the first but adapts
only the drift functional at every Poisson event; we refer to this
estimator as CIS$_{nc}$. To ensure a fair comparison, the value of
$\pin$ for each estimator is selected such that their computational
costs are comparable.

Figure \ref{fig:wrong_imp} shows the kernel density estimates of the
estimators, obtained by $1000$ independent realisations. Notice that
CIS$_{con}$ produces extreme estimates and its distribution is heavy
tailed. Both of these issues are remedied by using a nonconstant
Poisson rate, as illustrated by the distribution of the
CIS. CIS$_{opt}$ outperforms substantially the other methods while the
estimates obtained from CIS$_{nc}$ were highly unstable and exhibited
very heavy tails (not shown here).

\begin{figure}
\scriptsize
  \psfrag{CISp}{\hspace{1ex}CIS}
  \psfrag{CISnc}{\hspace{1ex}CIS$_{con}$}
  \psfrag{CISop}{\hspace{1ex}CIS$_{opt}$}
 \subfloat[]{\includegraphics[scale=0.42]{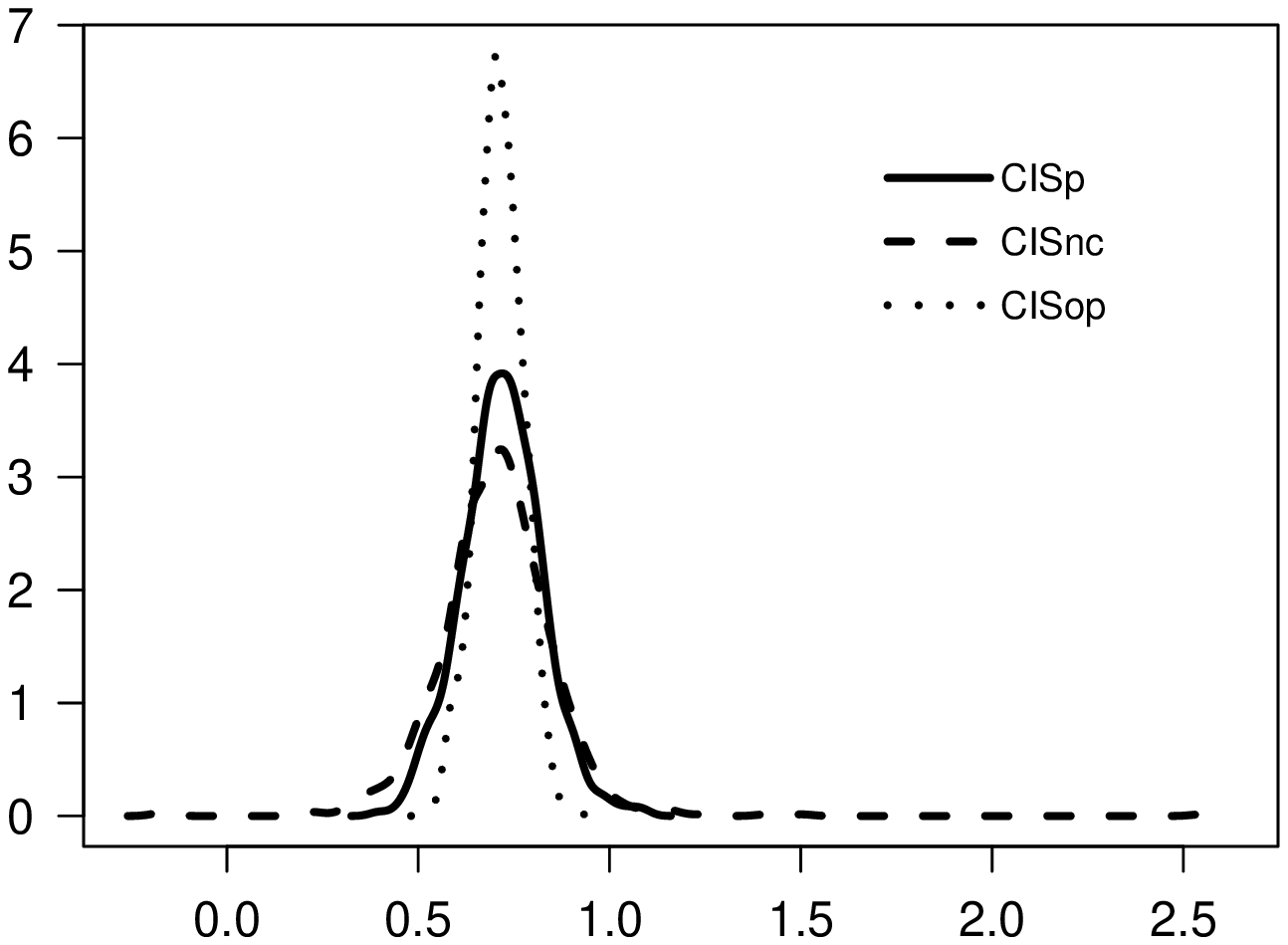} \label{fig:wrong_impa}}%
\subfloat[]{\includegraphics[scale=0.42]{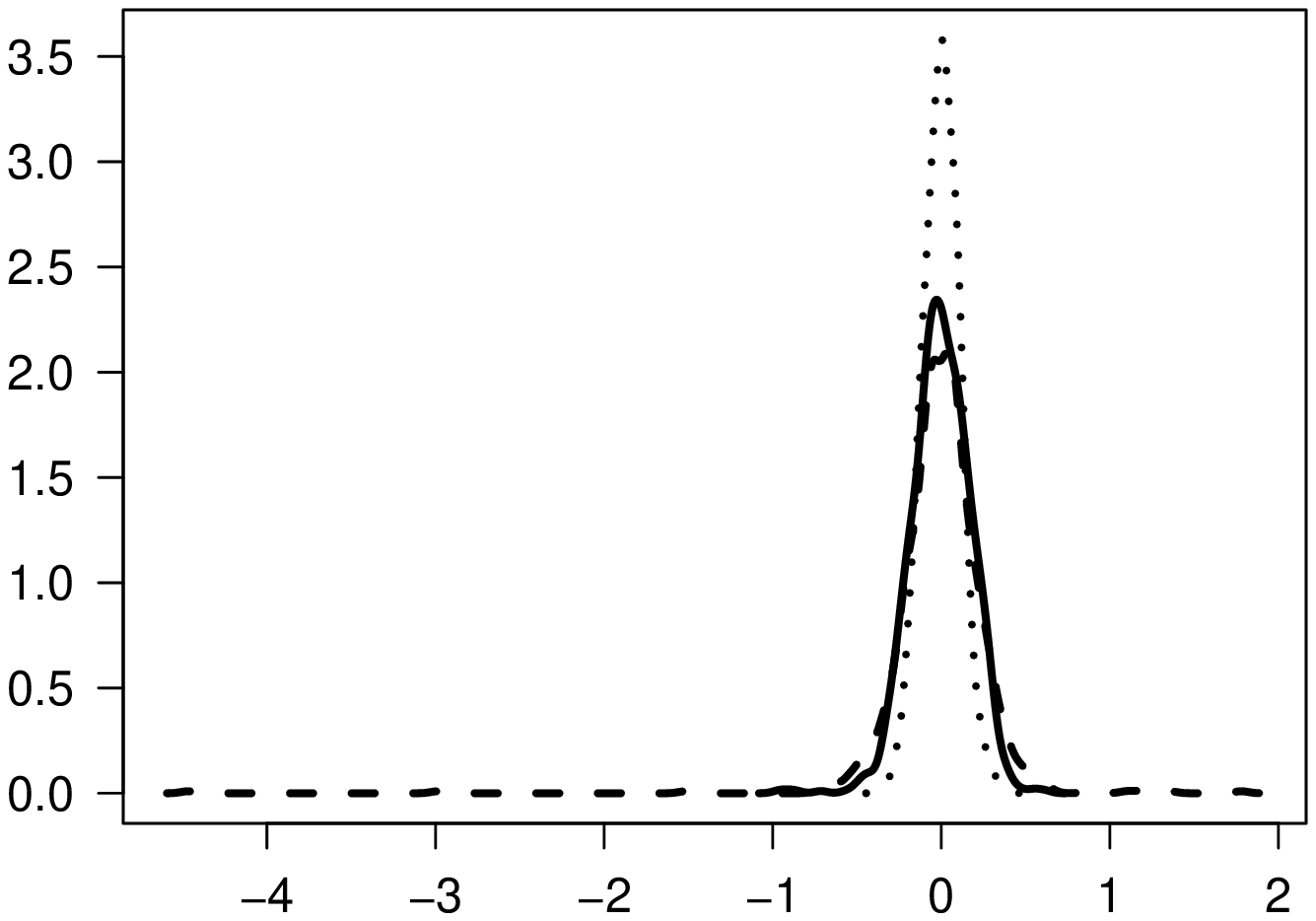}   \label{fig:wrong_impb}}
  \caption[]{The SV model with $\XX{0}=(1,0)'$ and $(\sigma_1,
    \sigma_2)=(1, 1/2)$. Kernel density estimates of estimators of
    \subref{fig:wrong_impa} $\mE(\XX{1,1})$ and \subref{fig:wrong_impb}
    $\mE(\XX{2,1})$ using CIS, CIS$_{opt}$ and CIS$_{con}$.}
  \label{fig:wrong_imp}
\end{figure}

\subsubsection{Estimation over large time intervals}
\label{sec:simulation_resampling}

We now assess the performance of CIS over large time intervals,
illustrate the effect of resampling using the two approaches developed
in Section \ref{sec:resampling} and compare CIS to the WGR
approach. We consider various values for the time horizon from $t=1/2$
to $t=10$ with stepsize $1/2$.

We begin by comparing the plain CIS, WGR1 and WGR2. The CIS algorithm
is employed with $N=1000$ whilst the Monte Carlo size of the other two
methods is tuned such that their computational cost are comparable to
that of CIS. Figure \ref{fig:inc_dt_1} illustrates the performance of
these three approaches. Notice that the approximation error of CIS is
substantially smaller than that of WGR1, and comparable to that of
WGR2. The WGR1 approach was highly unstable and generated many extreme
estimates. For this reason, and for this comparison only, we have
assessed the performance of the estimators using the median absolute
deviation (MAD) which is robust to outliers.

Figures \ref{fig:nores}-\ref{fig:res_nsq} present the root mean square
(RMSE) of the plain CIS along with its two resampling modifications,
CIS-R1 and CIS-R2. For an empirical illustration of the rate at which
the error increases, a linear, quadratic and cubic fit overlay the
plots. The plain CIS algorithm exhibits an approximation error which
increases cubically with time, and is outperformed significantly by
CIS-R1, whose error grows quadratically. Finally, at the expense of a
${\cal O}(N^2)$ computational cost, the performance of CIS-R2 seems to
deteriorate only linearly with time, implying that for sufficiently
large $T$, CIS-R2 will always outperform the other two methods.
%

\begin{figure}[ht]
  \centering
  \scriptsize
  \psfrag{cisnr}{CIS \footnotemark[1]}
  \psfrag{wgr1}{WGR1 \footnotemark[2]}
  \psfrag{wgr2}{WGR2 \footnotemark[3]}
  \psfrag{cisres}{CIS-R1 $^a$}
  \psfrag{cispres}{CIS-R2 $^a$}
  \subfloat[]{\includegraphics[scale=0.42]{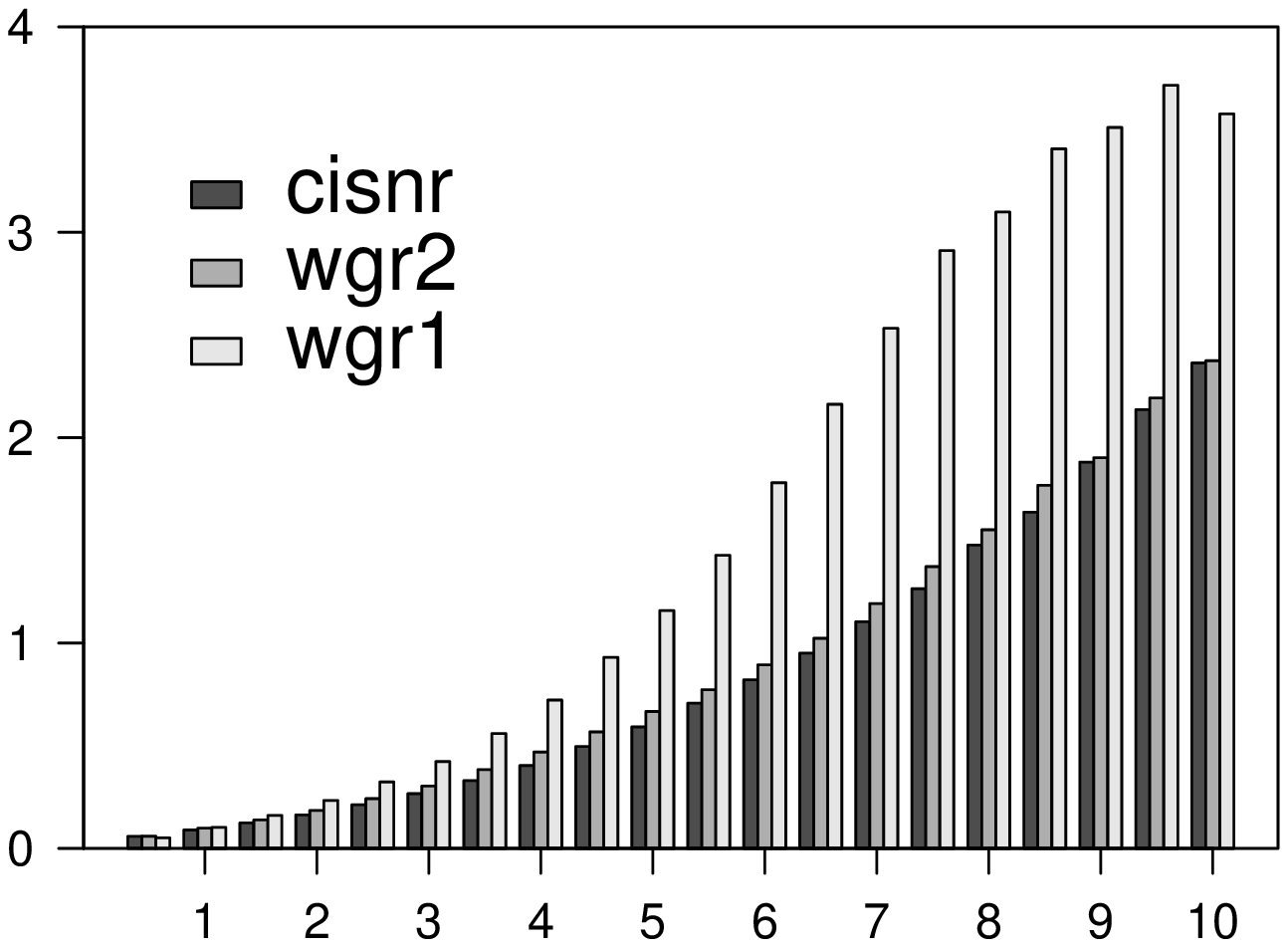}
  \label{fig:inc_dt_1}}
\subfloat[]{\includegraphics[scale=0.42]{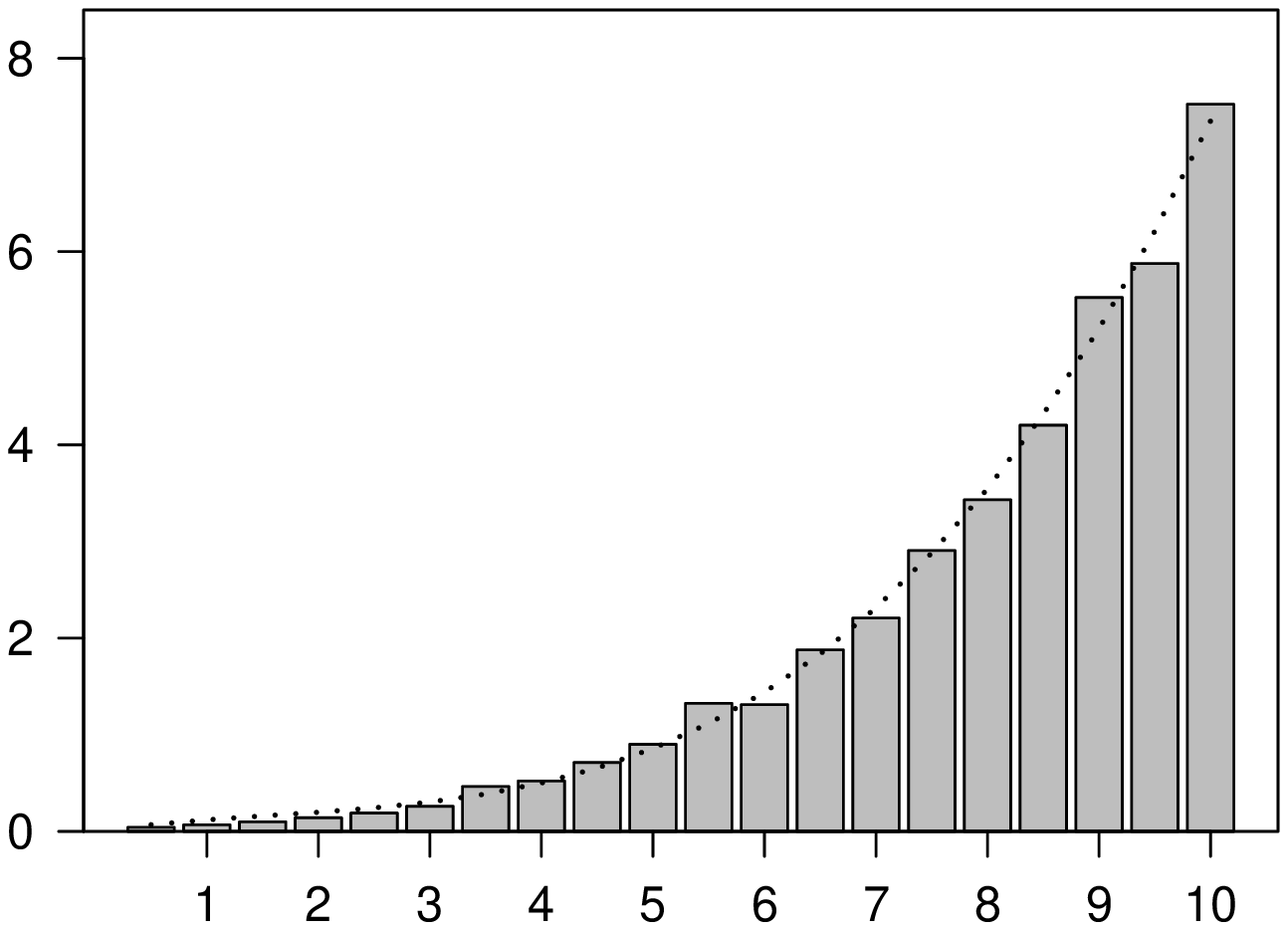}
  \label{fig:nores}}\\
\subfloat[]{\includegraphics[scale=0.42]{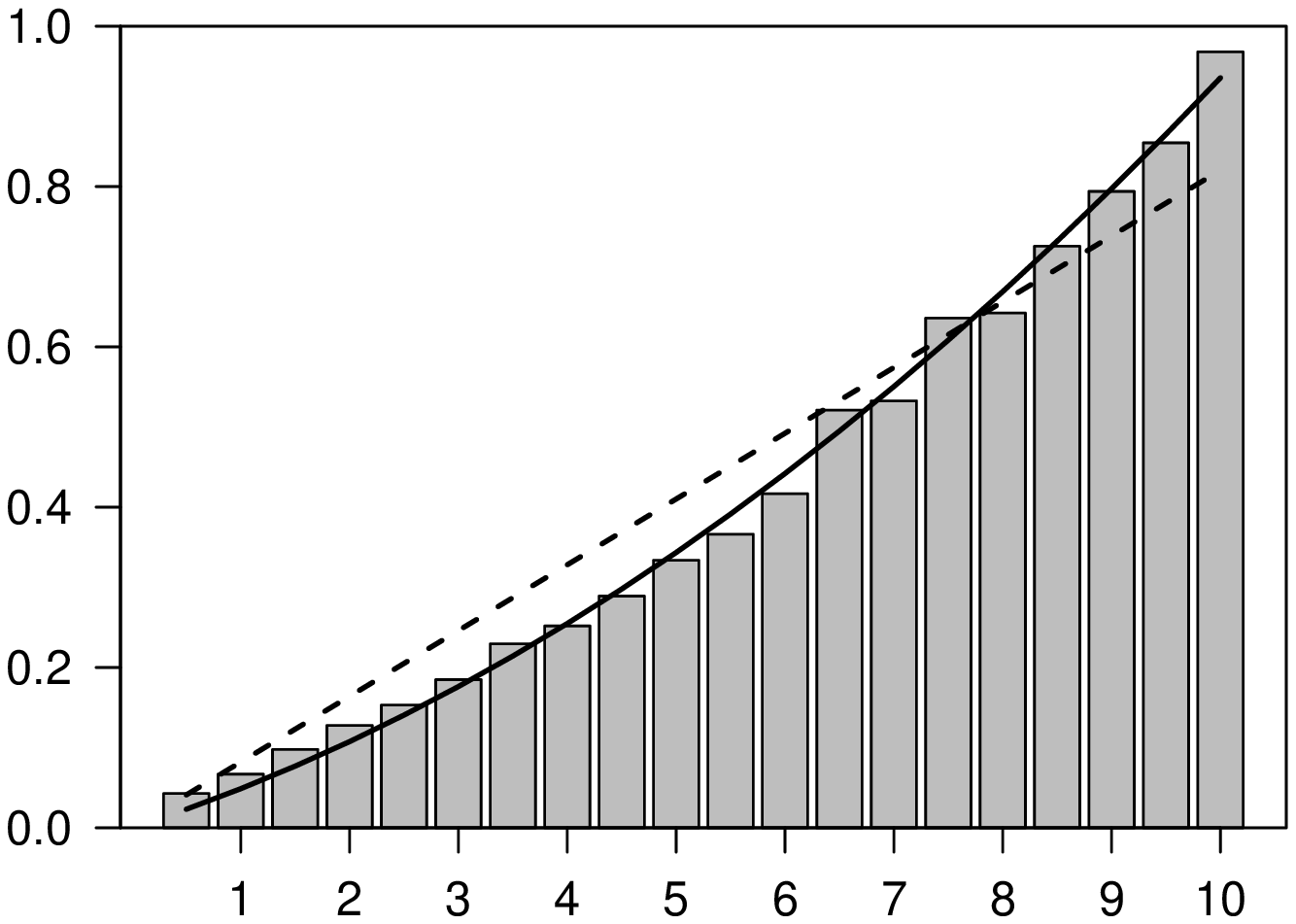}
  \label{fig:res_n}}
\subfloat[]{\includegraphics[scale=0.42]{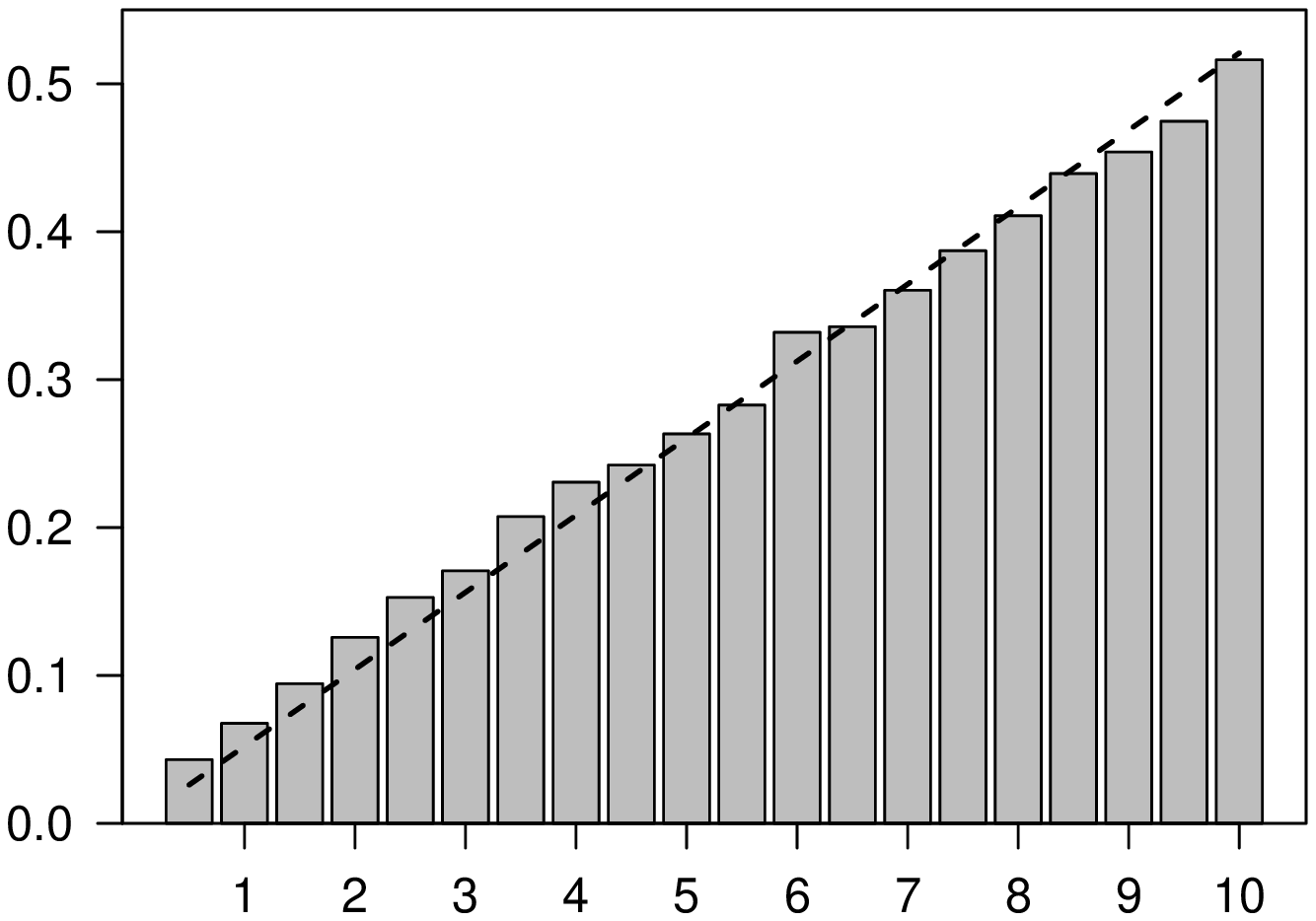}
  \label{fig:res_nsq}}
  \label{fig:inc_dt}
\caption[]{The SV model with $\XX{0}=(1,0)'$ and $(\sigma_1,
    \sigma_2)=(1, 1/2)$. In \subref{fig:inc_dt_1}: MAD of mean vector
    estimators over increasing $T$. In \subref{fig:nores}: RMSE using
    CIS. In \subref{fig:res_n}: RMSE using CIS-R1. In
    \subref{fig:res_nsq}: RMSE using CIS-R2. The dotted line is the
    cubic fit, the solid line is the quadratic fit and the dashed line
    is the linear fit. The results are averages of $10000$
    replications.}
\end{figure}
\footnotetext[1]{Implemented with $\lambda(\tau)=\tau^{-1/2}$}
\footnotetext[2]{Implemented with $\lambda(\tau)=1$}
\footnotetext[3]{Implemented with $\lambda(\tau)=1/2\tau^{-1/2}$}

\subsection{Bivariate CIR model}
\label{sec:cir_model}

In this section we deal with a bivariate Cox-Ingersoll-Ross process
(hereafter denoted by CIR), solution to
\begin{align*}
  \d X_{1,t} &= -\rho_1\lp X_{1,t}-\mu_1\rp\d t + \sigma_1\sqrt{X_{1,t}}\d W_t,\notag \\
  \d X_{2,t} &= -\rho_2\lp X_{2,t}-\mu_2\rp\d t + \sigma_2\sqrt{X_{2,t}}\lp\rho\d W_t + \sqrt{1-\rho^2}\d B_t\rp.
\end{align*}
The instantaneous correlation between the coordinates of the system is
$\rho$, whilst each marginal is a univariate Cox-Ingersoll-Ross
process \citep{cox:ing:ross}. Even though the model can be reduced
into one with unit diffusion matrix, the transformed drift cannot be
written in a gradient form. Therefore, existing exact methods are not
applicable. The CIR process is an example for which the conditions of
Theorem (toDo: REF THRM HERE) do not hold. It is therefore interesting
to examine empirically the behaviour of the CIS algorithm. We perform
two sets of simulations.

The first examines the stability of the weights at different terminal
times for a given set of parameters and fixed initial point. The
distribution of the weights is shown in Figure
\ref{fig:weights_cir}. Notice that, despite the deviation of the
process from Theorem's (toDo: ref here) assumptions, the weights
behave in a stable manner.

The second assesses the performance of the GCIS method for estimating
the transition density of the process and compare it to the
conventional DG estimator \citep{dur:gal:2002} under two
transformation schemes: log transformation and transformation to unit
diffusion matrix (also known as Lamperti transform). The DG method is
employed using $N={\cal O}(M^2)$, where $N$ and $M$ are the number of
trajectories and imputed points per trajectory respectively. This
setting follows from the results of \cite{stramer:yan:asympt} on
optimal allocation of the computing resources. As recommended by the
authors, we set $N=M^2$ as a crude choice. The performance of the
estimators is examined under an increasing sequence of computational
costs $K_i,\, i=1,2,\ldots,7$. The $K_i$s are chosen by selecting
various values for the level of discretisation for the DG approach,
$M_i=2^i$. Thus $K_i = N_iM_i = M_i^3,\, i=1,2,\ldots,7$.

Table \ref{table:cir_tab} summarises the performance of the transition
density estimators for a given initial and terminal point. The GCIS
algorithm performs substantially better than CIS, clearly illustrating
the effect of using a guided proposal process.  For lower values of
the computational cost, the DG approach outperforms CIS but this
behaviour changes as the cost increases. This is not surprising since
DG tries to account for two sources of error, bias arising from the
discretisation and variance due to Monte Carlo error, whereas CIS
allocates all computational effort to the Monte Carlo error. The
Lamperti transform is particularly beneficial to the accuracy of the
CIS and approximately halves its RMSE. Further comparisons were also
performed on different terminal points which yielded similar results
(not shown here).


\begin{figure}
  \centering
  \footnotesize
  \psfrag{t1}{$T=1/10$}
  \psfrag{t2}{$T=1/2$}
  \psfrag{t3}{$T=1$}
  \includegraphics[scale=0.42, angle=-90]{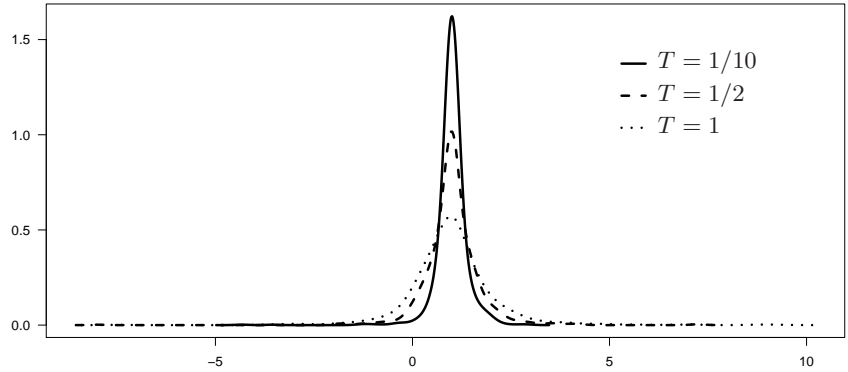}
  \caption{The CIR model with $(\rho_1, \mu_1, \sigma_1,\rho_2, \mu_2,
    \sigma_2, \rho)=(0.6, 2.5, 0.45, 0.3, 3.0, 0.35, 0.5)$, and
    $\XX{0}=(2.5, 3)'$. Kernel density estimates of the weight
    distribution using CIS at different terminal times
    $T$.}\label{fig:weights_cir}
\end{figure}

\newpage
 \footnotetext[4]{$\lambda(\tau)=1/2\tau^{-1/2}$; each
  trajectory was evaluated on average at $3.3$ time points.}
\footnotetext[5]{$\lambda(\tau)=\tau^{-1/2}$; each trajectory was
  evaluated on average at $6.1$ time points.}
\begin{table}
\caption{The CIR model with $(\rho_1, \mu_1, \sigma_1,\rho_2, \mu_2,
  \sigma_2, \rho)=(0.6, 2.5, 0.45, 0.3, 3.0, 0.35, 0.5)$,
  $\XX{0}=\XX{1}=(2.5, 3)'$. Mean, Monte Carlo error and RMSE of transition
  density estimators under
  increasing computational cost $K$. The results are averages of
  $1000$ replications.}
\begin{center}
 \begin{tabular}{llrrrrrrrr}
\hline
& &$K_1$ &$K_2$ &$K_3$ &$K_4$ &$K_5$ &$K_6$ &$K_7$ \\
\hline\\[0.05cm]
CIS \footnotemark[4] &Mean & 0.6692 & 0.6200 & 0.6369 & 0.6380 & 0.6391 & 0.6387 & 0.6386 \\
(log-transform)  &MC Err. & 0.8901 & 0.2904 & 0.1261 & 0.0640 & 0.0227 & 0.0082 & 0.0032 \\
 &RMSE & 0.8907 & 0.2910 & 0.1261 & 0.0640 & 0.0227 & 0.0082 & 0.0032 \\[0.4cm]
GCIS \footnotemark[5]  &Mean & 0.6573 & 0.6324 & 0.6402 & 0.6386 & 0.6389 & 0.6386 & 0.6387 \\
(log-transform) &MC Err. & 0.4425 & 0.1491 & 0.0550 & 0.0201 & 0.0073 & 0.0025 & 0.0009 \\
 &RMSE & 0.4429 & 0.1492 & 0.0550 & 0.0201 & 0.0073 & 0.0025 & 0.0009 \\[0.4cm]
GCIS \footnotemark[5]  &Mean & 0.6361 & 0.6357 & 0.6388 & 0.6386 & 0.6387 & 0.6386 & 0.6386 \\
(lam-transform) &MC Err. & 0.2017 & 0.0704 & 0.0250 & 0.0093 & 0.0033 & 0.0012 & 0.0004 \\
 &RMSE & 0.2017 & 0.0704 & 0.0250 & 0.0093 & 0.0033 & 0.0012 & 0.0004 \\[0.4cm]
DG  &Mean & 0.5231 & 0.5786 & 0.6084 & 0.6235 & 0.6310 & 0.6348 & 0.6367 \\
(log-transform) &MC Err. & 0.0997 & 0.0431 & 0.0167 & 0.0067 & 0.0028 & 0.0013 & 0.0006 \\
 &RMSE & 0.1525 & 0.0739 & 0.0344 & 0.0165 & 0.0081 & 0.0040 & 0.0020 \\[0.4cm]
DG &Mean & 0.5315 & 0.5853 & 0.6110 & 0.6248 & 0.6316 & 0.6351 & 0.6368 \\
(lam-transform) &MC Err. & 0.0873 & 0.0324 & 0.0124 & 0.0049 & 0.0018 & 0.0008 & 0.0004 \\
 &RMSE & 0.1381 & 0.0623 & 0.0303 & 0.0146 & 0.0072 & 0.0036 & 0.0018 \\

\hline
 \end{tabular}
\end{center}
\label{table:cir_tab}
\end{table}





\section{Discussion} \label{sec:discussion}

This paper has introduced the CIS sampler for unbiased simulation of quite general classes of stochastic processes, and demonstrated that the method performs well in practice in a range of scenarios. In doing this we have circumvented the two major limitations of the exact algorithm framework for simulating diffusions, while also providing a framework which can be applicable well beyond the diffusion case.

We have shown that classical ideas due to
Wagner can be used to provide alternative approaches, although it appears that CIS has distinct computational advantages, deriving in particular from its sequential
construction. There are also other, more recent ideas, for unbiasedly sampling from a multivariate diffusions. Using a methodology based on Malliavin weights, \cite{henry2015exact} construct unbiased estimators for diffusion functionals at a fixed time, but it can only be applied when diffusion coefficients are constant. There is also theoretical work on importance
sampling for diffusions by Bally and Kohatsu-Higa  \cite[]{bally2015probabilistic}, from which it may be able to construct
sequential Monte Carlo schemes.
This latter approach is derived from the parametrix method for solving parabolic differential
equations and is based on an iterated integral representation of the
transition density. It is
interesting that interpreting CIS as an iterated integral is not
equivalent to the parametrix and thus, albeit
related, CIS appears to be fundamentally different to this approach. Finally \cite{blanchet2017exact} suggest an approach for unbiasedly simulating multivariate diffusions using rough path techniques, but currently these methods have an expected computational cost which is infinite.

The theoretical stability of the method provides quite delicate and subtle questions, and without carefully constructed proposal stochastic processes, the Monte Carlo weight can easily fail to be stable (and may even fail to be in $L^1$ in some situations). However we give a collection of positive results which provide stability guarantees under quite reasonable and checkable conditions.

In Section \ref{sec:numericals} we have investigated resampling
schemes which are required to provide some robusteness for CIS as the
time interval grows. There are many issues requiring further
investigation here (both theoretically and empirically) including how
best to deal with negative weights, and how to optimally choose times
to update weights in order to minimise the need for resampling. 









\appendix

\section{Appendix}
\def\breq{\notag\\&}
\def\fmrwcc{\mrwcc{}{x}{y}{u}}
\def\fprdcc{\prdcc{}{x}{y}{u}}
\def\fqdder{\qdderf{\theta}{x}{y}{u}}
\def\fqder{\qderf{\theta}{x}{y}{u}}

\paragraph{\it Proof of Lemma \ref{lemma:iweightcc}}

By definition,
\begin{align}\label{eq:cciweight}
  \fmrwcc = 1 + \frac{1}{\lambda(u)\fprdcc}\left[\tro-\prot{\theta}
\right] \fprdcc.
\end{align}
We start by calculating the forward operator of the target process
applied to the proposal density. Specifically, we have that
\begin{align*}
  &\frac{\tro \fprdcc}{\fprdcc} =
  \frac{\sum_{i,j}\frac{\partial^2}{\partial y_i\partial
    y_j}[\dm_{ij}(y)\fprdcc]}{2\fprdcc} -  \frac{\sum_{i}\frac{\partial}{\partial
    y_i}[\bb_i(y) \fprdcc]}{\fprdcc}\breq
=\frac{1}{2}\sum_{i,j}\bigg\{\frac{\partial^2}{\partial y_i\partial
    y_j}\dm_{ij}(y) + \dm_{ij}(y)[\fqdder]_{ij} +
  \frac{\partial\dm_{ij}(y)}{\partial y_i}[\fqder]_j \breq +
  \frac{\partial \dm_{ij}(y)}{\partial y_j}[\fqder]_i\bigg\} - \sum_{i}\lcb \frac{\partial \bb_i(y)}{\partial
    y_i} + \bb_i(y)[\fqder]_i\rcb\breq
=\frac{1}{2}\sum_{i,j}\lcb \frac{\partial^2\dm_{ij}(y)}{\partial y_i\partial
    y_j} + \gamma_{ij}(y)[\fqdder]_{ij}\rcb \breq +
  \sum_{i}\lcb\sum_j\frac{\partial\dm_{ij}(y)}{\partial y_j} -
  \bb_i(y)\rcb[\fqder]_i - \sum_{i}\frac{\partial \bb_i(y)}{\partial
    y_i}.
\end{align*}
Writing the above expression in terms of vectors and matrices, we
obtain that
\begin{align}\label{eq:ccKT}
\frac{\tro \fprdcc}{\fprdcc} &= \frac{1}{2}\lsb\dm(y):\fqdder
 + \dmdder(y):\one_{\nd\times\nd} \rsb \notag\\
&+ \sum_{i}\lcb  [\dmder(y) \one_{\nd\times 1}]_i - \bb_i(y)\rcb[\fqder]_i -
\bder(y)\cdot\one_{\nd\times 1}\notag\\
&=\frac{1}{2}\lsb\dm(y):\fqdder + \dmdder(y):\one_{\nd\times\nd}
\rsb\notag\\ 
&+ \lsb\dmder(y) \one_{\nd\times 1} - b(y)\rsb\cdot\fqder -
\bder(y)\cdot\one_{\nd\times 1}.
\end{align}
The analogous calculation for the operator of the proposed process is
simpler since the drift and diffusion matrix are constant. In
particular, we obtain that
\begin{align}\label{eq:ccKP}
\frac{\pro \fprdcc}{\fprdcc} = \frac{1}{2}\dm(x):\fqdder - \bb(x)\cdot\fqder.
\end{align}
Combining \eqref{eq:cciweight}, \eqref{eq:ccKT} and \eqref{eq:ccKP},
we obtain the formula for the incremental weight as
\begin{align*}
\fmrwcc &= 1 +
  \frac{1}{\lambda(u)}\bigg(\frac{1}{2}\Big\{[\dm(y)-\dm(x)]:\fqdder +
  \dmdder(y):\one_{\nd\times\nd}
  \Big\}\notag\\
&+\lsb\dmder(y) \one_{\nd\times 1} - \bb(y) +
\bb(x)\rsb\cdot\fqder - \bder(y)\cdot\one_{\nd\times 1}\bigg).
\end{align*}
\qed


\bibliographystyle{imsart-number}
\bibliography{reference}

\begin{thebibliography}{28}

\bibitem{sah:2008}
\begin{barticle}[author]
\bauthor{\bsnm{A{\"\i}t-Sahalia},~\bfnm{Yacine}\binits{Y.}}
(\byear{{2008}}).
\btitle{Closed-form likelihood expansions for multivariate diffusions}.
\bjournal{Ann. Statist.}
\bvolume{{36}}
\bpages{{906--937}}.
\bdoi{{10.1214/009053607000000622}}
\end{barticle}
\endbibitem

\bibitem{bally2015probabilistic}
\begin{barticle}[author]
\bauthor{\bsnm{Bally},~\bfnm{Vlad}\binits{V.}},
  \bauthor{\bsnm{Kohatsu-Higa},~\bfnm{Arturo}\binits{A.}} \betal{et~al.}
(\byear{2015}).
\btitle{A probabilistic interpretation of the parametrix method}.
\bjournal{The Annals of Applied Probability}
\bvolume{25}
\bpages{3095--3138}.
\end{barticle}
\endbibitem

\bibitem{besk:papa:robe:2008}
\begin{barticle}[author]
\bauthor{\bsnm{Beskos},~\bfnm{Alexandros}\binits{A.}},
  \bauthor{\bsnm{Papaspiliopoulos},~\bfnm{Omiros}\binits{O.}} \AND
  \bauthor{\bsnm{Roberts},~\bfnm{Gareth~O.}\binits{G.~O.}}
(\byear{2008}).
\btitle{A factorisation of diffusion measure and finite sample path
  constructions}.
\bjournal{Methodol. Comput. Appl. Probab.}
\bvolume{10}
\bpages{85--104}.
\bmrnumber{MR2394037}
\end{barticle}
\endbibitem

\bibitem{besk:papa:robe:fear:2006}
\begin{barticle}[author]
\bauthor{\bsnm{Beskos},~\bfnm{Alexandros}\binits{A.}},
  \bauthor{\bsnm{Papaspiliopoulos},~\bfnm{Omiros}\binits{O.}},
  \bauthor{\bsnm{Roberts},~\bfnm{Gareth~O.}\binits{G.~O.}} \AND
  \bauthor{\bsnm{Fearnhead},~\bfnm{Paul}\binits{P.}}
(\byear{2006}).
\btitle{Exact and computationally efficient likelihood-based estimation for
  discretely observed diffusion processes}.
\bjournal{J. R. Stat. Soc. Ser. B Stat. Methodol.}
\bvolume{68}
\bpages{333--382}.
\bnote{With discussion and a reply by the authors}.
\bmrnumber{MR2278331}
\end{barticle}
\endbibitem

\bibitem{blanchet2017exact}
\begin{barticle}[author]
\bauthor{\bsnm{Blanchet},~\bfnm{Jose}\binits{J.}} \AND
  \bauthor{\bsnm{Zhang},~\bfnm{Fan}\binits{F.}}
(\byear{2017}).
\btitle{Exact Simulation for Multivariate {I}t\'{o} Diffusions}.
\bjournal{arXiv:1706.05124}.
\end{barticle}
\endbibitem

\bibitem{MR2042661}
\begin{bbook}[author]
\bauthor{\bsnm{Cont},~\bfnm{Rama}\binits{R.}} \AND
  \bauthor{\bsnm{Tankov},~\bfnm{Peter}\binits{P.}}
(\byear{2004}).
\btitle{Financial modelling with jump processes}.
\bseries{Chapman \& Hall/CRC Financial Mathematics Series}.
\bpublisher{Chapman \& Hall/CRC, Boca Raton, FL}.
\bmrnumber{2042661}
\end{bbook}
\endbibitem

\bibitem{cox:ing:ross}
\begin{barticle}[author]
\bauthor{\bsnm{Cox},~\bfnm{John~C.}\binits{J.~C.}},
  \bauthor{\bsnm{Ingersoll},~\bfnm{Jr. Jonathan~E.}\binits{J.~J.~E.}} \AND
  \bauthor{\bsnm{Ross},~\bfnm{Stephen~A.}\binits{S.~A.}}
(\byear{1985}).
\btitle{A theory of the term structure of interest rates}.
\bjournal{Econometrica}
\bvolume{53}
\bpages{385--407}.
\bmrnumber{MR785475}
\end{barticle}
\endbibitem

\bibitem{MR790622}
\begin{barticle}[author]
\bauthor{\bsnm{Davis},~\bfnm{M.~H.~A.}\binits{M.~H.~A.}}
(\byear{1984}).
\btitle{Piecewise-deterministic {M}arkov processes: a general class of
  nondiffusion stochastic models}.
\bjournal{J. Roy. Statist. Soc. Ser. B}
\bvolume{46}
\bpages{353--388}.
\bnote{With discussion}.
\bmrnumber{790622}
\end{barticle}
\endbibitem

\bibitem{MR1283589}
\begin{bbook}[author]
\bauthor{\bsnm{Davis},~\bfnm{M.~H.~A.}\binits{M.~H.~A.}}
(\byear{1993}).
\btitle{Markov models and optimization}.
\bseries{Monographs on Statistics and Applied Probability}
\bvolume{49}.
\bpublisher{Chapman \& Hall, London}.
\bdoi{10.1007/978-1-4899-4483-2}.
\bmrnumber{1283589}
\end{bbook}
\endbibitem

\bibitem{douc:cappe:moulines:2005}
\begin{binproceedings}[author]
\bauthor{\bsnm{Douc},~\bfnm{R}\binits{R.}},
  \bauthor{\bsnm{Cappe},~\bfnm{O}\binits{O.}} \AND
  \bauthor{\bsnm{Moulines},~\bfnm{E}\binits{E.}}
(\byear{2005}).
\btitle{Comparison of resampling schemes for particle filtering}.
In \bbooktitle{ISPA 2005: Proceedings of the 4th International Symposium on
  Image and Signal Processing and Analysis}
(\beditor{\bfnm{S}\binits{S.}~\bsnm{Loncaric}},
  \beditor{\bfnm{H}\binits{H.}~\bsnm{Babic}} \AND
  \beditor{\bfnm{M}\binits{M.}~\bsnm{Bellanger}}, eds.)
\bpages{64-69}.
\end{binproceedings}
\endbibitem

\bibitem{duffie2010dynamic}
\begin{bbook}[author]
\bauthor{\bsnm{Duffie},~\bfnm{Darrell}\binits{D.}}
(\byear{2010}).
\btitle{Dynamic asset pricing theory}.
\bpublisher{Princeton University Press}.
\end{bbook}
\endbibitem

\bibitem{dur:gal:2002}
\begin{barticle}[author]
\bauthor{\bsnm{Durham},~\bfnm{G.~B}\binits{G.~B.}} \AND
  \bauthor{\bsnm{Gallant},~\bfnm{R.~A}\binits{R.~A.}}
(\byear{2002}).
\btitle{Numerical techniques for maximum likelihood estimation of
  continuous-time diffusion processes}.
\bjournal{Journal of Business and Economic Statistics}
\bvolume{20}
\bpages{297--316}.
\end{barticle}
\endbibitem

\bibitem{fearnhead2016piecewise}
\begin{barticle}[author]
\bauthor{\bsnm{Fearnhead},~\bfnm{Paul}\binits{P.}},
  \bauthor{\bsnm{Bierkens},~\bfnm{Joris}\binits{J.}},
  \bauthor{\bsnm{Pollock},~\bfnm{Murray}\binits{M.}} \AND
  \bauthor{\bsnm{Roberts},~\bfnm{Gareth~O}\binits{G.~O.}}
(\byear{2016}).
\btitle{{Piecewise deterministic Markov processes for continuous-time Monte
  Carlo}}.
\bjournal{arXiv:1611.07873}.
\end{barticle}
\endbibitem

\bibitem{fear:papa:robe:2008}
\begin{barticle}[author]
\bauthor{\bsnm{Fearnhead},~\bfnm{Paul}\binits{P.}},
  \bauthor{\bsnm{Papaspiliopoulos},~\bfnm{Omiros}\binits{O.}} \AND
  \bauthor{\bsnm{Roberts},~\bfnm{Gareth~O.}\binits{G.~O.}}
(\byear{{2008}}).
\btitle{{Particle filters for partially observed diffusions}}.
\bjournal{J. R. Stat. Soc. Ser. B Stat. Methodol.}
\bvolume{{70}}
\bpages{{755--777}}.
\end{barticle}
\endbibitem

\bibitem{Giles}
\begin{barticle}[author]
\bauthor{\bsnm{Giles},~\bfnm{M.~B.}\binits{M.~B.}}
(\byear{2008}).
\btitle{Multilevel Monte Carlo path simulation}.
\bjournal{Operations Research}
\bvolume{56}
\bpages{607-617}.
\end{barticle}
\endbibitem

\bibitem{henry2015exact}
\begin{barticle}[author]
\bauthor{\bsnm{Henry-Labordere},~\bfnm{Pierre}\binits{P.}},
  \bauthor{\bsnm{Tan},~\bfnm{Xiaolu}\binits{X.}} \AND
  \bauthor{\bsnm{Touzi},~\bfnm{Nizar}\binits{N.}}
\btitle{Exact simulation of multi-dimensional stochastic differential
  equations}.
\bjournal{Annals of Applied Probability, to appear}.
\end{barticle}
\endbibitem

\bibitem{kong:liu:1994}
\begin{barticle}[author]
\bauthor{\bsnm{Kong},~\bfnm{A.}\binits{A.}},
  \bauthor{\bsnm{Liu},~\bfnm{J.~S.}\binits{J.~S.}} \AND
  \bauthor{\bsnm{Wong},~\bfnm{W.~H.}\binits{W.~H.}}
(\byear{{1994}}).
\btitle{Sequential imputations and {B}ayesian Missing Data Problems}.
\bjournal{J. Amer. Statist. Assoc.}
\bvolume{{89}}
\bpages{{278-288}}.
\bdoi{{10.2307/2291224}}
\end{barticle}
\endbibitem

\bibitem{liu:chen:1998}
\begin{barticle}[author]
\bauthor{\bsnm{Liu},~\bfnm{Jun~S.}\binits{J.~S.}} \AND
  \bauthor{\bsnm{Chen},~\bfnm{Rong}\binits{R.}}
(\byear{1998}).
\btitle{Sequential {M}onte {C}arlo methods for dynamic systems}.
\bjournal{J. Amer. Statist. Assoc.}
\bvolume{93}
\bpages{1032--1044}.
\bdoi{10.2307/2669847}.
\bmrnumber{1649198 (99e:62069)}
\end{barticle}
\endbibitem

\bibitem{oksendal}
\begin{bbook}[author]
\bauthor{\bsnm{{\O}ksendal},~\bfnm{Bernt}\binits{B.}}
(\byear{2003}).
\btitle{Stochastic differential equations. An introduction with applications},
\bedition{6th} ed.
\bseries{Universitext}.
\bpublisher{Springer-Verlag}, \baddress{Berlin}.
\bmrnumber{MR2001996 (2004e:60102)}
\end{bbook}
\endbibitem

\bibitem{papa:robe:2008}
\begin{barticle}[author]
\bauthor{\bsnm{Papaspiliopoulos},~\bfnm{Omiros}\binits{O.}} \AND
  \bauthor{\bsnm{Roberts},~\bfnm{Gareth~O.}\binits{G.~O.}}
(\byear{2008}).
\btitle{{Retrospective Markov chain Monte Carlo methods for Dirichlet process
  hierarchical models}}.
\bjournal{Biometrika}
\bvolume{95}
\bpages{169--186}.
\bdoi{10.1093/biomet/asm086}
\end{barticle}
\endbibitem

\bibitem{pollock2016scalable}
\begin{barticle}[author]
\bauthor{\bsnm{Pollock},~\bfnm{Murray}\binits{M.}},
  \bauthor{\bsnm{Fearnhead},~\bfnm{Paul}\binits{P.}},
  \bauthor{\bsnm{Johansen},~\bfnm{Adam~M}\binits{A.~M.}} \AND
  \bauthor{\bsnm{Roberts},~\bfnm{Gareth~O}\binits{G.~O.}}
(\byear{2016}).
\btitle{{The Scalable Langevin Exact Algorithm: Bayesian inference for big
  data}}.
\bjournal{arXiv:1609.03436}.
\end{barticle}
\endbibitem

\bibitem{doucet:score}
\begin{barticle}[author]
\bauthor{\bsnm{Poyiadjis},~\bfnm{George}\binits{G.}},
  \bauthor{\bsnm{Doucet},~\bfnm{Arnaud}\binits{A.}} \AND
  \bauthor{\bsnm{Singh},~\bfnm{Sumeetpal~S.}\binits{S.~S.}}
(\byear{2011}).
\btitle{Particle approximations of the score and observed information matrix in
  state space models with application to parameter estimation}.
\bjournal{Biometrika}
\bvolume{98}
\bpages{65--80}.
\bmrnumber{2804210}
\end{barticle}
\endbibitem

\bibitem{MR987631}
\begin{bbook}[author]
\bauthor{\bsnm{Risken},~\bfnm{H.}\binits{H.}}
(\byear{1989}).
\btitle{The {F}okker-{P}lanck equation},
\bedition{second} ed.
\bseries{Springer Series in Synergetics}
\bvolume{18}.
\bpublisher{Springer-Verlag, Berlin}
\bnote{Methods of solution and applications}.
\bdoi{10.1007/978-3-642-61544-3}.
\bmrnumber{987631}
\end{bbook}
\endbibitem

\bibitem{MR1680267}
\begin{bbook}[author]
\bauthor{\bsnm{Rolski},~\bfnm{Tomasz}\binits{T.}},
  \bauthor{\bsnm{Schmidli},~\bfnm{Hanspeter}\binits{H.}},
  \bauthor{\bsnm{Schmidt},~\bfnm{Volker}\binits{V.}} \AND
  \bauthor{\bsnm{Teugels},~\bfnm{Jozef}\binits{J.}}
(\byear{1999}).
\btitle{Stochastic processes for insurance and finance}.
\bseries{Wiley Series in Probability and Statistics}.
\bpublisher{John Wiley \& Sons, Ltd., Chichester}.
\bdoi{10.1002/9780470317044}.
\bmrnumber{1680267}
\end{bbook}
\endbibitem

\bibitem{stramer:yan:asympt}
\begin{barticle}[author]
\bauthor{\bsnm{Stramer},~\bfnm{Osnat}\binits{O.}} \AND
  \bauthor{\bsnm{Yan},~\bfnm{Jun}\binits{J.}}
(\byear{2007}).
\btitle{Asymptotics of an efficient {M}onte {C}arlo estimation for the
  transition density of diffusion processes}.
\bjournal{Methodology and Computing in Applied Probability}
\bvolume{9}
\bpages{483--496}.
\bmrnumber{MR2404739 (2009g:60112)}
\end{barticle}
\endbibitem

\bibitem{van1992stochastic}
\begin{bbook}[author]
\bauthor{\bsnm{Van~Kampen},~\bfnm{Nicolaas~Godfried}\binits{N.~G.}}
(\byear{1992}).
\btitle{Stochastic processes in physics and chemistry}
\bvolume{1}.
\bpublisher{Elsevier}.
\end{bbook}
\endbibitem

\bibitem{wagner:1989}
\begin{barticle}[author]
\bauthor{\bsnm{Wagner},~\bfnm{Wolfgang}\binits{W.}}
(\byear{1989}).
\btitle{Unbiased {M}onte {C}arlo estimators for functionals of weak solutions
  of stochastic differential equations}.
\bjournal{Stochastics and Stochastics Reports}
\bvolume{28}
\bpages{1--20}.
\bmrnumber{MR1013319 (91a:60160)}
\end{barticle}
\endbibitem

\bibitem{MR2222876}
\begin{bbook}[author]
\bauthor{\bsnm{Wilkinson},~\bfnm{Darren~James}\binits{D.~J.}}
(\byear{2006}).
\btitle{Stochastic modelling for systems biology}.
\bseries{Chapman \& Hall/CRC Mathematical and Computational Biology Series}.
\bpublisher{Chapman \& Hall/CRC, Boca Raton, FL}.
\bmrnumber{2222876}
\end{bbook}
\endbibitem

\end{thebibliography}


\end{document}